\newif\iffullversion\fullversiontrue
\renewcommand\footnotetextcopyrightpermission[1]{} %
\newcommand{\abs}[1]{\lvert#1\rvert}%
\newcommand{\RR}{\mathbb{R}}
\lstdefinelanguage{z3}{
  keywords={declare, const, assert, Int, Real, and, or, forall, minimize, maximize, check, sat, get, model},
}
\lstdefinelanguage{WHILE}{
  keywords={Lap, while, if, then, else, cons, havoc},
}
\let\origthelstnumber\thelstnumber
\newcommand*\Suppressnumber{%
  \lst@AddToHook{OnNewLine}{%
    \let\thelstnumber\relax%
     \advance\c@lstnumber-\@ne\relax%
    }%
}
\def\thickhline{%
  \noalign{\ifnum0=`}\fi\hrule \@height \thickarrayrulewidth \futurelet
   \reserved@a\@xthickhline}
\def\@xthickhline{\ifx\reserved@a\thickhline
               \vskip\doublerulesep
               \vskip-\thickarrayrulewidth
             \fi
      \ifnum0=`{\fi}}
\newcommand*\Reactivatenumber{%
  \lst@AddToHook{OnNewLine}{%
   \let\thelstnumber\origthelstnumber%
   \advance\c@lstnumber\@ne\relax}%
}
\newcommand\laplace{\ensuremath{\cod{Lap}}\xspace}
\newcommand\lapm[1]{\laplace~{#1}}
\newcommand\op{\cod{op}}
\newcommand\tylist{\cod{list}}
\newcommand\kout{\cod{return}}
\newcommand\restrict[3]{{#1}\upharpoonright^{#3}_{#2}}
\newcommand\outcmd[1]{\kout~{#1}}
\newcommand\havoc[1]{\cod{havoc}~{#1}}
\newcommand\annotation[1]{\colorbox{lightgray}{\ensuremath{#1}}}
\newcommand\tT{{\tilde T}}
\newcommand\pre{{\Psi}}
\newcommand\store{{m}}
\newcommand\aligndstore[2]{{#1^\alignd}{#2}}
\newcommand\shadowstore[2]{{#1^\shadow}{#2}}
\newcommand\Store{\mathcal{M}}
\newcommand\distance[1]{\widehat{#1}}
\DeclarePairedDelimiter\set\{\}
\newcommand\amem[2]{\first{#1}{#2}}    %
\newcommand\smem[2]{\second{#1}{#2}}   %
\newcommand\cmdfull[3]{{#1}\set{#3}{#2}}
\newcommand\vpriv[1]{{\mathbf{v}_{{#1}}}}
\newcommand\evalexpr[2]{\trans{#1}_{#2}}
\newcommand\real{{r}}
\newcommand\bexpr{{\mathbb{b}}}
\newcommand\dexpr{{\mathbb{d}}}
\newcommand\nexpr{{\mathbb{n}}}
\newcommand\transform{\rightharpoonup}
\newcommand\priv{\ensuremath{\epsilon}}
\newcommand\dist{\mathbf{Dist}}
\newcommand\extmemfull[3]{{#1}\uplus({#2})} %
\newcommand\support{\cod{support}}
\newcommand\unitop{\cod{unit}}
\newcommand\bindop{\cod{bind}}
\newcommand\lang{ShadowDP\xspace}
\newcommand\basety{\mathcal{B}}
\newcommand\dgdist{\ensuremath{\mathds{1}}\xspace}
\newcommand\funsigfour[4]{
\makebox[1.3cm]{\textbf{function}} \textsc{#1}~(#2) \par
\makebox[1.3cm]{}\makebox[1.2cm]{\textbf{returns}}{#3}\par
\makebox[2.0cm]{\textbf{precondition}}{#4}\par}
\newif\ifblinded\blindedtrue
\newif\ifreport\reportfalse
\newcommand\appendixref{Appendix}
\newcommand\appendixref{full version of this paper~\cite{shadowdpreport}}
\keywords{Differential privacy; dependent types}
\title{Proving Differential Privacy with Shadow Execution}
\author{Yuxin Wang}
\affiliation{%
  \institution{Pennsylvania State University}
  \city{University Park}
  \state{PA}
  \postcode{16802}
  \country{USA}
}
\email{yxwang@psu.edu}
\author{Zeyu Ding}
\affiliation{%
  \institution{Pennsylvania State University}
  \city{University Park}
  \state{PA}
  \postcode{16802}
  \country{USA}
}
\email{zyding@psu.edu}
\author{Guanhong Wang}
\affiliation{%
  \institution{Pennsylvania State University}
  \city{University Park}
  \state{PA}
  \postcode{16802}
  \country{USA}
}
\email{gpw5092@psu.edu}
\author{Daniel Kifer}
\affiliation{%
  \institution{Pennsylvania State University}
  \city{University Park}
  \state{PA}
  \postcode{16802}
  \country{USA}
}
\email{dkifer@cse.psu.edu}
\author{Danfeng Zhang}
\affiliation{%
  \institution{Pennsylvania State University}
  \city{University Park}
  \state{PA}
  \postcode{16802}
  \country{USA}
}
\email{zhang@cse.psu.edu}
\begin{document}

\setlist[enumerate]{leftmargin=0.5cm}%
\setlist[itemize]{leftmargin=0.5cm}%

\begin{abstract}
Recent work on formal verification of differential privacy shows a trend toward usability and expressiveness -- generating a correctness proof of sophisticated algorithm while minimizing the annotation burden on programmers. Sometimes, combining those two requires substantial changes to program logics: one recent paper is able to verify Report Noisy Max automatically, but it involves a complex verification system using customized program logics and verifiers.

In this paper, we propose a new proof technique, called shadow execution, and embed it into a language called ShadowDP. ShadowDP uses shadow execution to generate proofs of differential privacy with very few programmer annotations and without relying on customized logics and verifiers. In addition to verifying Report Noisy Max, we show that it can verify a new variant of Sparse Vector that reports the gap between some noisy query answers and the noisy threshold. Moreover, ShadowDP reduces the complexity of verification: for all of the algorithms we have evaluated, type checking and verification in total takes at most 3 seconds, while prior work takes minutes on the same algorithms.
\end{abstract}

\maketitle

\section{Introduction}\label{sec:intro}
Differential privacy is increasingly being used in industry
\cite{rappor,appledpscale,elasticsensitivity} and government agencies
\cite{abowd} to provide statistical information about groups of people without
violating their privacy. Due to the prevalence of errors in published
algorithms and code \cite{ninghuisparse}, formal verification of differential
privacy is critical to its success. 

The initial line of work on formal verification for differential privacy (e.g.,
\cite{Barthe15,BartheICALP2013,Barthe12,Barthe14,Barthe16}) was concerned with
increasing expressiveness. A parallel line of work (e.g.,
\cite{pinq,Roy10Airavat,gupt,ectelo}) focuses more on usability -- on
developing platforms that keep track of the privacy cost of an algorithm while
limiting the types of algorithms that users can produce.

A recent line of work (most notably LightDP~\cite{lightdp} and Synthesizing
Coupling Proofs~\cite{Aws:synthesis}) has sought to combine expressiveness and
usability by providing verification tools that infer most (if not all) of the
proof of privacy. The benchmark algorithms for this task were Sparse Vector
\cite{ninghuisparse,diffpbook} and Report Noisy Max \cite{diffpbook}. LightDP
\cite{lightdp} was the first system that could verify Sparse Vector with very
few annotations, but it could not verify tight privacy bounds on
Report Noisy Max \cite{diffpbook}. It is believed that proofs using
\emph{randomness alignment}, the proof technique that underpins LightDP, are
often simpler, while \emph{approximate coupling}, the proof technique that
underpins \cite{Barthe15,BartheICALP2013,Barthe12,Barthe14,Barthe16}, seems to
be more expressive \cite{Aws:synthesis}.  Subsequently, Albarghouthi and Hsu
\cite{Aws:synthesis} produced the first fully automated system that verifies
both Sparse Vector and Report Noisy Max. Although this new system takes inspiration from randomness alignment to
simplify approximate coupling proofs, its verification system
still involves challenging features such as first-order Horn clauses and
probabilistic constraints; it takes minutes to  verify simple algorithms. The
complex verification system also prevents it from reusing off-the-shelf
verification tools.

In this paper, we present ShadowDP, a language for verifying differentially
private algorithms. It is based on a new proof technique called ``shadow
execution'', which enables language-based proofs based on standard program
logics. Built on randomness alignment, it transforms a probabilistic program
into a program in which the privacy cost is explicit; so that the target
program can be readily verified by off-the-shelf verification tools.  However,
unlike LightDP, it can verify more challenging algorithms such as
Report Noisy Max and a novel variant of Sparse Vector called Difference Sparse
Vector. We show that with minimum annotations, challenging
algorithms that took minutes to verify by~\cite{Aws:synthesis} (excluding proof
synthesis time) now can be verified within 3 seconds with an off-the-shelf
model checker. 

One extra benefit of this approach based on randomness alignment is that the
transformed program can also be analyzed by standard symbolic executors. This
appears to be an important property in light of recent work on detecting
counterexamples for buggy programs
\cite{Ding2018CCS,Bichsel2018CCS,diffproptest,FarinaRelational}. Producing a
transformed program that can be used for verification of correct programs and
bug-finding for incorrect programs is one aspect that is of independent
interest (however, we leave this application of transformed programs to future
work).

In summary, this paper makes the following contributions:
\begin{enumerate}
\item Shadow execution, a new proof technique for differential privacy
(Section~\ref{sec:overview}).

\item \lang, a new imperative language (Section~\ref{sec:langzero}) with a
flow-sensitive type system (Section~\ref{sec:typing}) for verifying sophisticated privacy-preserving
algorithms.

\item A formal proof that the verification of the transformed program by \lang
implies that the source code is $\priv$-differentially private
(Section~\ref{sec:soundness}).

\item Case studies on sophisticated  algorithms showing that verifying
privacy-preserving algorithms using \lang requires little programmer annotation
burden and verification time (Section~\ref{sec:impexp}).

\item Verification of a variant of Sparse Vector Technique that releases the
difference between noisy query answers and a noisy threshold at the same
privacy level as the original algorithm
\cite{ninghuisparse,diffpbook}. To the best of our knowledge, this variant has
not been studied before.
\end{enumerate}

\section{Preliminaries and Illustrating Example}\label{sec:preliminaries}\label{sec:illustrative}
\subsection{Differential Privacy}\label{sec:differential_privacy}

Differential privacy is now considered a gold standard in privacy protections after recent high profile adoptions \cite{rappor,appledpscale,elasticsensitivity,abowd}.
There are currently several popular variants of differential privacy \cite{dwork06Calibrating,dworkKMM06:ourdata,BS2016:zcdp,M2017:Renyi}. In this paper, we focus
on the verification of algorithms that satisfy pure differential
privacy~\cite{dwork06Calibrating}, which has several key advantages -- it is the strongest one among them, the 
most popular one, and the easiest to explain to non-technical end-users \cite{dpprimer}.

Differential privacy requires an algorithm to inject carefully calibrated
random noise during its computation. The purpose of the noise is to hide the effect of any person's record on the output
of the algorithm.
In order to present the formal definition, we first define
the set of \emph{sub-distributions} over a discrete set $A$, written
$\dist(A)$, as the set of functions $\mu: A \rightarrow [0,1]$, such that
$\sum_{a\in A} \mu(a)\leq 1$. When applied to an event $E\subseteq A$, we
define $\mu(E)\defn \sum_{e\in E} \mu(e)$.~\footnote{As is standard in this
line of work (e.g.,~\cite{Barthe16,lightdp}), we assume a sub-distribution
instead of a distribution, since sub-distribution gives rise to an elegant
program semantics in face of non-terminating programs~\cite{Kozen81}.}

Differential privacy relies on the notion of adjacent databases (e.g., pairs of databases that differ on one record). Since
differentially-private algorithms sometimes operate on query results from
databases, we abstract adjacent databases as an adjacency relation
$\pre\subseteq A\times A$ on input query answers. For differential privacy, the
most commonly used relations are: 
\begin{inparaenum}[(1)]
\item each query answer may differ by at most $n$ (for some number $n$), and
\item at most one query answer may differ, and that query answer differs by at
most $n$. This is also known as \emph{sensitivity} of the queries.
\end{inparaenum}

\begin{definition}[Pure Differential privacy]
\label{def:diffpriv}
Let $\priv\geq 0$. A probabilistic computation $M: A \rightarrow B$ is
$\priv$-differentially private with respect to an adjacency relation
$\pre$ if for every pair of inputs $a_1,a_2\in A$ such that $a_1 \pre a_2$, and
every output subset $E\subseteq B$, 
\vspace{-1ex}
\[P(M (a_1)\in E)\leq e^\priv P(M (a_2)\in E).\]
\vspace{-4.5ex}
\end{definition}

\vspace{-2ex}
\subsection{Randomness Alignment}
\label{sec:randomness_alignment}
\emph{Randomness Alignment}~\cite{lightdp} is a simple yet powerful technique to prove differential privacy. Here, we illustrate the key idea
with a fundamental mechanism for satisfying differential privacy--the
\emph{Laplace Mechanism}~\cite{exponentialMechanism}. 

Following the notations in Section~\ref{sec:differential_privacy}, we consider an arbitrary pair of query answers $a_1$ and $a_2$ that differ by at most 1, i.e., $-1\le a_1 - a_2 = c \le 1$. The Laplace Mechanism (denoted as $M$) simply releases $a+\eta$, where
$\eta$ is a random noise sampled from the Laplace distribution of mean 0 and
scale $1/\epsilon$; we use $p_{1/\epsilon}$ to denote its density function.
The goal of randomness alignment is to ``align'' the random noise in two
executions $M(a_1)$ and $M(a_2)$, such that $M(a_1)=M(a_2)$, with a
corresponding privacy cost. To do so, we create an \emph{injective} function
$f:\RR\rightarrow \RR$ that maps $\eta$ to $\eta+c$. Obviously, $f$ is an
alignment since $a_1+\eta=a_2+f(\eta)$ for any $a_1$, $a_2$.  Then for an
arbitrary set of outputs $E\subseteq \RR$, we have:
\vspace{-1ex}
\begin{align*}
    P(M(a_1)\in E) &=  \sum_{\eta \mid a_1 + \eta \in E} p_{1/\epsilon} (\eta)
                   \leq  \sum_{\eta \mid a_2+f(\eta) \in E} p_{1/\epsilon} (\eta)\\
                   &\leq e^{\epsilon} \sum_{\eta \mid a_2+f(\eta) \in E} p_{1/\epsilon} (f(\eta)) \\
                      &= e^{\epsilon} \sum_{\eta \mid a_2+\eta \in E} p_{1/\epsilon} (\eta) = e^\epsilon P(M(a_2)\in E)
\end{align*}
The first inequality is by the definition of $f$: $a_1+\eta \in E \implies a_2+f(\eta) \in E$. The $e^\epsilon$ factor results from the fact that $p_{1/\epsilon} (\eta+c) \big/ p_{1/\epsilon} (\eta) \leq e^{\abs{c}\cdot \epsilon} \leq e^\epsilon$, when the Laplace distribution has scale $1/\epsilon$. The second to last equality is by change of variable from $f(\eta)$ to $\eta$ in the summation, using the injectivity of $f$.

In general, let $H\in\RR^n$ be the random noise vector that a mechanism $M$
uses. A randomness alignment for $a_1\Psi a_2$ is a function $f:\RR^n
\rightarrow \RR^n$ such that: \begin{enumerate}[leftmargin=5mm]
    \item $M(a_2)$ with noise $f(H)$ outputs the same result as $M(a_1)$ with noise $H$  (hence the name Randomness Alignment). 
    \item $f$ is injective (this is to allow change of variables).
\end{enumerate}

\subsection{The Report Noisy Max Algorithm}

To illustrate the challenges in proving differential privacy, we consider the
Report Noisy Max algorithm~\cite{diffpbook}, whose source code is shown on the top of
Figure~\ref{alg:transformed_noisymax}. It can be used as a building block in
algorithms that iteratively generate differentially private synthetic data by finding (with high probability) 
the identity of the query for which the synthetic data currently has the largest error~\cite{mwem}.

\begin{figure}
\small
\setstretch{0.9}
\raggedright
\noindent\rule{\linewidth}{2\arrayrulewidth}
\funsigfour{NoisyMax}{$\priv$, $ \cod{size}\annotation{:\tyreal_{\pair 0 0}}$; $\cod{q}\annotation{:\tylist~\tyreal_{\pair * *}}$}{$\cod{max}\annotation{:\tyreal_{\pair 0 *}}$}{$\alldiffer$}
\algrule
\begin{lstlisting}[frame=none, escapechar=@]
i := 0; bq := 0; max := 0;
while (i < size)
   $\eta\text{ := }\lapm(2/\priv)\sampleannotation{\Omega\mathbin{?}\shadow:\alignd}{\Omega\mathbin{?}2:0}$;
   if (q[i] + $\eta$ > bq $\lor$ i = 0)@\label{line:noisymax_branch}@
     max := i;@\label{line:noisymax_true_branch}@
     bq := q[i] + $\eta$;@\label{line:noisymax_true_branch_2}@
   i := i + 1;
\end{lstlisting}
The transformed program (slightly simplified for readability), where underlined commands are added by the type system:
\begin{lstlisting}[frame=none]
$\instrument{\vpriv{\epsilon}\text{ := 0;}}$ /*\quad\first{\distance{\eta}}\text{ := 0;}}*/
i := 0; bq := 0; max := 0;
$\instrument{\first{\distance{\cod{bq}}}\text{ := 0;}\quad\second{\distance{\cod{bq}}}\text{ := 0;}}$ /*\quad\second{\distance{\cod{max}}}\text{ := 0;}*/
while (i < size)
   $\instrument{\assertcmd(\text{i < size})\text{;}}$
   $\instrument{\havoc{\eta}\text{; }\vpriv{\epsilon}\text{ := }\Omega\mathbin{?}(\text{0 + } \priv):(\vpriv{\epsilon}\text{ + 0})\text{;}}$ 
   if (q[i] + $\eta$ > bq $\lor$ i = 0) 
     $\instrument{\assertcmd(\text{q[i]} + \first{\distance{\text{q}}}\text{[i]}+\eta +2>\text{bq} + \second{\distance{\text{bq}}}\mathbin{\lor}\text{i = 0});}$/*\instrument{\second{\distance{\text{max}}}\text{ := max + }\second{\distance{\text{max}}}\text{ - i;}}*/
     max := i;
     $\instrument{\second{\distance{\text{bq}}}\text{ := bq + }\second{\distance{\text{bq}}}\text{ - (q[i] + }\eta\text{);}}$
     bq := q[i] + $\eta$;
     $\instrument{\first{\distance{\text{bq}}}\text{ := }\first{\distance{\text{q}}}\text{[i]} + 2\text{;}}$ /*\quad\first{\distance{\eta}}\text{ := 2;}*/
   $\instrument{\cod{\textbf{else}}}$
     $\instrument{\assertcmd(\lnot (\text{q[i]}+\first{\distance{\text{q}}}\text{[i]}+\eta+0 > \text{bq} + \first{\distance{\text{bq}}}\mathbin{\lor}\text{i = 0}));}$/*\instrument{\first{\distance{\text{bq}}}\text{ := }\first{\distance{\text{bq}}}\text{;}}\quad\first{\distance{\eta}}\text{ := 0;}*/
   // shadow execution
   $\instrument{\cod{\textbf{if}}~(\text{q[i]} + \second{\distance{\text{q}}}\text{[i]} + \eta> \text{bq} + \second{\distance{\text{bq}}}\mathbin{\lor}\text{i = 0})}$/*\instrument{\second{\distance{\text{max}}}\text{ := i - max;}}*/
     $\instrument{\second{\distance{\text{bq}}}\text{ := q[i] + }\second{\distance{\text{q}}}\text{[i]} + \eta - \text{bq}\text{;}}$
   i := i + 1;
\end{lstlisting}
\vspace{-2ex}
\noindent\rule{\linewidth}{2\arrayrulewidth}
\vspace{-4ex}
\caption{Verifying Report Noisy Max with \lang. Here, $q$ is a list of query answers
from a database, and max is the query index of the maximum query with Laplace
noise generated at line 3. To verify the algorithm on the top, a programmer
provides function specification as well as annotation for sampling command (annotations are shown in gray, where $\Omega$ represents the branch condition).
\lang checks the source code and generates the transformed
code (at the bottom), which can be verified with off-the-shelf verifiers.}
	\vspace{-2ex}
\label{alg:transformed_noisymax}
\end{figure}

The algorithm takes a list $q$ of query answers, each of which differs by at
most 1 if the underlying database is replaced with an adjacent one.  The goal
is to return the index of the largest query answer (as accurately as possible
subject to privacy constraints).

To achieve differential privacy, the algorithm adds appropriate Laplace noise
to each query. Here, $\lapm(2/\priv)$ draws one sample from the Laplace
distribution with mean zero and a scale factor $(2/\priv)$.  For privacy, the
algorithm uses the noisy query answer ($q[i]+\eta$) rather than the true query
answer ($q[i]$) to compute and return the \emph{index} of the maximum (noisy) query
answer. Note that the return value is listed right below the function signature in the source code.

\paragraph{Informal proof using randomness alignment}
Proofs of correctness of Report Noisy Max can be found in~\cite{diffpbook}.  We will
start with an informal correctness argument, based on the \emph{randomness
alignment} technique (Section~\ref{sec:randomness_alignment}), to illustrate
subtleties involved in the proof.

Consider the following two databases $D_1, D_2$ that differ on one record, and
their corresponding query answers:
\begin{align*}
D_1:\quad q[0]=1,\quad q[1]=2,\quad q[2]=2\\
D_2:\quad q[0]=2,\quad q[1]=1,\quad q[2]=2
\end{align*}
Suppose in one execution on $D_1$, the noise added to $q[0]$, $q[1]$, $q[2]$ is
$\alpha^{(1)}_0=1$, $\alpha^{(1)}_1=2$, $\alpha^{(1)}_2=1$, respectively. In
this case, the noisy query answers are $q[0]+\alpha^{(1)}_0=2$, 
$q[1]+\alpha^{(1)}_1=4$, $q[2]+\alpha^{(1)}_2=3$ and so the algorithm returns
$1$, which is the index of the maximum noise query answer of $4$.

\paragraph{Aligning randomness}
\label{sec:aligning_randomness}
As shown in Section~\ref{sec:randomness_alignment}, we need to create an
injective function of random bits in $D_1$ to random bits in
$D_2$ to make the output the same. Recall that
$\alpha^{(1)}_0,\alpha^{(1)}_1,\alpha^{(1)}_2$ are the noise added to $D_1$,
now let $\alpha^{(2)}_0,\alpha^{(2)}_1,\alpha^{(2)}_2$ be the noise added to
the queries $q[0],q[1],q[2]$ in $D_2$, respectively. Consider the following injective
function: for any query except for $q[1]$, use the
same noise as on $D_1$;
add 2 to the noise used for $q[1]$ (i.e., $\alpha^{(2)}_1 = \alpha^{(1)}_1 +2$).

In our running example, execution on $D_2$ with this alignment function would
result in noisy query answers $q[0]+\alpha^{(2)}_0=3$, $q[1]+\alpha^{(2)}_1=5$,
$q[2]+\alpha^{(2)}_2=3$. Hence, the output once again is $1$, the index of
query answer $5$. 

In fact, we can prove that under this alignment,
\emph{every execution on $D_1$ where 1 is returned} would result in an
execution on $D_2$ that produces the same answer due to two facts:

\begin{enumerate}[leftmargin=5mm]
\item On $D_1$, $q[1]+\alpha^{(1)}_1$ has the maximum value.

\item On $D_2$, $q[1]+\alpha^{(2)}_1$ is greater than $q[1]+\alpha^{(1)}_1+1$
on $D_1$ due to $\alpha^{(2)}_1=\alpha^{(1)}_1+2$ and the adjacency assumption.
\end{enumerate}

Hence, $q[1]+\alpha^{(2)}_1$ on $D_2$ is greater than $q[i]+\alpha^{(1)}_i+1$
on $D_1$ for any $i$. By the adjacency assumption, that is the same as
$q[1]+\alpha^{(2)}_1$ is greater than any $q[i]+\alpha^{(2)}_i$ on $D_2$.
Hence, based on the same argument in Section~\ref{sec:randomness_alignment}, we
can prove that the Report Noisy Max algorithm is $\epsilon$-private.

\paragraph{Challenges} 
Unfortunately, the alignment function above only applies to executions on $D_1$
where index $1$ is returned. If there is one more query $q[3]=4$ and the
execution gets noise $\alpha^{(1)}_3=1$ for that query, the execution on $D_1$
will return index 3 instead of 1. To align randomness on $D_2$, we need to
construct a different alignment function (following the construction above)
that adds noise in the following way: for any query except for $q[3]$, use the
same noise as on $D_1$; add 2 to the noise used for $q[3]$ (i.e.,
$\alpha^{(2)}_3 = \alpha^{(1)}_3 +2$). In other words, to carry out the proof,
the alignment for each query depends on the queries and noise yet to happen
\emph{in the future}.

One approach of tackling this challenge, followed by existing language-based
proofs of Report Noisy Max~\cite{Barthe16,Aws:synthesis}, is to use the pointwise
lifting argument: informally, if we can show that for any value $i$, execution
on $D_1$ returns value $i$ implies execution on $D_2$ returns value $i$ (with a
privacy cost bounded by $\epsilon$), then a program is $\epsilon$-differential
private. However, this argument does not apply to the randomness alignment
technique. Moreover, doing so requires a customized program logic for proving differential privacy.

\subsection{Approach Overview} 
\label{sec:overview}
In this paper, we propose a new proof technique ``shadow execution'',
which enables language-based proofs based on \emph{standard} program logics.
The key insight is to track a \emph{shadow execution} on $D_2$ where the
\emph{same noise} is always used as on $D_1$. For our running example, we illustrate
the shadow execution in Figure~\ref{fig:align}, with random noise
$\alpha^{(\shadow)}_0$, $\alpha^{(\shadow)}_1$ and so on. Note that the shadow
execution uses $\alpha^{(\shadow)}_i=\alpha^{(1)}_i$ for all $i$.

With the shadow execution, we can construct a randomness alignment for each
query $i$ as follows:
\begin{enumerate}[label=Case \arabic*:,leftmargin=*]
\item Whenever $q[i]+\alpha^{(1)}_i$ is the maximum value so far on $D_1$
(i.e., $max$ is updated), we use the alignments on \emph{shadow execution} for all
previous queries but a noise $\alpha^{(1)}_i+2$ for $q[i]$ on $D_2$. 

\item Whenever $q[i]+\alpha^{(1)}_i$ is smaller than or equal to any previous
noise query answer (i.e., $max$ is not updated), we keep the previous
alignments for previous queries and use noise $\alpha^{(1)}_i$ for $q[i]$ on
$D_2$.
\end{enumerate}

\begin{figure}
\centering
\includegraphics[width=.9\columnwidth]{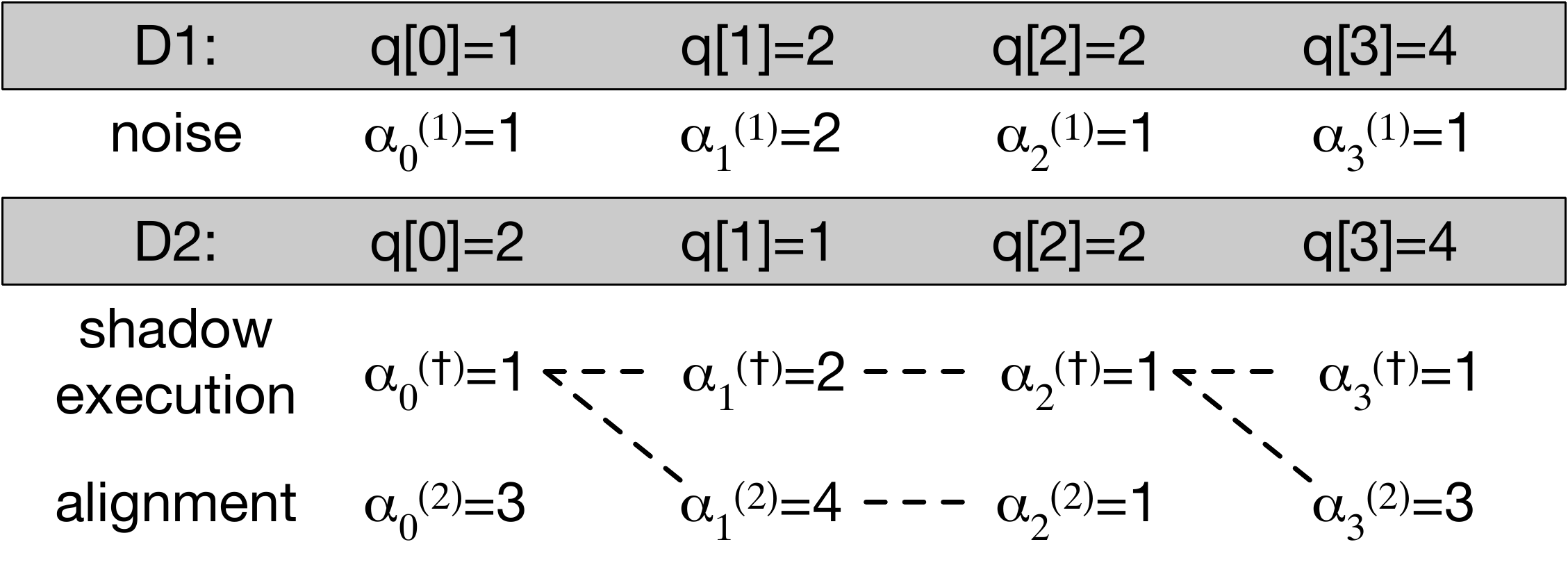}
\vspace{-2ex}
\caption{Selective alignment for Report Noisy Max}\label{fig:align}
\vspace{-2ex}
\end{figure}

We illustrate this construction in Figure~\ref{fig:align}. After seeing $q[1]$
on $D_1$ (Case 1), the construction uses noise in the shadow execution for
previous query answers, and uses $\alpha_1^{(1)}+2=4$ as the noise for $q[1]$ on
$D_2$.  After seeing $q[2]$ on $D_1$ (Case 2), the construction reuses
alignments constructed previously, and use $\alpha^{(1)}_2=1$ as the noise for
$q[2]$. When $q[3]$ comes, the previous alignment is abandoned; the shadow
execution is used for $q[0]$ to $q[2]$. It is easy to check that this
construction is correct for any subset of query answers seen so far, since the
resulting alignment is exactly the same as the informal proof above, when the
index of maximum value is known.

\paragraph{Randomness alignment with shadow execution}

To incorporate the informal argument above to a programming language, we
propose \lang. We
illustrate the key components of \lang in this section, as shown in
Figure~\ref{alg:transformed_noisymax}, and detail all components in the rest of
this paper.  

Similar to LightDP~\cite{lightdp}, \lang embeds randomness alignments into
types. In particular, each \emph{numerical variable} has a type in the form of
$\tyreal_{\pair {\first{\mathbb{n}}} {\second{\mathbb{n}}}}$, where
$\first{\mathbb{n}}$ and $\second{\mathbb{n}}$ represent the ``difference'' of
its value in the aligned and shadow execution respectively. In
Figure~\ref{alg:transformed_noisymax}, non-private variables, such as
$\epsilon, size$, are annotated with distance $0$. For private variables, the
difference could be a constant or an expression. For example, the type of $q$
along with the precondition specifies the adjacency relation: each query
answer's difference is specified by $*$, which is desugared to a special
variable $\first{\distance{q}}[i]$ (details discussed in
Section~\ref{sec:typing}).  The precondition in
Figure~\ref{alg:transformed_noisymax} specifies that the difference of each
query answer is bounded by 1 (i.e., query answers have sensitivity of 1).

\lang reasons about the aligned and shadow executions in isolation,
with the exception of sampling commands. A sampling command (e.g., line 3 in
Figure~\ref{alg:transformed_noisymax}) constructs the aligned execution by
either using values from the aligned execution so far (symbol $\alignd$), or
switching to values from the shadow execution (symbol $\shadow$). The
construction may depend on program state: in Figure~\ref{alg:transformed_noisymax}, we switch to shadow
values iff $q[i]+\eta$ is the max on $D_1$. A sampling command also specifies the
alignment for the generated random noise.

With function specification and annotations for sampling commands, the type
system of \lang automatically checks the source code. If successful, it
generates a non-probabilistic program (as shown at the bottom of
Figure~\ref{alg:transformed_noisymax}) with a distinguished variable
$\vpriv{\priv}$. The soundness of the type system ensures the following
property: if $\vpriv{\priv}$ is bounded by some constant $\priv$ in the
transformed program, then the original program being verified is
$\priv$-private. 

\paragraph{Benefits}

Compared with previous language-based proofs of Report Noisy
Max~\cite{Barthe16,Aws:synthesis} (both are based on the pointwise lifting
argument), \lang enjoys a unique benefit: the transformed code can be verified
based on \emph{standard} program semantics. Hence, the transformed
(non-probabilistic) program can be further analyzed by existing program
verifiers and other tools.  For example, the transformed program in
Figure~\ref{alg:transformed_noisymax} is verified with an off-the-shelf tool
CPAChecker\cite{beyer2011cpachecker} \emph{without any extra annotation} within seconds. Although not
explored in this paper, the transformed program can also be analyzed by
symbolic executors to identify counterexamples when the original program is
incorrect. We note that doing so will be more challenging in a customized
logic.

\section{\lang: Syntax and Semantics}\label{sec:langzero}
In this section, we present the syntax and semantics of \lang, a simple
imperative language for designing and verifying differentially private
algorithms.

\subsection{Syntax}
\label{sec:syntax}
The language syntax is given in Figure~\ref{fig:syntax}. Most parts of \lang is
standard; we introduce a few interesting features.

\begin{figure}
\setstretch{0.9}
$
\begin{array}{lccl}
\text{Reals} & r &\in &\mathbb{R} \\
\text{Normal Vars} & x  &\in &\NVars \\
\text{Random Vars} & \eta  &\in &\RVars \\
\text{Linear Ops} & \oplus &::= &+ \mid - \\
\text{Other Ops} & \otimes &::= &\times \mid / \\
\text{Comparators} & \odot &::= &< \mid > \mid = \mid \leq \mid \geq \\
\text{Bool Exprs} & \bexpr &::=\; &\true \mid \false \mid x \mid \neg \bexpr \mid \nexpr_1 \odot \nexpr_2 \\
\text{Num Exprs} & \nexpr &::=\; &\real \mid x \mid \eta  \mid \nexpr_1\oplus \nexpr_2 \mid \\ & & & \nexpr_1\otimes \nexpr_2 \mid \bexpr?\ \nexpr_1 : \nexpr_2 \\
\text{Expressions} & e &::=\; &\nexpr \mid \bexpr \mid e_1::e_2 \mid e_1[e_2] \\
\text{Commands} & c &::=\; &\skipcmd \mid x := e \mid \eta  := g \mid c_1;c_2 \mid  \\ & & & \outcmd{e} \mid \whilecmd{e}{(c)} \mid \\ & & & \ifcmd{e}{(c_1)}{(c_2)}\\
\text{Distances}& \dexpr &::= &\nexpr \mid *\\
\text{Types} & \tau &::=\; &\tyreal_{\pair{\first{\dexpr}}{\second{\dexpr}}} \mid \bool \mid \tylist~\tau \\ %
\text{Var Versions} & k &\in &\{\alignd, \shadow\} \\
\text{Selectors} & \select &::=\; &e\ ?\ \select_1:\select_2 \mid k \\
\text{Rand Exps}   & g &::=\; &\lapm{\real}\mathbin{,}\select\mathbin{,}\nexpr_{\eta} %
\end{array}
$
\vspace{-2ex}
\caption{\lang: language syntax.}
\vspace{-2ex}
\label{fig:syntax}
\end{figure}

\paragraph{Non-probabilistic variables and expressions} 
\lang supports real numbers, booleans as well as standard operations on them.
We use $\nexpr$ and $\bexpr$ to represent numeric and boolean expressions
respectively. A ternary numeric expression ${\bexpr\mathbin{?}\nexpr_1:\nexpr_2}$ evaluates
to $\nexpr_1$ when the comparison evaluates to true, and $\nexpr_2$ otherwise.
Moreover, to model multiple queries to a database and produce multiple outputs
during that process, \lang supports lists: $e_1::e_2$ appends the element $e_1$
to a list $e_2$; $e_1[e_2]$ gets the $e_2$-th element in list $e_1$, assuming
$e_2$ is bound by the length of $e_1$.

\paragraph{Random variables and expressions} 
To model probabilistic computation, which is essential in differentially
private algorithms, \lang uses random variable $\eta\in \RVars$ to
store a sample drawn from a distribution.  Random variables are similar to
normal variables ($x\in \NVars$) except that they are the only ones who can get
random values from random expressions, via a sampling command $\eta:=g$. 

We follow the modular design of LightDP~\cite{lightdp}, where randomness
expressions can be added easily. In this paper, we only consider the most
interesting random expression, $\lapm r$. Semantically, $\eta:=\lapm r$ draws
one sample from the Laplace distribution, with mean zero and a scale factor
$r$, and assigns it to $\eta$. For verification purpose, a
sampling command also requires a few annotations, which we explain shortly.

\paragraph{Types}
Types in \lang have the form of $\basety_{\pair
{\first{\dexpr}} {\second{\dexpr}}}$, where $\basety$ is the base type, and $\first\dexpr$,
$\second{\dexpr}$ represent the alignments for the execution on adjacent
database and shadow execution respectively. Base type is standard: it includes
$\tyreal$ (numeric type), $\bool$ (Boolean), or a list of elements with type
$\tau$ ($\tylist~\tau$). 

Distance $\dexpr$ is the key for randomness alignment proof.
Intuitively, it relates two program executions so that the likelihood of seeing
each is bounded by some constant. Since only numerical values have numeric
distances, other data types (including $\bool$, $\tylist~\tau$ and
$\tau_1\rightarrow \tau_2$) are always associated with $\pair 0 0$, hence
omitted in the syntax. Note that this does not rule out numeric distances in
nested types. For example, $(\tylist~\tyreal_{\pair 1 1})$ stores numbers that
differ by exactly one in both aligned and shadow executions.

Distance $\dexpr$ can either be a numeric expression ($\nexpr$) in the language
or $*$. A variable $x$ with type $\tyreal_{\pair * *}$
is desugared as $x:\Sigma_{(\pair
{\first{\distance{x}}:\tyreal_{\pair 0 0}}{\second{\distance{x}}:\tyreal_{\pair 0 0}})}~\tyreal_{\pair
{\first{\distance{x}}}{\second{\distance{x}}}}$, where $\first{\distance{x}}$,
$\second{\distance{x}}$ are distinguished variables invisible in the source
code; hiding those variables in a $\Sigma$-type simplifies the type system
(Section~\ref{sec:typing}).

The star type is useful for two reasons. First, it specifies the sensitivity of
query answers in a precise way. Consider the parameter $q$ in
Figure~\ref{alg:transformed_noisymax} with type $\tylist~\tyreal_{\pair *
*}$, along with the precondition $\forall i \ge 0.~-1\leq \first{\distance{\cod{q}}}\cod{[i]} \leq 1$. This notation makes the assumption of the Report Noisy Max
algorithm explicit: each query answer differs by at most 1. Second, star type
serves as the last resort when the distance of a variable cannot be tracked
precisely by a static type system. For example, whenever \lang merges two
different distances (e.g., $3$ and $4$) of $x$ from two branches, the result
distance is $*$; the type system instruments the source code to maintain
the correct values of $\first{\distance{x}},\second{\distance{x}}$ (Section~\ref{sec:typing}).

\paragraph{Sampling with selectors}
Each sampling instruction is attached with a few annotations for proving
differential privacy, in the form of $(\eta :=
\lapm{\real},\select,\nexpr_{\eta})$.  Note that just like types, the
annotations $\select$, $\nexpr_{\eta}$ have no effects on the program
semantics; they only show up in verification. Intuitively, a selector $\select$
picks a version ($k\in \{\alignd, \shadow\}$) for all program variables
(including the previously sampled variables) at the sampling instruction, as
well as constructs randomness alignment for $\eta$, specified by
$\nexpr_{\eta}$ (note that the distance cannot be $*$ by syntactical restriction
here). By definition, both $\select$ and $\nexpr_{\eta}$ may depend on the
program state when the sampling happens. 

Return to the running example in Figure~\ref{alg:transformed_noisymax}. As
illustrated in Figure~\ref{fig:align}, the selective alignment is to 
\begin{itemize}
\item use shadow variables and align the new sample by $2$ whenever
a new max is encountered,

\item use aligned variables and the same sample otherwise.
\end{itemize}

Hence, the sampling command in Figure~\ref{alg:transformed_noisymax} is
annotated as $(\eta := \lapm{(2/\epsilon)},\Omega\mathbin{?}\shadow:\alignd,\Omega\mathbin{?}2:0)$, where $\Omega$
is $\cod{q[i]+}\eta>\cod{bq}\lor \cod{i=0}$, the condition when a new max is found.

\subsection{Semantics}
\label{sec:semantics}

As standard, the denotational semantics of the probabilistic language is defined as a
mapping from initial memory to a distribution on (possible) final outputs.
Formally, let $\Store$ be a set of memory states where each $\store \in \Store$
maps all (normal and random) variables ($\NVars \cup \RVars$) to
their values.

The semantics of an expression $e$ of base type $\basety$ is interpreted as a
function $\trans{e}: \Store \rightarrow  \trans{\basety}$, where
$\trans{\basety}$ represents the set of values belonging to the base type
$\basety$. We omit expression semantics since it is standard.
A random expression $g$ is interpreted as a distribution on real values. Hence,
$\trans{g}:\dist(\trans{\tyreal})$. Moreover, a command $c$ is interpreted as a
function $\trans{c}:\Store \rightarrow  \dist(\Store)$. For brevity, we write
$\evalexpr{e}{\store}$ and $\evalexpr{c}{\store}$ instead of
$\trans{e}(\store)$ and $\trans{c}(\store)$ hereafter.
Finally, all programs have the form $(c;\outcmd{e})$ where $c$ contains no 
return statement. A program is interpreted as a function $\store
\rightarrow  \dist{\trans{\basety}}$ where $\basety$ is the return type (of
$e$).

The semantics of commands is available in the \appendixref; the semantics
directly follows a standard semantics in \cite{Kozen81}.

\section{\lang: Type System}
\label{sec:typing}

\lang is equipped with a flow-sensitive type system. If successful, it
generates a transformed program with needed assertions to make the original
program differentially private. The transformed program is simple enough to be
verified by off-the-shelf program verifiers.

\subsection{Notations}
\label{sec:notations}
\begin{figure*}
\raggedright
\setstretch{0.9}
\framebox{\textbf{Typing rules for expressions}}
\vspace{-1em}
\begin{mathpar}
\inferrule*[right=(T-Num)]{ }{ \Gamma  \proves \real : \tyreal_{\pair 0 0}}
\quad
\inferrule*[right=(T-Boolean)]{ }{ \Gamma \proves b : \bool}
\quad
\inferrule*[right=(T-Var)]{ \Gamma(x) = \basety_{\pair{\first{\dexpr}}{\second{\dexpr}}} 
\quad
\third{\nexpr} = 
\begin{cases}
\third{\distance{x}} & \text{if } \third{\dexpr} = * \cr
\third{\dexpr} & \text{otherwise}
\end{cases}
\quad
\star\in\set{\circ, \dagger}
}{\Gamma \proves x : \basety_{\pair{\first{\nexpr}}{\second{\nexpr}}}}
\and
\inferrule*[right=(T-OPlus)]{\Gamma \proves e_1: \tyreal_{\pair {\nexpr_1} {\nexpr_2} } \quad \Gamma\proves e_2 : \tyreal_{\pair {\nexpr_3} {\nexpr_4} }}{ \Gamma \proves e_1\oplus e_2 : \tyreal_{\pair {\nexpr_1 \oplus \nexpr_3} {\nexpr_2 \oplus \nexpr_4 } }}
\and
\inferrule*[right=(T-OTimes)]{\Gamma\proves e_1: \tyreal_{\pair 0 0} \quad \Gamma \proves e_2 : \tyreal_{\pair 0 0}}{\Gamma\proves e_1\otimes e_2 : \tyreal_{\pair 0 0}}
\and
\inferrule*[right=(T-ODot)]{\inferrule*{}{\Gamma \proves e_1: \tyreal_{\pair {\nexpr_1} {\nexpr_2} } \\\\ \Gamma \proves e_2 : \tyreal_{\pair {\nexpr_3} {\nexpr_4} }} \qquad
\inferrule*{}{\pre \Rightarrow 
 (e_1\odot e_2 \Leftrightarrow (e_1\!+\!\nexpr_1)\odot (e_2\!+\!\nexpr_3)) \\\\\hspace{4ex} \land
(e_1\odot e_2 \Leftrightarrow (e_1\!+\!\nexpr_2)\odot (e_2\!+\!\nexpr_4))
}} { \Gamma \proves e_1\odot e_2 : \bool}
\and
\inferrule*[right=(T-Ternary)]{\Gamma \proves e_1:\bool \quad \Gamma \proves e_2:\tau \quad \Gamma \proves e_3:\tau}{\Gamma \proves e_1\mathbin{?}e_2:e_3:\tau}
\and
\inferrule*[right=(T-Neg)]{\Gamma \proves e: \bool}{ \Gamma \proves \neg e : \bool}
\and
\inferrule*[right=(T-Cons)]{\Gamma \proves e_1:\tau \quad \Gamma\proves e_2:\tylist~\tau} {\Gamma\proves e_1::e_2 : \tylist~\tau}
\and
\inferrule*[right=(T-Index)]{\Gamma \proves e_1:\tylist~\tau\quad \Gamma\proves e_2:\tyreal_{\pair 0 0}}{\Gamma \proves e_1[e_2]:\tau}
\end{mathpar}
\vspace{0.5em}
\framebox{\textbf{Typing rules for commands}}
\vspace{-1em}
\begin{mathpar}
\inferrule*[right=(T-Skip)]{ }{\pc \proves \Gamma~\{\skipcmd  \transform \skipcmd\}~\Gamma}
\quad
\inferrule*[right=(T-Asgn)]{ 
\Gamma \proves e: \basety_{\pair {\first{\nexpr}}{\second{\nexpr}}} \quad
\configtwo {\Gamma'} {\second{c}}=
\begin{cases}
\configtwo {\Gamma[x\mapsto \basety_{\pair{\first{\nexpr}}{\second{\nexpr}}}]} \skipcmd, \ \text{ if } \pc=\bot \cr 
\annotation{
\configtwo {\Gamma[x\mapsto \basety_{\pair{\first{\nexpr}}{*}}]} 
{\second{\distance{x}} := x + \second{\nexpr} - e}, \ \text{ else } }
\end{cases}
}
{\pc \proves \Gamma~\{x := e \transform \second{c}; x := e\}~\Gamma' }
\and
\inferrule*[right=(T-Seq)]{
\inferrule{\pc \proves \Gamma~\{c_1 \transform c_1'\}~\Gamma_1\\
\pc \proves \Gamma_1~\{c_2 \transform c_2'\}~\Gamma_2}{}
}{\pc \proves \Gamma~\{c_1;c_2 \transform c_1';c_2'\}~\Gamma_2}
\and
\inferrule*[right=(T-Return)]{\Gamma \proves e:\tyreal_{\pair 0 \dexpr } {\text\quad or\quad } \Gamma \proves e:\bool}
{\pc \proves \Gamma~\{\outcmd{e} \transform %
\outcmd{e}\}~\Gamma}
\and
\inferrule*[right=(T-If)]{
\inferrule{
\pc' = \cod{updPC}(\pc,\Gamma,e) \\\\
\pc' \proves \Gamma~\{c_i \transform c_i'\}~\Gamma_i
}
{}
\quad 
\Gamma_i, \Gamma_1\join \Gamma_2,\pc' \Rrightarrow c_i''
\ 
i\in \{1,2\}
\
\quad
{\second{c}} \!=\!
\begin{cases}
\skipcmd , \quad \text{ if } (\pc=\top \lor \pc'=\bot) \cr
\annotation{{\shadowexec{\ifcmd{e}{c_1}{c_2},\Gamma_1\join\Gamma_2}},\text{ else}}
\end{cases}
}
{ \pc \proves \Gamma~\big\{\ifcmd{e}{c_1}{c_2} \transform 
\big(\ifcmd{e}{(\cod{assert}~(\alignexec{e, \Gamma});c_1';c_1'')}{(\cod{assert}~(\neg \alignexec{e, \Gamma});c_2'; c_2''})\big);\second{c}\big\}~\Gamma_1\join \Gamma_2
}
\and
\inferrule*[right=(T-While)]{
\inferrule{\pc' = \cod{updPC}(\pc,\Gamma,e) \\\\
\pc' \proves \Gamma \join \Gamma_f~\{c \transform c'\}~\Gamma_f
}{}
\quad
\inferrule{
\Gamma, \Gamma\join \Gamma_f, \pc' \Rrightarrow c_s \\\\
\Gamma_f, \Gamma\join \Gamma_f, \pc' \Rrightarrow c''}{}
\quad
\second{c} =
\begin{cases}
\skipcmd , \quad\text{ if } (\pc=\top \lor \pc'=\bot) \cr
\annotation{\shadowexec{\whilecmd{e}{c},\Gamma\join \Gamma_f}, \quad\text{ else}}
\end{cases}
}{\pc \proves \Gamma~\{\whilecmd{e}{c} \transform c_s; (\whilecmd{e}{(\cod{assert}~(\alignexec{e, \Gamma});c';c'')});\second{c}\}~\Gamma\join \Gamma_f}
\end{mathpar}
\vspace{0.6em}
\framebox{\textbf{Typing rules for random assignments}}
\begin{mathpar}
\inferrule*[right=(T-Laplace)]{
\pc = \bot
\quad
\Gamma' = \lambda x.~
\pair {\select (\pair {\first{\nexpr}} {\second{\nexpr}})} {\second{\nexpr}}
\text{ where } \Gamma\proves x:\basety_{\pair {\first{\nexpr}} {\second{\nexpr}}}
\quad 
\pre\Rightarrow (\subst{(\eta + \nexpr_\eta)}{\eta}{\eta_1} = \subst{(\eta + \nexpr_\eta)}{\eta}{\eta_2} \Rightarrow \eta_1=\eta_2)
}
{\pc \proves \Gamma~\{\eta := \lapm~\real;\select,\nexpr_\eta \transform \eta:=\lapm~\real;\select,\nexpr_{\eta}\}~\Gamma'[\eta\mapsto \tyreal_{\pair {\nexpr_{\eta}} 0}]}
\end{mathpar}
\vspace{0.6em}
\framebox{\textbf{Instrumentation rule}}
\vspace{0.1em}
\begin{mathpar}
\inferrule*{\Gamma_1 \sqsubseteq \Gamma_2 \quad 
\inferrule*{}
{
\first{c}=\{\first{\distance x}:=\nexpr \mid \Gamma_1(x)=\tyreal_{\pair {\nexpr} {\dexpr_1}} \land \Gamma_2(x)=\tyreal_{\pair * {\dexpr_2}}\} \\\\
\second{c}=\{\second{\distance x}:=\nexpr \mid \Gamma_1(x)=\tyreal_{\pair {\dexpr_1} {\nexpr} } \land \Gamma_2(x)=\tyreal_{\pair {\dexpr_2} *}\} 
}
\\
c'=
\begin{cases}
\first{c};\second{c} & \text{ if } \pc=\bot \cr
\first{c} & \text{ if } \pc=\top \cr
\end{cases}
}{\Gamma_1, \Gamma_2, \pc \Rrightarrow c'}
\end{mathpar}
\vspace{0.2em}
\framebox{\textbf{Select function}}
\begin{mathpar}
\vspace{0.2em}
\alignd(\pair {e_1} {e_2})=e_1
\and
\shadow(\pair {e_1} {e_2})=e_2
\and
(e\mathbin{?}\select_1:\select_2)(\pair {e_1} {e_2})=
e\mathbin{?}\select_1(\pair {e_1} {e_2}):\select_2(\pair {e_1} {e_2}) 
\end{mathpar}
\vspace{0.2em}
\framebox{\textbf{PC update function}}
\vspace{-0.5em}
\begin{mathpar}
\cod{updPC}(\pc, \Gamma, e)=
\begin{cases}
\bot \text{ , if } \pc=\bot\land \pre \Rightarrow (e \Leftrightarrow \shadowexec{e, \Gamma}) \\
\top \text{ , else}
\end{cases}
\end{mathpar}
\vspace{-3ex}

\caption{Typing rules and auxiliary rules. $\pre$ is an invariant that holds
throughout program execution. In most rules, shadow distances are handled in
the same way as aligned distances, with exceptions highlighted in gray boxes.}
\label{fig:typingrules}
\end{figure*}

We denote by $\Gamma$ the typing environment which tracks the type of each
variable in a flow-sensitive way (i.e., the type of each variable at each
program point is traced separately). All typing rules are formalized in
Figure~\ref{fig:typingrules}. Typing rules share a common global invariant
$\pre$, such as the sensitivity assumption annotated in the source code
(e.g., the precondition in Figure~\ref{alg:transformed_noisymax}). We also
write $\Gamma(x)=\pair{\first{\dexpr}}{\second{\dexpr}}$ for $\exists~\basety.~\Gamma(x)=\basety_{\pair{\first{\dexpr}}{\second{\dexpr}}}$ when
the base type $\basety$ is irrelevant.

\subsection{Expressions}
\label{sec:typing_expressions}
Expression rules have the form of $\Gamma\proves e:\tau$, which means that
expression $e$ has type $\tau$ under the environment $\Gamma$. Most rules are
straightforward: they compute the distance for aligned and shadow executions
separately.  Rule~\ruleref{T-OTimes} makes a conservative approach for
nonlinear computations, following LightDP~\cite{lightdp}.
Rule~\ruleref{T-VAR} desugars star types when needed.
The most interesting rule is~\ruleref{T-ODot}, which generates the following constraint:
\[\pre \Rightarrow (e_1\odot e_2 \Leftrightarrow (e_1\!+\!\nexpr_1)\odot (e_2\!+\!\nexpr_3) \land (e_1\!+\nexpr_2) \odot (e_2\!+\!\nexpr_4))\]

This constraint states that the boolean value of $e_1\odot e_2$ is identical in
both aligned and shadow executions. If the constraint is discharged by an external
solver (our type system uses Z3~\cite{z3}), we are assured that $e_1\odot
e_2$ has distances $\pair 0 0$.

\subsection{Commands}
\label{sec:typing_commands}

The flow-sensitive type system tracks and checks the distances of aligned
and shadow executions at each program point. Typing rules for commands have the
form of
\[\pc \proves \Gamma~\{ c\transform c'\}~\Gamma'\]
meaning that starting from the previous typing environment $\Gamma$, the new
typing environment is $\Gamma'$ after $c$. We will discuss the other components
$\pc$ and $c'$ shortly.

\subsubsection{Aligned Variables}
\label{sec:alignvars}

The type system infers and checks the distances of both aligned and shadow
variables. Since most rules treat them in the same way, we first explain the
rules with respect to aligned variables only, then we discuss shadow variables in
Section~\ref{sec:shadowvars}. To simplify notation, we write $\Gamma$ instead
of $\first{\Gamma}$ for now since only aligned variables are
discussed.

\paragraph{Flow-Sensitivity}
In each typing rule $\pc \proves \Gamma~\{c\transform c'\}~\Gamma'$, an
important invariant is that if $c$ runs on two memories that are aligned by
$\Gamma$, then the final memories are aligned by $\Gamma'$.

Consider the assignment rule~\ruleref{T-Asgn}. This rule computes the distance
of $e$'s value, $\first{\nexpr}$, and updates the distance of $x$'s value after
assignment to $\first{\nexpr}$. 

More interesting are rules~\ruleref{T-If}
and~\ruleref{T-While}. In~\ruleref{T-If}, we compute the typing environments
after executing $c_1$ and $c_2$ as $\Gamma_1$ and $\Gamma_2$ respectively.
Since each branch may update $x$'s distance in arbitrary way, $\Gamma_1(x)$ and
$\Gamma_2(x)$ may differ. We note that numeric expressions and $*$ type
naturally form a two level lattice, where $*$ is higher than any $\nexpr$.
Hence, we use the following rule to merge two distances $\dexpr_1$ and
$\dexpr_2$:
\[
\dexpr_1\join \dexpr_2 \defn
\begin{cases}
\dexpr_1 & \text{if } \dexpr_1 = \dexpr_2  \cr
* & \text{otherwise } \cr
\end{cases}
\]
For example, $(3\join 4=*)$, $(x+y\join x+y=x+y)$, $(x\join 3=*)$. Hence, \ruleref{T-If} ends with $\Gamma_1\join \Gamma_2$,
defined as $\lambda x.~\Gamma_1(x) \join \Gamma_2(x)$. 

As an optimization, we also use branch conditions to simplify distances.
Consider our running example (Figure~\ref{alg:transformed_noisymax}): at
Line~\ref{line:noisymax_branch}, $\eta$ has (aligned) distance
$\Omega\mathbin{?}2:0$, where $\Omega$ is the branch condition. Its distance is
simplified to $2$ in the $\true$ branch and $0$ in the $\false$ branch.

Rule \ruleref{T-While} is similar, except that it requires a fixed
point $\Gamma_f$ such that $\pc \proves \Gamma\join\Gamma_f\ \{c\}\ \Gamma_f$. In
fact, this rule is deterministic since we can construct the fixed point as
follows (the construction is similar to the one in~\cite{Hunt:flowsensitive}):
\[\pc \proves \Gamma_i'~\{c\transform c_i'\}~\Gamma_i''\text{ for all } 0\leq
i\leq n\]
where $\Gamma_0'=\Gamma, \Gamma_{i+1}'=\Gamma_i''\join \Gamma,
\Gamma_{n+1}'=\Gamma_n'$.

It is easy to check that $\Gamma_{n}'=\Gamma_{n+1}'=\Gamma_{n}''\join \Gamma$
and $\pc' \proves \Gamma_n'\ \{c\transform c_i'\}\  \Gamma_n''$ by construction.
Hence, $\Gamma_n''$ is a fixed point: $\pc \proves \Gamma \join \Gamma_n''
\{c\transform c_i'\}\ \Gamma_n''$. Moreover, the computation above always
terminates since all typing rules are monotonic on
typing environments\footnote{That is, $\forall \pc, c, \Gamma_1, \Gamma_2,
\Gamma_1', \Gamma_2', c_1, c_2.~\pc \proves \Gamma_i \{c\transform c_i'\}
\Gamma_i'\ i\in\{1,2\} \land \Gamma_1\sqsubseteq \Gamma_2 \implies
\Gamma_1'\sqsubseteq \Gamma_2'$.}
and the lattice has a height of 2.

\paragraph{Maintaining dynamically tracked distances}

Each typing rule $\pc \proves \Gamma~\{c\transform c'\}~\Gamma'$ also sets the
value of $\first{\distance{x}}$ to maintain distance dynamically
whenever $\Gamma'(x)=*$. This is achieved by the instrumented commands in $c'$.

None of rules \ruleref{T-Skip, T-Asgn, T-Seq, T-Ret} generate $*$ type, hence
they do not need any instrumentation. The merge operation in
rule~\ruleref{T-If} generates $*$ type when $\Gamma_1(x)\not=\Gamma_2(x)$. In
this case, we use the auxiliary
instrumentation rule in the form of $\Gamma_1, \Gamma_2,
\pc \Rrightarrow c'$, assuming $\Gamma_1\sqsubseteq \Gamma_2$. In particular,
for each variable $x$ whose distance is ``upgraded'' to $*$, the rule sets $\first{\distance{x}}$
to the distance previously tracked by the type system ($\Gamma_1(x)$). Moreover, the
instrumentation commands $c_1'',c_2''$ are inserted under their corresponding
branches.

Consider the following example:
\begin{lstlisting}[frame=none, numbers=none]
if (x > 1) x := y; else y := 1;
\end{lstlisting}
staring with $\Gamma_0:\{x:1, y: 0\}$. In the $\true$ branch,
rule~\ruleref{T-Asgn} updates $x$ to the distance of $y$, resulting
$\Gamma_1:\{x:0, y: 0\}$.  Similarly, we get $\Gamma_2:\{x:1, y: 0\}$ in the
$\false$ branch. Moreover, when we merge the typing environments $\Gamma_1$ and $\Gamma_2$ at the end of branch, the typing environment becomes $\Gamma_3 = \Gamma_1 \join \Gamma_2=\{x: *, y: 0\}$.
Since $\Gamma_1(x) \neq \Gamma_2(x)$, instrumentation rule is also applied,
which instruments $\first{\distance{\cod{x}}}\texttt{ := 0}$ after \texttt{x := y}
and $\first{\distance{\cod{x}}}\texttt{ := 1}$ after \texttt{y := 1}.

Rule~\ruleref{T-While} may also generate $*$ types. Following the same process in rule~\ruleref{T-If}, it also uses the
instrumentation rule to update corresponding dynamically tracked distance
variables. The instrumentation command $c_s$ is inserted before loop and $c''$ after the commands in the loop body.

\paragraph{Well-Formedness}

Whenever an assignment $x:=e$ is executed, no variable's distance should depend
on $x$. To see why, consider $x := 2$ with initial $\first{\Gamma}(y)=x$ and
$m(x)=1$. Since this assignment does not modify the value of $y$, the aligned
value of $y$ (i.e., $y+\first\Gamma(y)$) should not change. However,
$\first{\Gamma}(y)$ changes from 1 to 2 after the assignment.

To avoid this issue, we check the following condition for each assignment
$x:=e$: $\forall y \in \Vars.~x \notin \Vars (\Gamma(y))$. In case that the
check fails for some $y$, we promote its distance to $*$, and use the auxiliary
instrumentation $\Rrightarrow$ to set $\first{\distance{y}}$ properly. Hence,
\emph{well-formedness} is guaranteed: no variable's distance depends on $x$ when
$x$ is updated.

\paragraph{Aligned branches}

For differential privacy, we require the aligned execution to follow the same
branch as the original execution. Due to dynamically tracked distances,
statically checking that in a type system could be imprecise. Hence, we use
assertions in rules~\ruleref{T-If} and~\ruleref{T-While} to ensure the aligned
execution does not diverge. In those rules, $\alignexec{e, \Gamma}$ simply
computes the value of $e$ in the aligned execution; its full definition is in the \appendixref.  

\subsubsection{Shadow Variables}
\label{sec:shadowvars}

In most typing rules, shadow variables are handled in the same way as aligned
ones, which is discussed above. The key difference is that the type system
allows the shadow execution to take a different branch from the original
execution. 

The extra permissiveness is the key ingredient of verifying
algorithms such as Report Noisy Max.
To see why, consider the example in Figure~\ref{fig:align}, where the shadow execution
runs on $D_2$ with same random noise as from the execution on $D_1$. Upon the
second query, the shadow execution does not update max, since its noisy value
$3$ is the same as the previous max; however, execution on $D_1$ will update
max, since the noisy query value of $4$ is greater than the previous max of
$2$.  

To capture the potential divergence of shadow execution, each typing rule is
associated with a program counter $\pc$ with two possible values $\bot$ and
$\top$ (introducing program counters in a type system is common in
information flow control to track implicit
flows~\cite{sabelfeld2003language}). Here, $\top$ (resp. $\bot$) means that
the shadow execution might take a different branch (resp. must take the same
branch) as the original execution.

When $\pc=\bot$, the shadow execution is checked in the same way as aligned
execution.  When $\pc=\top$, the shadow distances are updated (as done in
Rule~\ruleref{T-Asgn}) so that $x+\second{\distance{x}}$ remains the same. 
The new value from the shadow execution will be maintained by the type system
\emph{when $\pc$ transits from $\bot$ to $\top$} by code instrumentation for
sub-commands in~\ruleref{T-If} and~\ruleref{T-While}, as we show next.

Take a branch $(\ifcmd{e}{c_1}{c_2})$ for example. The transition happens when
$\pc=\bot\land \pc'=\top$.
In this case, we construct a shadow execution of $c$ by an auxiliary function
$\shadowexec{c,\Gamma}$. The shadow execution essentially replaces each
variable $x$ with their correspondence (i.e., $x + \second{\distance{x}}$), as is
standard in self-composition~\cite{barthe2004, terauchi2005}. The
only difference is that
$\shadowexec{c,\Gamma}$ is not applicable to sampling commands, since we are
unable to align the sample variables when different amount of samples are
taken.
The full definition of $\shadowexec{c,\Gamma}$ is available in the \appendixref. 
 Rule~\ruleref{T-While} is very
similar in its way of handling shadow variables.

\subsubsection{Sampling Command}
\label{sec:sampling}

Rule~\ruleref{T-Laplace} checks the only probabilistic command
$\eta:=\lapm{\real}\mathbin{,}\select\mathbin{,}\nexpr_{\eta}$ in \lang. Here,
the selector $\select$ and numeric distance $\nexpr_{\eta}$ are annotations
provided by a programmer to aid type checking. For the sample $\eta$, the
aligned distance is specified by $\nexpr_{\eta}$ and the shadow distance is
always 0 (since by definition, shadow execution use the same sample as the original program).
Hence, the type of $\eta$ becomes $\tyreal_{\pair {\nexpr_{\eta}} 0}$.

Moreover, the selector constructs the aligned execution from either the aligned
($\alignd$) or shadow ($\shadow$) execution. Since the selector may depend on a
condition $e$, we use the selector function $\select(\pair {e_1} {e_2})$ in
Figure~\ref{fig:typingrules} to do so.

Rule~\ruleref{T-Laplace} also checks that each $\eta$ is generated in an
injective way: the same aligned value of $\eta$ implies the same value of
$\eta$ in the original execution.

Consider the sampling command in Figure~\ref{alg:transformed_noisymax}. The
typing environments before and after the command is shown below (we omit
unrelated  parts for brevity): 
\begin{lstlisting}[frame=none, numbers=none]
$\{\cod{bq}: \pair * *, \cdots\}$
$\eta$ :=  $\lapm(2/\priv)\mathbin{,}\Omega\mathbin{?}\shadow:\alignd\mathbin{\mathbin{,}}\Omega\mathbin{?}2:0\text{;}$
$\{\cod{bq}: \pair {\Omega\mathbin{?}{\second{\distance{\cod{bq}}}}:\first{\distance{\cod{bq}}}}{{\second{\distance{\cod{bq}}}}}\mathbin{,}\mathtt{\eta}: \pair {\Omega\mathbin{?}2:0} 0\mathbin{,}\cdots\}$
\end{lstlisting}

In this example, $\select$ is $\Omega\mathbin{?}\shadow:\alignd$. So the aligned
distance of variable $\cod{bq}$ will be 
$\Omega\mathbin{?}{\second{\distance{\cod{bq}}}}:\first{\distance{\cod{bq}}}$, the shadow distance of
variable $\cod{bq}$ is still $\second{\distance{\cod{bq}}}$. The aligned distance of $\eta$
is $\pair {\Omega\mathbin{?}2:0} 0$, where the aligned part is specified in the annotation.

\subsection{Target Language}
\label{sec:target_language}

One goal of \lang is to enable verification of $\priv$-differential privacy
using off-the-shelf verification tools. In the transformed code so far, we
assumed $\cod{assert}$ commands to verify that certain condition holds. The
only remaining challenging feature is the
sampling commands, which requires probabilistic reasoning. Motivated by
LightDP~\cite{lightdp}, we note that for $\priv$-differential privacy, we are only
concerned with the maximum privacy cost, not its likelihood. Hence, in the final
step, we simply replace the sampling command with a non-deterministic command
$\havoc \eta$, which semantically sets the variable $\eta$ to an arbitrary value upon
execution, as shown in Figure~\ref{fig:totarget}.

Note that a distinguished variable $\vpriv{\priv}$ is added by the type
system to explicitly track the privacy cost of the original program. For
Laplace distribution, aligning $\eta$ by the distance of $\nexpr_{\eta}$ is
associated with a privacy cost of $|\nexpr_{\eta}|/r$. The reason is that the
ratio of any two points that are $|\nexpr_{\eta}|$ apart in the Laplace
distribution with scaling factor $r$ is bounded by $\exp(|\nexpr_{\eta}|/r)$.
Since the shadow execution uses the same sample, it has no privacy cost. This
very fact allows us to \emph{reset} privacy cost when the shadow execution is 
used (i.e., $\select$ selects $\shadow$): the rule sets privacy cost
to $0+|\nexpr_{\eta}|/r$ in this case.

In Figure~\ref{alg:transformed_noisymax}, $\vpriv{\priv}$ is set to
$\Omega\mathbin{?}0:\vpriv{\priv}+\Omega\mathbin{?}\epsilon:0$ which is the same as
$\Omega\mathbin{?}\epsilon:\vpriv{\priv}$. Intuitively, that implies that the privacy
cost of the entire algorithm is either $\epsilon$ (when a new max is found) or
the same as the previous value of $\vpriv{\priv}$.

The type system guarantees the following important property: if the original program
type checks and the privacy cost $\vpriv{\priv}$ in the target language is
bounded by some constant $\priv$ in all possible executions of the program,
then the original program satisfies $\priv$-differential privacy. We will
provide a soundness proof in the next section. Consider the
running example in Figure~\ref{alg:transformed_noisymax}. The transformed
program in the target language is shown at the bottom. With a model checking
tool CPAChecker~\cite{beyer2011cpachecker}, we verified that $\vpriv{\priv}\leq \epsilon$ in the
transformed program within 2 seconds (Section~\ref{sec:evaluation}). Hence, the
Report Noisy Max algorithm is verified to be $\epsilon$-differentially private.

\begin{figure}
\setstretch{0.4}
\begin{mathpar}
\inferrule{ }{\eta := \lapm~\real;\select,\nexpr_{\eta} \rightrightarrows \havoc~\eta;\vpriv{\priv} :=\select(\pair {\vpriv{\priv}} 0)+|\nexpr_{\eta}|/r;}
\and
\inferrule{ }{
c \rightrightarrows c \text{, if } c \text{ is not a sampling command}
}
\end{mathpar}
\caption{Transformation rules to the target language. Probabilistic
commands are reduced to non-deterministic ones.}
\label{fig:totarget}
\end{figure}

\section{Soundness}\label{sec:soundness}
The type system performs a two-stage transformation: 
\[\pc \proves \Gamma_1\ \{c \transform c'\}\ \Gamma_2 \quad \text{ and }\quad c' \rightrightarrows c''\]

Here, both $c$ and $c'$ are probabilistic programs; the difference is that $c$
executes on the original memory without any distance tracking variables; $c'$
executes on the extended memory where distance tracking variables 
are visible. In the second stage, $c'$ is transformed to a
non-probabilistic program $c''$ where sampling instructions are replaced by
$\havoc{}$ and the privacy cost $\vpriv{\priv}$ is explicit. In this section,
we use $c,c',c''$ to represent the source, transformed, and target program
respectively.

Overall, the type system ensures $\priv$-differential privacy
(Theorem~\ref{thm:privacy}): if the value of $\vpriv{\priv}$ in $c''$ is always
bounded by a constant $\epsilon$, then $c$ is $\priv$-differentially private.
In this section, we formalize the key properties of our type system and prove
its soundness. Due to space constraints, the complete proofs are available in
the \appendixref.

\paragraph{Extended Memory}

Command $c'$ is different from $c$ since it maintains and uses distance tracking variables.
To close the gap, we first extend memory $m$ to include those
variables, denoted as  $\DVars=\cup_{x\in
\NVars}\set{\first{\distance{x}},\second{\distance{x}}}$ and introduce a distance environment
$\gamma:\DVars \rightarrow \mathbb{R}$.

\begin{definition}Let $\gamma: \DVars\rightarrow \mathbb{R}$. For any $m\in \Store$, there is an extension of $m$,written $\extmemfull{m}{\gamma}{}$, such that 
\[\extmemfull{m}{\gamma}{} (x) =\begin{cases}
m(x), &x\in \Vars\\
\gamma(x), &x\in \DVars
\end{cases}\]
\end{definition}

We use ${\Store}'$ to denote the set of extended memory states and $m_1'$,
$m_2'$ to refer to concrete extended memory states.
We note that although the programs $c$ and $c'$ are probabilistic, the extra
commands in $c'$ are deterministic. Hence, $c'$ preserves the semantics of $c$,
as formalized by the following Lemma.

\begin{lemma}[Consistency]\label{lem:consistency}
Suppose $\pc\proves \Gamma_1~\set{c \transform c'}~\Gamma_2$. Then for any initial and final memory $m_1$, $m_2$ such that $\evalexpr{c}{m_1}(m_2)\neq 0$, and any extension $m_1'$ of $m_1$, there is a unique extension $m_2'$ of $m_2$ such that 
\[\evalexpr{c'}{m_1'}(m_2') = \evalexpr{c}{m_1}(m_2)\]
\end{lemma}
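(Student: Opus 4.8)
The plan is to prove the Consistency Lemma by structural induction on the typing derivation $\pc \proves \Gamma_1~\set{c \transform c'}~\Gamma_2$. The key observation driving the whole argument is that every command the type system adds to produce $c'$ from $c$ — the instrumentation assignments $\first{\distance{x}} := \nexpr$ and $\second{\distance{x}} := \nexpr$, the shadow-execution blocks $\shadowexec{c,\Gamma}$, the $\cod{assert}$ statements, and the updates to $\vpriv{\priv}$ in the second stage — is \emph{deterministic} and acts only on the distance-tracking variables in $\DVars$ (plus $\vpriv{\priv}$), never on the original variables in $\Vars$. Consequently $c'$ restricted to $\Vars$ behaves exactly like $c$, so the probabilistic branching structure is identical; the added commands merely compute a deterministic function of the trace, which pins down a \emph{unique} extension $m_2'$ of $m_2$.

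\textbf{Base cases.} For \ruleref{T-Skip} and \ruleref{T-Return}, $c' = c$ and no $\DVars$ variables are touched, so $m_2' = \extmemfull{m_2}{\gamma_1}{}$ works and is trivially unique. For \ruleref{T-Asgn}, $c' = \second{c}; x := e$ where $\second{c}$ is either $\skipcmd$ (when $\pc = \bot$) or the single deterministic assignment $\second{\distance{x}} := x + \second{\nexpr} - e$ (when $\pc = \top$); since $\second{c}$ only writes $\second{\distance{x}}$, running $c'$ on $m_1'$ produces the same value on $\Vars$ as $c$ on $m_1$ (because $\evalexpr{c}{m_1}$ is deterministic here — an assignment yields a point mass), and the value of $\second{\distance{x}}$ is determined. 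For \ruleref{T-Laplace}, $c' = c$ at this stage (the selector and $\nexpr_\eta$ annotations have no runtime effect), so the extension is immediate; the nondeterminism is genuinely probabilistic but identical on both sides.

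\textbf{Inductive cases.} For \ruleref{T-Seq}, compose the two unique extensions from the induction hypotheses, using that the intermediate memory determined by the first step is the one fed to the second; uniqueness propagates. For \ruleref{T-If}, observe that $\alignexec{e,\Gamma}$ — the guard in the transformed conditional — evaluates on $m_1'$ to the same boolean as $e$ on $m_1$ \emph{whenever} $\evalexpr{c}{m_1}$ is supported, because $\alignexec{e,\Gamma}$ is $e$ with each $x$ replaced by $x + \first{\distance{x}}$ and the aligned memory agrees with the concrete memory on the branch taken (this is exactly what the $\cod{assert}$ is there to guarantee at verification time — but here we are told $\evalexpr{c}{m_1}(m_2) \neq 0$, and one must be slightly careful: the lemma as stated should be read as asserting existence and uniqueness of $m_2'$ with $\evalexpr{c'}{m_1'}(m_2') = \evalexpr{c}{m_1}(m_2)$, and if the $\cod{assert}$ fails then $\evalexpr{c'}{m_1'}(m_2') = 0 \neq \evalexpr{c}{m_1}(m_2)$, so implicitly we are in the case where asserts pass — alternatively the soundness theorem quantifies only over programs whose asserts are discharged). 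Picking the branch that is actually taken, apply the induction hypothesis to get a unique extension after $c_i'$, then append the deterministic instrumentation $c_i''$ from the $\Rrightarrow$ rule and the deterministic $\second{c} = \shadowexec{\ifcmd{e}{c_1}{c_2}, \Gamma_1 \join \Gamma_2}$ block (which reads and writes only $\DVars$). For \ruleref{T-While}, induct on the number of loop iterations (which is finite on any trace with $\evalexpr{c}{m_1}(m_2) \neq 0$, since otherwise the probability mass at $m_2$ would be a limit that is still handled by the $\omega$-cpo structure of the Kozen semantics — one unfolds the loop, using the \ruleref{T-If}-style argument at each step for the $\cod{assert}$ guard, and the fixed-point typing environment $\Gamma_f$ to match the induction hypothesis). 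Finally, for the second stage $c' \rightrightarrows c''$: this is \emph{not} what Lemma~\ref{lem:consistency} is about (that lemma is purely about the first transformation), so I would not fold it in here.

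\textbf{Main obstacle.} The delicate point is the interaction between the deterministic instrumentation and the $\cod{assert}$ commands in \ruleref{T-If} and \ruleref{T-While}. An $\cod{assert}(b)$ in the target semantics presumably acts as $\skipcmd$ when $b$ holds and as $\cod{abort}$ (mapping to the zero sub-distribution) otherwise. So strictly, $\evalexpr{c'}{m_1'}(m_2') = \evalexpr{c}{m_1}(m_2)$ can only hold when every $\cod{assert}$ encountered along the trace to $m_2$ evaluates to true — equivalently, when $\alignexec{e,\Gamma}$ tracks $e$ faithfully along that trace. I expect the correct reading (and the one the authors intend) is that Lemma~\ref{lem:consistency}'s equation is what must be established, and the places where it could fail are exactly the assert-violations; either the lemma carries an implicit hypothesis that the relevant asserts pass, or — more likely, given the flow of Section~\ref{sec:soundness} — the asserts are treated as annotations that the \emph{final verifier} discharges, so that in the semantics of $c'$ they are literally $\skipcmd$. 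Under either reading the induction goes through cleanly; I would state the convention explicitly at the top of the proof and then the rest is the routine case analysis sketched above, with uniqueness following in every case from determinism of the added commands together with the fact that they never overwrite a variable in $\Vars$.
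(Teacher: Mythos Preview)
Your approach is essentially the same as the paper's: structural induction on $c$, driven by the observation that every command the transformation adds is deterministic and touches only $\DVars$. The paper's proof is literally a two-sentence sketch (``By structural induction on $c$. The only interesting case is the (probabilistic) sampling command, which does not modify distance tracking variables.''), so your expanded case analysis is compatible with it and considerably more explicit.

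One misreading to correct: in rule~\ruleref{T-If} the guard of the transformed conditional is still $e$, not $\alignexec{e,\Gamma}$. The transformed command is $\ifcmd{e}{(\cod{assert}~(\alignexec{e,\Gamma});c_1';c_1'')}{(\cod{assert}~(\neg\alignexec{e,\Gamma});c_2';c_2'')}$, with the asserts \emph{inside} the branches. Since $e$ uses only variables in $\Vars\cup\RVars$ and $m_1'$ agrees with $m_1$ on those, the same branch is taken on $m_1'$ as on $m_1$ trivially---you do not need any relationship between $e$ and $\alignexec{e,\Gamma}$ for that step. This actually removes part of the obstacle you flagged.

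Your residual concern about the $\cod{assert}$ commands is legitimate and the paper does not address it explicitly either. The intended reading is the second one you offer: the asserts are verification obligations discharged by the external checker on $c''$, and in the semantics of $c'$ they act as $\skipcmd$ (note that $\cod{assert}$ is not in the source-language syntax of Figure~\ref{fig:syntax}, nor given a failing semantics anywhere). Under that convention the lemma holds exactly as stated and your induction goes through; you should simply state this convention up front rather than hedge about it.
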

\begin{proof}
By structural induction on $c$. The only interesting case is the
(probabilistic) sampling command, which does not modify distance tracking
variables.
\end{proof}

From now on, we will use $m_2'$ to denote the unique extension of $m_2$ satisfying the property above. 

\paragraph{$\Gamma$-Relation}

To formalize and prove the soundness property, we notice that a typing
environment $\Gamma$ along with distance environment $\gamma$
induces two binary relations on memories. We write
$\extmemfull{m_1}{\gamma}{}~\first{\Gamma}~m_2$ (resp.
$\extmemfull{m_1}{\gamma}{}~ \second{\Gamma}~m_2$) when $m_1,m_2$ are related
by $\first{\Gamma}$ (resp. $\second{\Gamma}$) and $\gamma$. Intuitively, the
initial $\gamma$ and $\Gamma$ (given by the function signature) specify the
adjacency relation, and the relation is maintained by the type system
throughout program execution. For example, the initial $\gamma$ and $\Gamma$ in
Figure~\ref{alg:transformed_noisymax} specifies that two executions of the
program is related if non-private variables $\epsilon, size$ are identical, and each query answer in
$q[i]$ differs by at most one.

To facilitate the proof, we simply write $m_1'~\Gamma~m_2$ where $m_1'$ is
an extended memory in the form of $\extmemfull{m_1}{\gamma}{}$.

\begin{definition}[$\Gamma$-Relations]
\label{def:relation} Two memories $m_1'$ (in the form of
$\extmemfull{\store_1}{\gamma}{}$) and ${\store}_2$ are related by
$\first{\Gamma}$, written $m_1'~\first{\Gamma}~{\store}_2$, if $\forall x\in
\Vars\cup \RVars$, we have
\begin{align*}
\store_2(x)=m_1'(x) + m_1'(\first{\dexpr}) &\text{ if } \Gamma \proves x : \tyreal_{\pair {\first{\dexpr}} {\second{\dexpr}}}
\end{align*}
We define the relation on non-numerical types and the $\second{\Gamma}$
relation in a similar way.
\end{definition} 

By the definition above, $\first{\Gamma}$ introduces a function from
${\Store}'$ to ${\Store}$. Hence, we use $\first{\Gamma}m_1'$ as the
unique $m_2$ such that $m_1'~ \first{\Gamma}~m_2$.  The $\second{\Gamma}$
counterparts are defined similarly.

\paragraph{Injectivity}

For alignment-based proofs, given any $\gamma$, both $\first{\Gamma}$ and
$\second{\Gamma}$ must be injective functions~\cite{lightdp}.
The injectivity of $\Gamma$ over the entire memory follows from the injectivity
of $\Gamma$ over the random noises $\eta\in \RVars$, which is checked as the
following requirement in Rule \ruleref{T-Laplace}: 
$$\pre \Rightarrow (\subst{(\eta + \nexpr_\eta)}{\eta}{\eta_1} = \subst{(\eta + \nexpr_\eta)}{\eta}{\eta_2} \Rightarrow \eta_1=\eta_2)$$
where all variables are universally quantified. Intuitively, this is true since
the non-determinism of the program is purely from that of $\eta\in \RVars$. 

\begin{lemma}[Injectivity]
\label{lem:injective}
Given $c,c',\pc, m',m_1', m_2', \Gamma_1, \Gamma_2$ such that $\pc \proves \cmdfull{\Gamma_1}{\Gamma_2}{c\transform c'}$, 
$\evalexpr{c'}{m'}{m_1'}\not=0 \land \evalexpr{c'}{m'}{m_2'}\not=0, \star\in \{\alignd, \shadow\}$, then we have
\begin{align*}
\Gamma_2^\star m_1' = \Gamma_2^\star m_2'\implies  m_1' = m_2' 
\end{align*}
\end{lemma}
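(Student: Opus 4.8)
The plan is to proceed by structural induction on the command $c$, mirroring the structure of the typing derivation $\pc \proves \cmdfull{\Gamma_1}{\Gamma_2}{c\transform c'}$. The key observation is that, by Definition~\ref{def:relation}, the map $\Gamma_2^\star$ from extended memories to memories acts coordinatewise: $(\Gamma_2^\star m')(x) = m'(x) + m'(\third{\dexpr_x})$ where $\third{\dexpr_x}$ is the $\star$-distance expression recorded for $x$ in $\Gamma_2$ (interpreted in $m'$ when it is the star-variable $\third{\distance x}$). So proving $\Gamma_2^\star m_1' = \Gamma_2^\star m_2' \implies m_1' = m_2'$ amounts to showing that the system of equations $\{ m_1'(x) + \sem{\third{\dexpr_x}}_{m_1'} = m_2'(x) + \sem{\third{\dexpr_x}}_{m_2'} \}_{x}$ together with equality on all distance-tracking variables in $\DVars$ (these are ordinary program variables of $c'$, and $\Gamma_2$ assigns them distance $\pair 0 0$, so $\Gamma_2^\star$ fixes them) forces $m_1' = m_2'$. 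For the base cases — $\skipcmd$, assignment $x := e$, $\outcmd e$ — the typing rules only change the distance of the assigned variable to a fresh expression $\third{\nexpr}$, and crucially (by the Well-Formedness condition enforced in \ruleref{T-Asgn}) no variable's distance depends on $x$ after the assignment; combined with the fact that all other coordinates and all of $\DVars$ are untouched, invertibility reduces to solving $m_1'(x) + \sem{\third{\nexpr}}_{m_1'} = m_2'(x) + \sem{\third{\nexpr}}_{m_2'}$ for $m_1'(x) = m_2'(x)$ given that $m_1'$ and $m_2'$ already agree on everything $\third{\nexpr}$ mentions.

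For the inductive cases \ruleref{T-Seq}, \ruleref{T-If}, \ruleref{T-While} I would chain the induction hypotheses. For sequencing this is immediate: if $c_1;c_2$ is run from $m'$ and ends at $m_i'$ with nonzero probability, there is an intermediate memory $m_i''$ for each run; one needs to first recover $m_1'' = m_2''$ from $\Gamma_2^\star m_1'' = \Gamma_2^\star m_2''$, but that equality is not given directly — instead one uses Lemma~\ref{lem:consistency} (Consistency) to note the intermediate memory is the unique extension, and then applies the IH for $c_2$ to get $m_1'' = m_2''$ from the final equality, then the IH for $c_1$. For \ruleref{T-If}, the transformed command begins with $\cod{assert}(\alignexec{e,\Gamma})$ in each branch, and the aligned execution provably follows the same branch as the real one; since $\Gamma_2^\star m_1' = \Gamma_2^\star m_2'$ already pins down the branch-condition value (the assertion guarantees the aligned value of $e$ agrees, and for the $\shadow$ case \ruleref{T-ODot}'s constraint makes $e$'s distance $\pair 0 0$), both runs take the same branch $c_j$, whose transformed body is typed with a possibly stronger $\pc'$ and environment; here one must be careful that the per-branch instrumentation $c_j''$ and, when $\pc' = \top$, the shadow-replay block $\shadowexec{\cdots}$ are themselves deterministic updates to $\DVars$-variables, so they do not break injectivity — one applies the IH to the branch body and then observes the instrumentation commands are invertible because they overwrite $\third{\distance x}$ with an expression not depending on $\third{\distance x}$. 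The loop case is the same argument wrapped in an induction on the number of iterations, using the fixed-point environment $\Gamma_f$ and the fact that $\alignexec{e,\Gamma}$-assertions force both runs to iterate the same number of times.

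The genuinely delicate case, and the one I expect to be the main obstacle, is the sampling command \ruleref{T-Laplace}. Here $\Gamma_2 = \Gamma'[\eta \mapsto \tyreal_{\pair{\nexpr_\eta}{0}}]$, and the selector $\select$ may \emph{rewrite the aligned distance of every previously-sampled variable} via $\Gamma' = \lambda x.\ \pair{\select(\pair{\first\nexpr}{\second\nexpr})}{\second\nexpr}$. So $\Gamma_2^\alignd$ is not a small perturbation of the identity on the $\eta$-coordinate alone; it can mix in shadow distances. The saving grace is the injectivity side-condition checked by the rule, $\pre \Rightarrow (\subst{(\eta+\nexpr_\eta)}{\eta}{\eta_1} = \subst{(\eta+\nexpr_\eta)}{\eta}{\eta_2} \Rightarrow \eta_1 = \eta_2)$, which says exactly that the map $\eta \mapsto \eta + \nexpr_\eta$ is injective in the fresh variable $\eta$ (all other variables frozen). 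The argument I would give: the sampling command modifies only $\eta$, leaving every normal variable and every $\DVars$-variable unchanged, so from $\Gamma_2^\star m_1' = \Gamma_2^\star m_2'$ we first recover that $m_1'$ and $m_2'$ agree on $\NVars \setminus \{\eta\} \cup \DVars$ (using that those coordinates' distances, whatever $\select$ chose, are built from values that have now been pinned down — this needs a small lemma that $\select(\pair{\first\nexpr}{\second\nexpr})$ evaluated in $m_1'$ equals that in $m_2'$ once the non-$\eta$ part of memory agrees, which holds because $\select$, $\first\nexpr$, $\second\nexpr$ are all $\eta$-free as they describe distances of variables other than $\eta$ and the branch condition is $\eta$-free at that point). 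Then the $\eta$-coordinate equation becomes precisely $m_1'(\eta) + \sem{\nexpr_\eta}_{m_1'} = m_2'(\eta) + \sem{\nexpr_\eta}_{m_2'}$ with all non-$\eta$ data equal, so the side-condition gives $m_1'(\eta) = m_2'(\eta)$, finishing the case. Getting the bookkeeping exactly right about which distance expressions can mention $\eta$ (for the aligned version, $\nexpr_\eta$ can; for earlier variables, they cannot) is the technical crux.
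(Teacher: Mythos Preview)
Your induction does not go through at the sequencing case, and this is a genuine gap rather than missing bookkeeping. The statement you are inducting on is: \emph{from a common start $m'$, if two $c'$-reachable extended memories have the same $\Gamma_2^\star$-image then they are equal}. For $c_1;c_2$ you factor each run through an intermediate memory, obtaining $m_1'', m_2''$ with $\evalexpr{c_1'}{m'}(m_i'')\neq 0$ and $\evalexpr{c_2'}{m_i''}(m_i')\neq 0$. You then propose to ``apply the IH for $c_2$ to get $m_1'' = m_2''$ from the final equality.'' But the IH for $c_2$ only applies to two runs of $c_2'$ from the \emph{same} initial memory; here the two runs start from $m_1''$ and $m_2''$, which may already differ (precisely when $c_1$ samples). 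Lemma~\ref{lem:consistency} does not rescue this: it says the distance-tracking part of the intermediate memory is determined by the ordinary part, but the ordinary parts $m_1''|_{\Vars\cup\RVars}$ and $m_2''|_{\Vars\cup\RVars}$ can differ on a random variable. Nor can you use the IH for $c_1$ first, since you have no hypothesis relating $\Gamma^\star m_1''$ and $\Gamma^\star m_2''$ for the intermediate environment $\Gamma$.

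The paper sidesteps this by proving a differently shaped auxiliary statement (Lemma~\ref{lem:inj}): either $m_1'=m_2'$, or some \emph{random variable} $\eta$ already witnesses $\Gamma_2^\star m_1'(\eta)\neq \Gamma_2^\star m_2'(\eta)$. This form composes under sequencing because (under the paper's convention that memory records the full list of sampled values) once two runs disagree on some $\eta$, that disagreement persists through $c_2'$ and, by the injectivity side-condition of \ruleref{T-Laplace}, is still visible in the $\Gamma_2^\star$-image at the end. Lemma~\ref{lem:injective} is then an immediate contrapositive. Incidentally, your treatment of the deterministic base cases ($\skipcmd$, $x:=e$) is much heavier than needed: from a common $m'$ those commands have a unique successor, so $m_1'=m_2'$ holds outright with no appeal to well-formedness or distance equations. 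Your sampling case is essentially right and matches the paper's, but the compositional structure around it needs the reformulation above.
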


\paragraph{Soundness}
The soundness theorem connects the ``privacy cost'' of the probabilistic
program to the distinguished variable $\vpriv{\epsilon}$ in the target program
$c''$. To formalize the connection,
we first extend memory one more time to include $\vpriv{\epsilon}$:

\begin{definition}
For any extended memory $m'$ and constant $\priv$, there is an extension of $m'$,
written $\extmemfull{m'}{\priv}{}$, so that 
\[\extmemfull{m'}{\priv}{}(\vpriv{\epsilon})=\priv, ~\text{ and }~ \extmemfull{m'}{\priv}{}(x) =m(x),~\forall x\in \dom(m').\]
\end{definition}

For a transformed program and a pair of initial and final memories $m_1'$ and $m_2'$, we identify a set of possible $\vpriv{\epsilon}$ values, so that in the corresponding executions of $c''$, the initial and final memories are extensions
of $m_1'$ and $m_2'$ respectively:
\begin{definition} Given $c'\rightrightarrows c''$, $m_1'$ and $m_2'$, the consistent costs of executing $c''$ w.r.t.  $m_1'$ and
$m_2'$, written $\restrict{c''}{m_1'}{m_2'}$, is defined as
\[\restrict{c''}{m_1'}{m_2'} \defn \{\priv \mid 
\extmemfull{m_2'}{\priv}{0} \in \evalexpr{c''}{\extmemfull{m_1'}{0}{0}} \}\]
\end{definition}

Since $(\restrict{c''}{m_1'}{m_2'})$ by definition is a set of values of
$\vpriv{\epsilon}$, we write $\max(\restrict{c''}{m_1'}{m_2'})$ for the maximum
cost. 

The next lemma
enables precise reasoning of privacy cost w.r.t. a pair of initial and final
memories:
\begin{lemma}[Pointwise Soundness]
\label{lem:pointwise}
Let $\pc, c,c',c'',\Gamma_1, \Gamma_2$ be such that $\pc \proves \cmdfull{\Gamma_1}{\Gamma_2}{c\transform c'}~\land~ c'\rightrightarrows c''$, then $\forall m_1',m_2'$:
\begin{itemize}
\item[(i)] the following holds:
\begin{equation}\label{inv:shadow}
\evalexpr{c'}{\store_1'}(\store_2') \leq \evalexpr{c}{\second{\Gamma_1}\store_1'}(\second{\Gamma_2}\store_2') \text{ when } \pc=\bot
\end{equation}
\item[(ii)] one of the following holds:
\begin{subequations}
    \begin{align}
        \evalexpr{c'}{m_1'}(m_2')&\leq\exp(\max(\restrict{c''}{m_1'}{m_2'}))\evalexpr{c}{\first{\Gamma_1}m_1'}(\first{\Gamma_2}m_2')\label{inv:aligned}\\
         \evalexpr{c'}{m_1'}(m_2')&\leq\exp(\max(\restrict{c''}{m_1'}{m_2'}))\evalexpr{c}{\second{\Gamma_1}m_1'}(\first{\Gamma_2}m_2')\label{inv:aligned2}
    \end{align}
\end{subequations}
\end{itemize}
\end{lemma}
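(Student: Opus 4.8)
The plan is to prove statements (i) and (ii) simultaneously by induction on the structure of $c$ (equivalently, on the derivation of $\pc \proves \cmdfull{\Gamma_1}{\Gamma_2}{c \transform c'}$), with an inner induction on the number of loop iterations for the \ruleref{T-While} case; since $m_1',m_2'$ with $\evalexpr{c'}{m_1'}(m_2') = 0$ make both claims trivial, I focus on the positive case. The two statements are genuinely coupled: the shadow re-synchronization blocks $\shadowexec{\cdot}$ that \ruleref{T-If} and \ruleref{T-While} append as $\second{c}$ (inserted exactly when $\pc$ transitions from $\bot$ to $\top$) are faithful only if (i) already holds of the enclosed sub-commands. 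I would rely on four ingredients: the Consistency Lemma (Lemma~\ref{lem:consistency}), to rewrite probabilities of the instrumented program $c'$ as probabilities of the source $c$ on the underlying (un-extended) memories; the expression-typing rules, which relate $\evalexpr{e}{\first{\Gamma}m'}$ to $\evalexpr{e}{m'}$ through the distances in $e$'s type (additively for numeric expressions, by equality for booleans via the \ruleref{T-ODot} constraint), and symmetrically for $\second{\Gamma}$; the Injectivity Lemma (Lemma~\ref{lem:injective}), which justifies the change of variables when summing over intermediate memories in the sequencing and loop cases; and the well-formedness check the type system enforces on every assignment (no live variable's distance mentions the assigned variable), which keeps these distance computations stable.

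For \ruleref{T-Skip} and \ruleref{T-Return} nothing changes: the $\Gamma$-relations are preserved, $\restrict{c''}{m_1'}{m_2'}$ carries over, and (i) together with disjunct \eqref{inv:aligned} of (ii) hold immediately. For \ruleref{T-Asgn}, where $c' = \second{c};\,x:=e$, I would check by direct calculation that running the instrumented update of $\second{\distance{x}}$ (present when $\pc=\top$) and then $x:=e$ carries a $\second{\Gamma_1}$-related pair to a $\second{\Gamma_2}$-related one, and symmetrically for the $\first{\distance{x}}$ instrumentation when $x$ is promoted to $*$; the cost is unchanged, so whichever disjunct of (ii) held on the input propagates. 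The crux is \ruleref{T-Laplace}, the only rule charging privacy cost and the only origin of the disjunction. It requires $\pc=\bot$, so (i) is immediate: $\eta$ has shadow distance $0$, the same value is sampled in the instrumented and the source shadow executions, hence $\evalexpr{c'}{m_1'}(m_2') = \evalexpr{c}{\second{\Gamma_1}m_1'}(\second{\Gamma_2}m_2')$. For (ii) I split on the selector $\select$. If $\select$ yields $\alignd$, the post-state aligned distances are those of $\Gamma_1$ with the fresh distance $\nexpr_\eta$ on $\eta$, so $\evalexpr{c'}{m_1'}(m_2')$ and $\evalexpr{c}{\first{\Gamma_1}m_1'}(\first{\Gamma_2}m_2')$ differ only by the Laplace density ratio of two points $\nexpr_\eta$ apart, at most $\exp(|\nexpr_\eta|/r)$; since $c''$ increments $\vpriv{\priv}$ by $|\nexpr_\eta|/r$, disjunct \eqref{inv:aligned} closes. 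If $\select$ yields $\shadow$, the post-state aligned distances instead equal the \emph{shadow} distances of $\Gamma_1$, so the relevant source execution starts from $\second{\Gamma_1}m_1'$; combined with (i) this gives disjunct \eqref{inv:aligned2}, with cost $|\nexpr_\eta|/r$ --- the $\select$-guarded update $\vpriv{\priv}:=\select(\pair{\vpriv{\priv}}{0})+|\nexpr_\eta|/r$ in $c''$ discards any previously accumulated cost, which is sound because the shadow execution itself has no privacy cost.

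For \ruleref{T-Seq} I would chain the two induction hypotheses, with a case split on which disjunct of (ii) holds after $c_1'$: if it is \eqref{inv:aligned2} --- i.e., the suffix's sampling has switched its selector to $\shadow$ --- then (i) for $c_1'$ (available since $\pc=\bot$ throughout) lets me replace the prefix's source execution by the one on the shadow-shifted memory, which is exactly the starting point the suffix needs; the sum over intermediate memories then goes through by injectivity, and the composite cost coincides with $\max(\restrict{c''}{m_1'}{m_2'})$ precisely because the $\select$-guarded update in $c''$ resets rather than accumulates at the switch. In \ruleref{T-If} and \ruleref{T-While} the $\pc$-tracking and the shadow re-synchronization meet. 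For a branch whose guard may diverge in the shadow (so the inner $\pc' = \top$ while the outer $\pc=\bot$), I must show that the appended $\shadowexec{\ifcmd{e}{c_1}{c_2}, \Gamma_1\join\Gamma_2}$ block --- operating on the shadow-shifted variables $x+\second{\distance{x}}$ --- re-establishes the $\second{\Gamma_1\join\Gamma_2}$ relation; this is a standard self-composition correctness argument, and it is exactly here that (i) for the sub-commands (faithful simulation of the source shadow run when $\pc=\bot$) is indispensable. The $\cod{assert}$ commands guarding the aligned branch can only drive $\evalexpr{c'}{\cdot}$ to $0$, which is consistent with the ``$\leq$'' direction in both claims and is why the equalities can only weaken to inequalities.

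The main obstacle will be \ruleref{T-While}. Three issues have to be handled at once: (a) the fixed-point environment $\Gamma_f$ and the requirement that the loop body meet the inequalities for arbitrarily many iterations, which calls for an inner induction on iteration count carried out against the finite-iteration construction of $\Gamma_f$ described earlier; (b) lifting the pointwise bounds through the denotational semantics of the while loop, which is a supremum of finite unrollings over sub-distributions, so one needs monotone convergence to pass from the per-unrolling bounds to the bound for the full loop; and (c) showing that $\max(\restrict{c''}{m_1'}{m_2'})$ really bounds the privacy cost accumulated over all iterations, including iterations in which the selector switches to $\shadow$ and $\vpriv{\priv}$ is reset. Intuitively a reset is sound because it discards a prefix whose alignment is being replaced wholesale by the shadow execution, but making this precise requires tracking, across the unrolling, which disjunct of (ii) is currently active and how the pre-loop instrumentation $c_s$, the end-of-iteration instrumentation, and the post-loop shadow re-synchronization all interact with it. I expect (c) --- the cost-reset bookkeeping interleaved with the aligned/shadow disjunction inside the loop --- to be the most delicate part of the whole proof.
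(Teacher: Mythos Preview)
Your overall strategy---structural induction on $c$, inner induction on iterations for \ruleref{T-While}, case-splitting on the selector in \ruleref{T-Laplace}, and using injectivity for the change of variables in \ruleref{T-Seq}---matches the paper's proof, and your handling of the sampling case (where the $\alignd$/$\shadow$ split is what generates the \eqref{inv:aligned}/\eqref{inv:aligned2} disjunction) is right. One small correction: the Consistency Lemma is not actually used in this proof; it only enters later when passing from $P'$ back to $P$ in the privacy theorem.

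There is, however, a real gap in your plan for the $\pc$ transition in \ruleref{T-If} and \ruleref{T-While}. You write that when the outer $\pc=\bot$ but the inner $\pc'=\top$, the appended $\shadowexec{\cdot}$ block re-establishes the shadow relation, and that ``(i) for the sub-commands \ldots\ is indispensable'' here. But the sub-commands $c_1,c_2$ are typed with $\pc'=\top$, so the induction hypothesis gives you \emph{nothing} for (i) on them---(i) is stated only when $\pc=\bot$. What you actually need is a separate auxiliary lemma (the paper proves it as ``High PC''): whenever $\top\proves\Gamma_1\{c\transform c'\}\Gamma_2$, then $\evalexpr{c'}{m_1'}(m_2')\neq 0$ implies $\second{\Gamma_1}m_1'=\second{\Gamma_2}m_2'$. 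This holds because \ruleref{T-Laplace} forbids sampling under $\pc=\top$ (so $c'$ is deterministic), and the $\pc=\top$ branch of \ruleref{T-Asgn} deliberately instruments $\second{\distance{x}}:=x+\second{\nexpr}-e$ before the assignment precisely so that $x+\second{\distance{x}}$ is unchanged. With High~PC in hand, the argument for (i) on the whole branch is: $c_i'$ leaves the shadow-shifted memory fixed, the instrumentation $c_i''$ under $\pc'=\top$ touches only aligned distances, and then $\shadowexec{c,\Gamma_1\join\Gamma_2}$ replays $c$ on that (unchanged) shadow memory---this last step is indeed the self-composition argument you identify. But the coupling you describe is inverted: it is not (i) on the sub-commands that makes $\shadowexec{\cdot}$ faithful; rather, High~PC on the sub-commands plus the self-composition property of $\shadowexec{\cdot}$ together yield (i) for the enclosing command. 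Without isolating High~PC as its own invariant, the induction does not close at the $\bot\to\top$ transition.
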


The point-wise soundness lemma provides a precise privacy bound per initial and
final memory. However, differential privacy by definition
(Definition~\ref{def:diffpriv}) bounds the worst-case cost. To close the gap,
we define the worst-case cost of the transformed program.

\begin{definition}
For any program $c''$ in the target language, we say the execution cost of $c''$  is
\emph{bounded} by some constants $\priv$, written
$c''^{\preceq \priv}$, iff for any $m_1', m_2'$,
\[\extmemfull{m_2'}{\priv'}{\delta'}\in \evalexpr{c''}{\extmemfull{m_1'}{0}{0}} \Rightarrow \priv'\leq\priv\]
\end{definition}

Note that off-the-shelf tools can be used to verify that $c''^{\preceq \priv}$
holds for some $\priv$.

\begin{theorem}[Soundness]
\label{thm:soundness}
Given $c,c',c'', m_1',\Gamma_1, \Gamma_2,\priv$ such that $\bot\!\proves\! \Gamma_1\set{c\!\transform\! c'}\Gamma_2~\land~ c'\!\rightrightarrows\! c''~\land ~c''^{\preceq \priv}$, one of the following holds:
\vspace{-1em}
\begin{subequations}
    \begin{align}
        \max_{S\subseteq\Store'}(\evalexpr{c'}{m_1'}(S)-\exp(\priv)\evalexpr{c}{\first{\Gamma}_1 m_1'}(\first{\Gamma_2}S)) &\leq 0,\label{thm:soundnessa}\\
        \max_{S\subseteq\Store'}(\evalexpr{c'}{m_1'}(S)-\exp(\priv)\evalexpr{c}{\second{\Gamma_1} m_1'}(\first{\Gamma_2}S)) &\leq 0.\label{thm:soundnessb}
    \end{align}
\end{subequations}
\end{theorem}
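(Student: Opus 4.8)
The plan is to lift the per-memory bound of Lemma~\ref{lem:pointwise}(ii) to arbitrary output sets $S\subseteq\Store'$; all of the probabilistic reasoning is already packaged into Lemmas~\ref{lem:consistency}--\ref{lem:pointwise}, so the theorem itself is essentially a change-of-variables argument plus bookkeeping on the privacy cost. For the given $m_1'$, Lemma~\ref{lem:pointwise}(ii) (which applies since $\pc=\bot$ here) tells us that either inequality~\eqref{inv:aligned} holds for every $m_2'$, or inequality~\eqref{inv:aligned2} holds for every $m_2'$. I would carry out the argument in the first case, producing~\eqref{thm:soundnessa}; the second case produces~\eqref{thm:soundnessb} by the identical computation with $\second{\Gamma_1}$ in place of $\first{\Gamma_1}$ (the output side keeps $\first{\Gamma_2}$ in both cases, which is exactly what lets the same injectivity step be reused).

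First I would eliminate the memory-dependent cost. For a fixed $m_2'$, the hypothesis $c''^{\preceq\priv}$ says that every element of $\restrict{c''}{m_1'}{m_2'}$ is at most $\priv$, so $\max(\restrict{c''}{m_1'}{m_2'})\le\priv$ whenever that set is non-empty; and when it is empty, Lemma~\ref{lem:pointwise}(ii) already forces $\evalexpr{c'}{m_1'}(m_2')=0$. Since $\evalexpr{c}{\cdot}$ is a sub-distribution and $\exp(\priv)>0$, in both cases we get the clean pointwise inequality $\evalexpr{c'}{m_1'}(m_2')\le\exp(\priv)\,\evalexpr{c}{\first{\Gamma_1}m_1'}(\first{\Gamma_2}m_2')$, valid for all $m_2'$. (That the consistent-cost set is non-empty precisely on the support of $\evalexpr{c'}{m_1'}$ is where the correspondence $c'\rightrightarrows c''$ is used: any positive-probability run of $c'$ is matched by a run of $c''$ in which each $\cod{havoc}$ reproduces the value the corresponding sampling command drew, yielding some value of $\vpriv{\priv}$ in $\restrict{c''}{m_1'}{m_2'}$.)

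Then I would sum. Fix an arbitrary $S\subseteq\Store'$ and let $S_0$ be the intersection of $S$ with the support of $\evalexpr{c'}{m_1'}$, so that $\evalexpr{c'}{m_1'}(S)=\evalexpr{c'}{m_1'}(S_0)$. Summing the pointwise inequality over $m_2'\in S_0$ gives $\evalexpr{c'}{m_1'}(S)\le\exp(\priv)\sum_{m_2'\in S_0}\evalexpr{c}{\first{\Gamma_1}m_1'}(\first{\Gamma_2}m_2')$. By Lemma~\ref{lem:injective} (with $\star=\alignd$, taking $m_1'$ as the common initial memory), $\first{\Gamma_2}$ restricted to $S_0$ is injective, so this last sum equals $\evalexpr{c}{\first{\Gamma_1}m_1'}(\first{\Gamma_2}S_0)$, and since $\first{\Gamma_2}S_0\subseteq\first{\Gamma_2}S$ it is at most $\evalexpr{c}{\first{\Gamma_1}m_1'}(\first{\Gamma_2}S)$. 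Hence $\evalexpr{c'}{m_1'}(S)-\exp(\priv)\,\evalexpr{c}{\first{\Gamma_1}m_1'}(\first{\Gamma_2}S)\le0$ for every $S$, i.e.~\eqref{thm:soundnessa}.

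I do not expect a serious obstacle, as the heavy lifting lives in Lemma~\ref{lem:pointwise}; the steps needing care are (i) that the choice between~\eqref{inv:aligned} and~\eqref{inv:aligned2} in Lemma~\ref{lem:pointwise}(ii) is uniform in $m_2'$ for the fixed $m_1'$, so that the theorem's dichotomy is meaningful; (ii) the empty-consistent-cost / support boundary when replacing $\max(\restrict{c''}{m_1'}{m_2'})$ by $\priv$; and (iii) restricting $S$ to $S_0$ before invoking injectivity, since Lemma~\ref{lem:injective} only gives injectivity of $\first{\Gamma_2}$ over the support of $\evalexpr{c'}{m_1'}$, not over all of $\Store'$.
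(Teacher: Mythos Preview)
Your proposal follows essentially the same approach as the paper's proof: invoke Lemma~\ref{lem:pointwise}(ii), replace $\max(\restrict{c''}{m_1'}{m_2'})$ by $\priv$ using the hypothesis $c''^{\preceq\priv}$, sum over $S$, and appeal to injectivity (Lemma~\ref{lem:injective}) to turn the sum over $m_2'$ into a probability over $\first{\Gamma_2}S$. The paper's proof is terser and simply writes the sum directly, noting that ``the equality above holds due to the injective assumption''; your extra bookkeeping with $S_0$ and the empty-cost-set case is sound and makes explicit steps the paper glosses over, but it is the same argument.
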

\begin{proof}
By definition of $c''^{\preceq \priv}$, we have
${\max(\restrict{c''}{m_1'}{m_2'})} \leq\priv$ for all $m_2'\in S$. Thus, by
Lemma~\ref{lem:pointwise}, we have one of the two:  
\[\evalexpr{c'}{m_1'}(m_2')\leq\exp(\priv)\evalexpr{c}{\aligndstore{\Gamma_1}m_1'}(\aligndstore{\Gamma_2}m_2'), \quad\ \forall m_2'\in S,\]
\[ \evalexpr{c'}{m_1'}(m_2')\leq\exp(\priv)\evalexpr{c}{\shadowstore{\Gamma_1}m_1'}(\aligndstore{\Gamma_2}m_2'), \quad\ \forall m_2'\in S.\]
If the first inequality is true, then
\begin{align*}
   &\max_{S\subseteq\Store'}(\evalexpr{c'}{m_1'}(S)-\exp(\priv)\evalexpr{c}{\first{\Gamma_1}m_1'}(\first{\Gamma_2}S)) \\
 = &\max_{S\subseteq\Store'}\sum_{m_2'\in S} (\evalexpr{c'}{m_1'}(m_2')-\exp(\priv)\evalexpr{c}{\first{\Gamma_1}m_1'}(\first{\Gamma_2}m_2'))
 \leq 0
\end{align*}
and therefore \eqref{thm:soundnessa} holds.
Similarly, \eqref{thm:soundnessb} holds if the second inequality is true.
Note that the equality above holds due to the injective
assumption, which allows us to derive the set-based privacy from
the point-wise privacy (Lemma~\ref{lem:pointwise}).
\end{proof}

We now prove the main theorem on differential privacy:

\begin{theorem}[Privacy]
\label{thm:privacy}
Given $\Gamma_1,\Gamma_2,c,c',c'', e,\priv$ such that $\first{\Gamma_1}=\second{\Gamma_1} \land \bot \proves \Gamma_1\set{(c;\outcmd~e) \transform (c';\outcmd~e)}\Gamma_2 ~\land ~c' \rightrightarrows c''$, we have 
\[c''^{\preceq \priv} \Rightarrow c \text{ is } \priv\text{-differentially private}.\]

\end{theorem}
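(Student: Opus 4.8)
The plan is to reduce the claim to the set-based bound of Theorem~\ref{thm:soundness}, using the semantics-preservation of Lemma~\ref{lem:consistency} and the fact that, by \ruleref{T-Return}, a return expression carries \emph{aligned} distance $0$. First I would strip the trailing $\outcmd{e}$: inverting \ruleref{T-Seq} and \ruleref{T-Return} on $\bot\proves\Gamma_1~\{(c;\outcmd{e})\transform(c';\outcmd{e})\}~\Gamma_2$ gives $\bot\proves\Gamma_1~\{c\transform c'\}~\Gamma_2$ together with $\Gamma_2\proves e:\tyreal_{\pair 0 \dexpr}$ (or $\Gamma_2\proves e:\bool$), and since $\rightrightarrows$ is a congruence that fixes $\outcmd{e}$, $c'\rightrightarrows c''$ splits into a body transformation; the appended $\outcmd{e}$ touches neither the distance variables $\DVars$ nor $\vpriv{\priv}$, so $c''^{\preceq\priv}$ descends to the body. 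The point to retain is that the aligned distance of the output expression is $0$, hence its value agrees between the original and the aligned execution.

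Next I would turn the abstract adjacency into concrete memories. Fix $a_1\mathbin{\pre}a_2$ and an output event $E$. Following the convention that the function signature's $\Gamma_1$ together with a distance environment $\gamma:\DVars\rightarrow\RR$ encodes $\pre$, I choose an initial memory $m_1$ carrying input $a_1$ and a $\gamma$ such that $m_1'\defn\extmemfull{m_1}{\gamma}{}$ satisfies the global invariant $\pre$ and $\first{\Gamma_1}m_1'$ is the memory $m_2$ carrying input $a_2$ (for Report Noisy Max, $\gamma$ records $\first{\distance{q}}[i]=\second{\distance{q}}[i]=a_2[i]-a_1[i]$, which lies in $[-1,1]$ precisely because $a_1\mathbin{\pre}a_2$, so $\pre$ holds on $m_1'$). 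Here the hypothesis $\first{\Gamma_1}=\second{\Gamma_1}$ is essential: it forces the aligned and shadow projections of $\Gamma_1$ to induce the same relation at $m_1'$, so $\first{\Gamma_1}m_1'=\second{\Gamma_1}m_1'=m_2$.

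I would then chain three facts. (i) By Lemma~\ref{lem:consistency}, the transformed body $c'$ induces the same distribution on the base part of memory as $c$, and a source expression evaluates identically on a memory and any extension of it; writing $S_E\defn\{m'\in{\Store}'\mid\evalexpr{e}{m'}\in E\}$, this gives $P(M(a_1)\in E)=\evalexpr{c'}{m_1'}(S_E)$, where $M$ denotes the semantics of the source program $c;\outcmd{e}$. (ii) Theorem~\ref{thm:soundness} applied to the body with $S=S_E$ yields \eqref{thm:soundnessa} or \eqref{thm:soundnessb}; since $\first{\Gamma_1}m_1'=\second{\Gamma_1}m_1'=m_2$, both collapse to $\evalexpr{c'}{m_1'}(S_E)\le\exp(\priv)\,\evalexpr{c}{m_2}(\first{\Gamma_2}S_E)$. (iii) Because $\Gamma_2\proves e:\tyreal_{\pair 0 \dexpr}$ (or $:\bool$), the soundness of the expression typing rules w.r.t.\ the $\Gamma$-relation (Definition~\ref{def:relation}) gives $\evalexpr{e}{\first{\Gamma_2}m'}=\evalexpr{e}{m'}$ for all $m'$, so $\first{\Gamma_2}S_E\subseteq\{n\in\Store\mid\evalexpr{e}{n}\in E\}$, and monotonicity of the sub-distribution $\evalexpr{c}{m_2}$ gives $\evalexpr{c}{m_2}(\first{\Gamma_2}S_E)\le P(M(a_2)\in E)$. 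Composing (i)--(iii) yields $P(M(a_1)\in E)\le e^{\priv}\,P(M(a_2)\in E)$, which is $\priv$-differential privacy by Definition~\ref{def:diffpriv}.

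The typing/$\rightrightarrows$ inversions and the measure bookkeeping in (i)--(ii) --- the latter resting on the injectivity of $\first{\Gamma_2}$, which is already absorbed into Theorem~\ref{thm:soundness} via Lemma~\ref{lem:injective} --- are routine. I expect the main obstacle to be the second step: pinning down the interface between the semantic adjacency relation $\pre$ of Definition~\ref{def:diffpriv} and its syntactic encoding via the initial typing environment plus distance environment, and checking that this encoding always sits inside the global invariant $\pre$ assumed by every typing rule, since without this Theorem~\ref{thm:soundness} cannot be invoked. A lesser point of care is keeping the $\outcmd{e}$-wrapped programs in sync with the return-free Soundness statement; the aligned-distance-$0$ side condition from \ruleref{T-Return} is exactly what bridges that gap.
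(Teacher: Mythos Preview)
Your proposal is correct and follows essentially the same route as the paper's proof: invert the typing derivation to isolate $\bot\proves\Gamma_1\{c\transform c'\}\Gamma_2$ and the return type $\Gamma_2\proves e:\tyreal_{\pair 0 \dexpr}$ (or $\bool$), apply Theorem~\ref{thm:soundness} and collapse its two cases via the hypothesis $\first{\Gamma_1}=\second{\Gamma_1}$, use the aligned-distance-$0$ return to obtain $\first{\Gamma_2}S_E\subseteq S_V$, and close with Lemma~\ref{lem:consistency}. You are more explicit than the paper about the adjacency-encoding step (which the paper leaves implicit in ``by definition of privacy''), but the skeleton is the same.
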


\begin{proof}
By the typing rule, we have $\bot \proves \Gamma_1\set{c\transform
c'}\Gamma_2$.  By the soundness theorem (Theorem~\ref{thm:soundness}) and the
fact that $\first{\Gamma}_1=\second{\Gamma}_2$, we have 
$\evalexpr{c'}{m_1'}(S)\leq\exp(\priv)\evalexpr{c}{\aligndstore{\Gamma_1}{m_1'}}(\aligndstore{\Gamma_2}{S})$.
For clarity, we stress that all sets are over
distinct elements (as we have assumed throughout this paper). 

By rule~\ruleref{T-Return}, $\Gamma_2 \proves e : \tyreal_{\pair{0}{\dexpr}}$ or $\Gamma_2 \proves e : \bool$. For any set of values $V\subseteq \trans{\basety}$, let $S_V'=\{m'\in \Store' \mid \evalexpr{e}{m'} \in V\}$ and $S_V=\{m\in \Store \mid \evalexpr{e}{m} \in V\}$, then we have $\amem{\Gamma_2}S_V'\subseteq S_V$:
\begin{align*}
m\in \amem{\Gamma_2}{S_V'} &\Rightarrow m = \amem{\Gamma_2}{m'} \text{ for some } m'\in S_V\\
&\Rightarrow \evalexpr{e}{m} = \evalexpr{e}{\amem{\Gamma_2}{m'}} = \evalexpr{e}{m'}\in V \\
&\Rightarrow m\in S_V.
\end{align*}
The equality in second implication is due to the zero distance when $\Gamma_2 \proves e : \tyreal_{\pair{0}{\nexpr}}$, and rule~\ruleref{T-ODot} when $\Gamma_2 \proves e : \bool$. We note that $\amem{\Gamma_2}S_V'\not= S_V$ in general since $\first{\Gamma_2}$ might not be a surjection. Let $P'=(c';\outcmd~e)$, then for any $\gamma$, we have
\begin{align*}
\evalexpr{P'}{\extmemfull{m_1}{\gamma}{}}(V) &=  \evalexpr{c'}{\extmemfull{m_1}{\gamma}{}}(S_V') \\
                     &\leq  \exp(\priv) \evalexpr{c'}{\amem{\Gamma_1}{\extmemfull{m_1}{\gamma}{}}}(\amem{\Gamma_2}S_V') \\
                     &\leq  \exp(\priv) \evalexpr{c'}{\amem{\Gamma_1}{\extmemfull{m_1}{\gamma}{}}}(S_V) \\
                     &=\exp(\priv) \evalexpr{P}{\amem{\Gamma_1}{\extmemfull{m_1}{\gamma}{}}}(V).
\end{align*}
Finally, due to Lemma~\ref{lem:consistency},
$\evalexpr{P}{m_1}(V)=\evalexpr{P'}{\extmemfull{m_1}{\gamma}{}}(V)$.
Therefore, by definition of privacy $c$ is $\priv$-differentially private.
\end{proof}

Note that the shallow distances are only useful for proofs; they are irrelevant
to the differential privacy property being obeyed by a program. Hence,
initially, we have $\first{\Gamma}_1=\second{\Gamma}_1$ (both describing the
adjacency requirement) in Theorem 2, as well as in all of the examples formally
verified by \lang.

\section{Implementation and Evaluation}\label{sec:impexp}
\subsection{Implementation}\label{sec:implementation}

We have implemented \lang into a trans-compiler\footnote{Publicly available at \url{https://github.com/cmla-psu/shadowdp}.}  in Python. \lang currently
supports trans-compilation from annotated C code to target C code.
Its workflow includes two phases: \emph{transformation} and
\emph{verification}. The annotated source code will be checked and transformed
by \lang; the transformed code is further sent to a verifier.

\paragraph{Transformation} 
As explained in Section \ref{sec:typing}, \lang
tracks the typing environments in a flow-sensitive way, and
instruments corresponding statements when appropriate. Moreover, \lang adds an
assertion $\cod{assert}~(\vpriv{\priv} \le \priv)$ before the $\cod{return}$
command. This assertion specifies the final goal of proving differential
privacy.  The implementation follows the typing rules explained in Section~\ref{sec:typing}.

\paragraph{Verification}
The goal of verification is to prove the assertion $\cod{assert}~(\vpriv{\priv}
\le \priv)$ never fails for any possible inputs that satisfy the precondition
(i.e., the adjacency requirement). To demonstrate the usefulness of the transformed
programs, we use a model checker CPAChecker~\cite{beyer2011cpachecker}
v1.8. CPAChecker is capable of automatically verifying C program with a given
configuration. In our implementation, \emph{predicate analysis} is used. Also,
CPAChecker has multiple solver backends such as
MathSat~\cite{cimatti2013mathsat5}, Z3~\cite{z3} and
SMTInterpol~\cite{christ2012smtinterpol}. For the best
performance, we concurrently use different solvers and
return the results as soon as any one of them verifies the program.  

One limitation of CPAChecker and many other tools,
is the limited support for non-linear arithmetics. For programs with non-linear
arithmetics, we take two approaches. First, we verify the algorithm variants
where $\epsilon$ is fixed (the approach taken
in~\cite{Aws:synthesis}). In this case, all transformed code in our evaluation
is directly verified without any modification. Second, to verify the
correctness of algorithms with arbitrary $\epsilon$, we slightly rewrite the
non-linear part in a linear way or provide loop invariants (see Section
\ref{sec:difference_sparse_vector}). We report the results from both cases
whenever we encounter this issue.

\subsection{Case Studies}\label{sec:casestudy}

We investigate some interesting differentially private
algorithms that are formally verified by \lang. We only present the most
interesting programs in this section; the rest are provided in the \appendixref.

\subsubsection{Sparse Vector Technique}

Sparse Vector Technique~\cite{diffpbook} is a powerful mechanism which has been proven to satisfy $\priv$-differential privacy (its proof is notoriously tricky to write manually \cite{ninghuisparse}). In this section we show how \lang verifies this algorithm and later show how a novel variant is verified.

\begin{figure}[!ht]
\small
\setstretch{0.8}
\raggedright
\noindent\rule{\linewidth}{0.8pt}
\funsigfour{SVT}{$\priv$, $\cod{size}$, $\cod{T}$, $\cod{N}\annotation{:\tyreal_{\pair 0 0}}$; $\cod{q}\annotation{:\tylist~\tyreal_{\pair * *}}$}{($\cod{out}\annotation{:\tylist~\bool}$)}{$\alldiffer$}
\algrule
\begin{lstlisting}[frame=none,escapechar=@]
$\eta_1$ := $\lapm{(2/\priv)}\sampleannotation{\alignd}{1}$;
$\tT$ := T + $\eta_1$; count := 0; i := 0;
while (count < N $\land$ i < size)
  $\eta_2$ := $\lapm{(4N/\priv)}\sampleannotation{\alignd}{\Omega\mathbin{?}2:0}$;@\label{line:svt_distance}@
  if (q[i] + $\eta_2\geq\tT$) then
    out := true::out;@\label{line:svt_output}@
    count := count + 1;
  else
    out := false::out;
  i := i + 1;
\end{lstlisting}
The transformed program (slightly simplified for readability), where underlined commands are added by the type system:
\begin{lstlisting}[frame=none]
$\instrument{\vpriv{\epsilon}\text{ := 0;}}$
$\instrument{\havoc{\eta_1}\text{; }\vpriv{\epsilon}\text{ := }\vpriv{\epsilon}\text{ + }\priv/2\text{;}}$
$\tT$ := T + $\eta_1$; count:= 0; i := 0;
while (count < N $\land$ i < size)
  $\instrument{\assertcmd(\text{count < N} \land \text{i < size})\text{;}}$
  $\instrument{\havoc{\eta_2}\text{; }\vpriv{\priv}=\Omega\mathbin{?}(\vpriv{\priv} + 2 \times \priv / 4N):(\vpriv{\priv}\text{ + 0})\text{;}}$
  if (q[i] + $\eta_2\geq\tT$) then
    $\instrument{\assertcmd(\text{q[i]} + \first{\distance{\text{q}}}\text{[i]}+\eta_2+2\geq\tT +1)\text{;}}$
    out := true::out;
    count := count + 1;
  else
    $\instrument{\assertcmd(\lnot(\text{q[i]} + \first{\distance{\text{q}}}\text{[i]}+\eta_2 \ge \tT + 1 ))\text{;}}$
    out := false::out;
  i := i + 1;
\end{lstlisting}
\noindent\rule{\linewidth}{0.8pt}
\caption{Verifying Sparse Vector Technique with \lang (slightly simplified for readability). Annotations are in gray where $\Omega$ represents the branch condition.}
\label{alg:sparsevector}
\end{figure}

Figure~\ref{alg:sparsevector} shows the pseudo code of Sparse Vector Technique~\cite{diffpbook}. It examines the input queries and reports whether each query is above or below a threshold $T$. \begin{savenotes}
\begin{table*}[!ht]
\caption{Time spent on type checking and verification\label{tab:experiment_time}}
\vspace{-0.75em}
\small
\begin{center}
\begin{tabular}{c c c c c}
\Xhline{2\arrayrulewidth}
\textbf{Algorithm} & \textbf{Type Check (s)}   & \multicolumn{2}{c}{\textbf{Verification by \lang (s)}} & \textbf{Verification by~\cite{Aws:synthesis} (s)} \\
\hline
Report Noisy Max & 0.465 & \multicolumn{2}{c}{1.932} & 22\\
Sparse Vector Technique ($N=1$) & 0.398 & \multicolumn{2}{c}{1.856} & 27\\
\cline{3-4} & 
& \textbf{Rewrite} & \textbf{Fix $\epsilon$} &  \\
\cline{3-4}
Sparse Vector Technique & 0.399 & 2.629 & 1.679 & 580 \\
Numerical Sparse Vector Technique ($N=1$) & 0.418 & 1.783 & 1.788 & 4\\
Numerical Sparse Vector Technique & 0.421 & 2.584 & 1.662 & 5\\
Gap Sparse Vector Technique & 0.424 & 2.494 & 1.826 & N/A\\
Partial Sum & 0.445 & 1.922 & 1.897 & 14\\
Prefix Sum & 0.449 & 1.903 & 1.825 & 14\\
Smart Sum & 0.603 & 2.603 & 2.455 & 255\\
\Xhline{2\arrayrulewidth}
\end{tabular}
\end{center}
\end{table*}
\end{savenotes}To achieve differential privacy, it first adds Laplace noise to the threshold $T$, compares the noisy query answer $q[i]+\eta_2$ with the noisy threshold $\tT$, and returns the result ($\true$ or $\false$). The number of $\true$'s the algorithm can output is bounded by argument $N$. One key observation is that once the noise has been added to the threshold, outputting $\false$ pays no privacy cost~\cite{diffpbook}. As shown in Figure~\ref{alg:sparsevector}, programmers only have to provide two simple annotations: $\alignd\mathbin{,}1$ for $\eta_1$ and $\alignd\mathbin{,}\Omega\mathbin{?}2:0$ for $\eta_2$. Since the selectors in this example only select aligned version of variables, the shadow execution is optimized away (controlled by $pc$ in rule~\ruleref{T-If}). \lang successfully type checks and transforms this algorithm. However, due to a nonlinear loop invariant that CPAChecker fails to infer, it fails to verify the program. With the loop invariant provided manually, the verification succeeds, proving this algorithm satisfies $\priv$-differential privacy (we also verified a variant where $\epsilon$ is fixed to $N$ to remove the non-linearity).

\subsubsection{Gap Sparse Vector Technique}
\label{sec:difference_sparse_vector}
We now consider a novel variant of Sparse Vector Technique. In this variant, whenever $\mathtt{q[i]} + \eta_2 \geq \tT$, it outputs the value of the gap $\mathtt{q[i]} + \eta_2-\tT$ (how much larger the noisy answer is compared to the noisy threshold). Note that the noisy query value $\mathtt{q[i]} + \eta_2$ is reused for both this check and the output (whereas other proposals either (1) draw fresh noise and result in a larger $\epsilon$ \cite{diffpbook}, or (2) re-use the noise but do not satisfy differential privacy, as noted in \cite{ninghuisparse}). For noisy query values below the noisy threshold, it only outputs $\false$. We call this algorithm GapSparseVector. More specifically, Line~\ref{line:svt_output} in Figure~\ref{alg:sparsevector} is changed from \texttt{out := true::out;} to the following: \texttt{out := (q[i] + $\eta_2$ - $\tT$)::out;}. To the best of our knowledge, the correctness of this variant has not been noticed before.
This variant can be easily verified with little changes to the original annotation. One observation is that, to align the \texttt{out} variable, the gap appended to the list must have $0$ aligned distance. Thus we change the distance of $\eta_2$ from $\Omega\mathbin{?}2\mathbin{:}0$ to $\Omega\mathbin{?}(1-\first{\distance{\cod{q}}}\cod{[i]}):0$, the other part of the annotation remains the same.

\lang successfully type checks and transforms the program. Due to the non-linear arithmetics issue, we
rewrite the assignment command $\vpriv{\priv}\text{  := } \vpriv{\priv} + (1-\first{\distance{\cod{q}}}\cod{[i]})
\times \priv / 4N\texttt{;}$ to
$\assertcmd(|1-\first{\distance{\cod{q}}}\cod{[i]}| \le
2)\text{;}~\vpriv{\priv}\text{ := }\vpriv{\priv} + 2\times\priv / 4N\texttt{;}$ and provide nonlinear loop invariants; then it is verified  (we also verified a variant where $\epsilon$ is fixed to 1).

\subsection{Experiments}
\label{sec:evaluation}

\lang is evaluated on Report Noisy Max algorithm
(Figure~\ref{alg:transformed_noisymax}) along with all the algorithms discussed in Section~\ref{sec:casestudy}, as well as Partial Sum, Prefix Sum and Smart Sum
algorithms that are included in the \appendixref. For comparison, all the
algorithms verified in~\cite{Aws:synthesis} are included in the experiments (where Sparse Vector Technique is called Above Threshold in~\cite{Aws:synthesis}). One exception is \texttt{ExpMech} algorithm, since \lang currently lacks a sampling
command for \texttt{Exponential} noise. However, as shown in~\cite{lightdp}, it
should be fairly easy to add a noise distribution without affecting the
rest of a type system. 

Experiments are
performed on a Dual $\text{Intel}^{\text{\textregistered}}$
$\text{Xeon}^{\text{\textregistered}}$ E5-2620 v4@2.10GHz CPU machine with 64
GB memory. All algorithms are successfully checked and transformed by \lang and
verified by CPAChecker. For programs with non-linear arithmetics, we performed experiments on both solutions discussed in
Section~\ref{sec:difference_sparse_vector}. Transformation and verification
all finish within 3 seconds, as shown in Table \ref{tab:experiment_time},
indicating the simplicity of analyzing the transformed program, as well as the
practicality of verifying $\priv$-differentially private algorithms with \lang.

\subsection{Proof Automation}
\label{sec:automation}
\lang requires two kinds of annotations: (1) function specification and (2)
annotation for sampling commands. As most verification tools, (1) is
required since it specifies the property being verified. In all of our verified
examples, (2) is fairly simple and easy to write. To further improve the
usability of \lang, we discuss some heuristics to automatically generate the
annotations for sampling commands. 

Sampling commands requires two parts of annotation:
\begin{enumerate}
\item \textbf{Selectors}. The selector has two options: aligned ($\alignd$) or shadow ($\shadow$), with potential dependence. The heuristic is to enumerate branch conditions. For Report Noisy Max, there is only one branch condition $\Omega$, giving us four possibilities: $\alignd\ /\ \shadow\ /\  \Omega\mathbin{?}\alignd\mathbin{:}\shadow\ /\  \Omega\mathbin{?}\shadow\mathbin{:}\alignd$.

\item \textbf{Alignments for the sample}. 
It is often simple arithmetic on a small integer such as 0, 1, 2 or the exact difference of query answers and other program variables. For dependent types, we can also use the heuristic of using branch conditions. For Report Noisy Max, this will discover the correct alignment $\Omega\mathbin{?}2\mathbin{:}0$.
\end{enumerate}
This enables the discovery of all the correct annotations for the algorithm
studied in this paper. We leave a systematic study of proof automation as
future work.

\section{Related Work}\label{sec:related}
\paragraph{Randomness alignment based proofs} 
The most related work is LightDP~\cite{lightdp}. \lang is inspired by LightDP
in a few aspects, but also with three significant differences. First, \lang
supports shadow execution, a key enabling technique for the verification of
Report Noisy Max based on standard program semantics. Second, while LightDP has
a flow-insensitive type system, \lang is equipped with a flow-sensitive one.
The benefit is that the resulting type system is both more expressive and more
usable, since only sampling command need annotations. Third, \lang
allows extra permissiveness of allowing two related executions to take
different branches, which is also crucial in verifying 
Report Noisy Max.  In fact, \lang is strictly more expressive than LightDP: LightDP is a restricted form of \lang where the shadow execution is never used (i.e., when the selector always picks the aligned execution).

\paragraph{Coupling based proofs} 
The state-of-the-art verifier based on approximate
coupling~\cite{Aws:synthesis} is also able to verify the algorithms we have
discussed in this paper. Notably, it is able to automatically verify proofs for
algorithms including Report-Noisy-Max and Sparse Vector. However, verifying the
transformed program by \lang is significantly easier than verifying the
first-order Horn clauses and probabilistic constraints generated by their tool.
In fact, \lang verifies all algorithms within 3 seconds while the coupling
verifier takes 255 seconds in verifying Smart Sum and 580 seconds in verifying
Sparse Vector (excluding proof synthesis time). Also, instead of building the
system on \emph{customized} relational logics to verify differential
privacy~\cite{Barthe12,EasyCrypt,BartheICALP2013,Barthe16,BartheCCS16}, \lang
bases itself on \emph{standard} program logics, which makes the transformed
program re-usable by other program analyses. 

\paragraph{Other language-based proofs}
Recent work such as Personalized Differential Privacy
(PDP)~\cite{EbadiPOPL2015} allows each individual to set its own different
privacy level and PDP will satisfy difference privacy regarding the level she
sets. PINQ~\cite{pinq} tracks privacy consumption dynamically on databases and
terminate when the privacy budget is exhausted. However, along with other work
such as computing bisimulations families for probabilistic
automata~\cite{Tschantz11,Xu2014}, they fail to provide a tight bound on the
privacy cost of sophisticated algorithms.
\vspace{-1em}

\section{Conclusions and Future Work}\label{sec:conclusions}
In this paper we presented \lang, a new language for the verification of differential privacy algorithms. \lang uses shadow execution to generate more flexible randomness alignments that allows it to verify more algorithms, such as Report Noisy Max, than previous work based on randomness alignments. We also used it to verify a novel variant of Sparse Vector that reports the gap between noisy above-threshold queries and the noisy threshold. 

Although \lang only involves minimum annotations, one future work is to fully
automate the verification using \lang, as sketched in
Section~\ref{sec:automation}. Another natural next step is to extend \lang to
support more noise distributions, enabling it to verify more algorithms such as
\texttt{ExpMech} which uses \texttt{Exponential} noise. Furthermore, we plan 
to investigate other applications of the transformed program. For instance,
applying symbolic executors and bug finding tools on the transformed program to
construct counterexamples when the original program is buggy.

\ifblinded
\else
\section*{Acknowledgments}
We thank our shepherd Dana Drachsler-Cohen and anonymous PLDI reviewers for their helpful
suggestions. This work is funded by NSF awards \#1228669, \#1702760, \#1816282
and \#1566411.
\fi

\bibliographystyle{ACM-Reference-Format}
\iffullversion
\balance
\fi
\bibliography{diffpriv}
\iffullversion
\clearpage
\appendix
\nobalance
\addcontentsline{toc}{section}{Appendices}
\section*{Appendix}

\section{\lang Semantics}

\begin{figure}[ht]
\setstretch{0.9}
\begin{align*}
\trans{\skipcmd}_\store &= \unitop~m \\
\trans{x:=e}_\store &= \unitop~(\subst{m}{x}{\evalexpr{e}{m}}) \\
\trans{\eta:=g,\select,\nexpr_{\eta}}_\store &= \bindop~\evalexpr{g}{}~(\lambda v.~\unitop~\subst{m}{\eta }{v}) \\
\trans{c_1;c_2}_\store &= \bindop~(\evalexpr{c_1}{\store})~\trans{c_2}  \\
\trans{\ifcmd{e}{c_1}{c_2}}_\store &= 
      \begin{cases} 
      \trans{c_1}_\store &\mbox{if } \evalexpr{e}{\store} = \true \\
      \trans{c_2}_\store &\mbox{if } \evalexpr{e}{\store} = \false \\
      \end{cases} \\
\trans{\whilecmd{e}{c}}_\store &= w^*~m\\
\text{where } w^*              &= fix (\lambda f.~\lambda m. \cod{if}~\evalexpr{e}{m}=\true\\
                               \cod{then}~&(\bindop~\evalexpr{c}{m}~f)~\cod{else}~{(\unitop~m)}) \\
\trans{c;\outcmd{e}}_\store &= \bindop~(\evalexpr{c}{\store})~(\lambda m'.~\unitop~\evalexpr{e}{m'})
\end{align*}
\caption{\lang: language semantics.}
\label{fig:semantics}
\end{figure}

Given a distribution $\mu \in \dist(A)$, its support is defined as
$\support(\mu)\defn \{a\mid \mu(a)>0\}$.  We use $\dgdist_a$ to represent the
degenerate distribution $\mu$ that $\mu(a)=1$ and $\mu(a')=0$ if $a'\not=a$.
Moreover, we define monadic functions $\unitop$ and $\bindop$ functions to
formalize the semantics for commands:
\begin{align*}
\unitop &: A \rightarrow \dist(A) \defn \lambda a.~\dgdist_a \\ 
\bindop &: \dist(A)\rightarrow(A \rightarrow \dist(B))\rightarrow\dist(B)  \\
        &\defn \lambda \mu.~\lambda f.~(\lambda b.~\sum_{a\in A} (f~a~b)\times \mu(a))
\end{align*}
That is, $\unitop$ takes an element in $A$ and returns the Dirac distribution
where all mass is assigned to $a$; $\bindop$ takes $\mu$, a distribution on
$A$, and $f$, a mapping from $A$ to distributions on $B$ (e.g., a conditional
distribution of $B$ given $A$), and returns the corresponding marginal
distribution on $B$. This monadic view avoids cluttered definitions and proofs
when probabilistic programs are involved. 

Figure~\ref{fig:semantics} provides the semantics of commands.

\section{Constructing Shadow Execution}
\begin{figure}[ht]
\setstretch{0.85}
\raggedright
\begin{mathpar}
\starexec{\real,\Gamma} = \real  
\quad
\starexec{\true,\Gamma} = \true
\quad
\starexec{\false,\Gamma} = \false
\and 
\starexec{x,\Gamma} = 
\begin{cases}
x+\third\nexpr &\text{, else if } \Gamma \proves x : \tyreal_{\pair {\first\nexpr} {\second\nexpr}} \cr
x &\text{, else } 
\end{cases}
\and
\starexec{e_1~\op~e_2,\Gamma} = \starexec{e_1,\Gamma}~\op~\starexec{e_2,\Gamma}
\text{ where } \op = \oplus\cup \otimes \cup \odot
\\
\starexec{e_1[e_2],\Gamma} =~
\begin{cases}
e_1[e_2]+\third{\distance{e_1}}[e_2] &\text{, if } \third{\Gamma} \proves e_1 : \tylist~\tyreal_* \cr
e_1[e_2] + \third{\nexpr} &\text{, else if } \third{\Gamma} \proves e_1 : \tylist~\tyreal_{\third{\nexpr}} \cr
e_1[e_2] &\text{, else} 
\end{cases}
\and 
\starexec{e_1::e_2,\Gamma} = 
\starexec{e_1,\Gamma}::\starexec{e_2,\Gamma} 
\and
\starexec{\neg e,\Gamma} = \neg \starexec{e, \Gamma}
\and
\starexec{e_1\mathbin{?}e_2:e_3,\Gamma} = \starexec{e_1}\mathbin{?}\starexec{e_2,\Gamma}:\starexec{e_3,\Gamma}
\end{mathpar}
\caption{Transformation of numerical expressions for aligned and shadow execution, where $\star \in \{\alignd, \shadow\}$.}
\label{fig:shadowrules_expression}
\end{figure}

\begin{figure}[ht]
\setstretch{0.85}
\raggedright
\begin{mathpar}
\shadowexec{\skipcmd,\Gamma} = \skipcmd 
\and
\inferrule{
\shadowexec{c_1;\Gamma} = c_1'
\quad 
\shadowexec{c_2;\Gamma} = c_2'
}{
\shadowexec{c_1;c_2,\Gamma} = c_1';c_2'
}
\and
\shadowexec{x:=e,\Gamma} = ({\second{\distance{x}}} := \shadowexec{e,\Gamma}-x)
\and
\inferrule{
\shadowexec{c_i, \Gamma} = c_i' \quad i \in \{1,2\}
}
{
\shadowexec{\ifcmd{e}{c_1}{c_2},\Gamma} = \ifcmd{\shadowexec{e,\Gamma}}{c_1'}{c_2'}}
\and
\inferrule{
\shadowexec{c,\Gamma}= c'
}{\shadowexec{\whilecmd{e}{c},\Gamma} = \whilecmd{\shadowexec{e,\Gamma}}{c'}}
\end{mathpar}
\caption{Shadow execution for commands.}
\label{fig:shadowrules}
\end{figure}

Figure~\ref{fig:shadowrules_expression} shows the rule for generating shadow execution expressions as well as aligned execution expressions. Figure~\ref{fig:shadowrules} shows the rules for generating shadow execution
commands. The shadow execution essentially replaces each variables $x$ with
their correspondence (i.e., $x+\Gamma(x)$) in $c$, as standard in
self-composition construction~\cite{barthe2004, terauchi2005}. Compared with
standard self-composition, the differences are:
\begin{enumerate}[leftmargin=5mm]
\item $\shadowexec{c,\Gamma}$ is not applicable to sampling commands, since if
the original execution takes a sample while the shadow execution does not, we
are unable to align the sample variable due to different probabilities.  

\item For convenience, we use $x + \second{\nexpr}$ where $\Gamma  \proves  x :
\pair {\first\nexpr} {\second\nexpr}$ whenever the shadow value of $x$ is used;
correspondingly, we update $\second{\distance{x}}$ to $v-x$ instead of updating
the shadow value of $x$ to some value $v$.
\end{enumerate}

\section{Extra Case Studies}
In this section, we study extra differentially private algorithms to show the
power of \lang. As Section~\ref{sec:casestudy}, the shadow execution part is
optimized away when the selectors never use the shadow variables.

\subsection{Numerical Sparse Vector Technique}
Numerical Sparse Vector Technique~\cite{diffpbook} is an interesting variant of
Sparse Vector Technique which outputs numerical query answers when the query
answer is large. To achieve differential privacy, like Sparse Vector Technique,
it adds noise to the threshold $T$ and each query answer $\mathtt{q[i]}$; it
then tests if the noisy query answer is above the noisy threshold or not. The
difference is that Numerical Sparse Vector Technique draws a fresh noise
$\eta_3$ when the noisy query answer is above the noisy threshold, and then
releases $\mathtt{q[i]} + \eta_3$ instead of simply releasing $\true$. The
pseudo code for this algorithm is shown in Figure~\ref{alg:numsparsevector}.

In this algorithm, \lang needs an extra annotation for the new sampling command
of $\eta_3$. We use the same approach in Gap Sparse Vector
Technique for this new sampling command. Recall the observation that we want the
final output variable \texttt{out} to have distance $\pair 0 -$, which implies
that the numerical query $\mathtt{q[i]} + \eta_3$ should have distance $\pair 0
-$. We can deduce that $\eta_3$ must have distance
$-\first{\distance{\cod{q}}}\cod{[i]}$ inside the branch, thus we write
``$\alignd, -\first{\distance{\cod{q}}}\cod{[i]}$'' for $\eta_3$. The rest of
the annotations remain the same as standard Sparse Vector Technique. 

\begin{figure}[ht]
\small
\setstretch{0.8}
\raggedright
\noindent\rule{\linewidth}{0.8pt}
\funsigfour{NumSVT}{$\priv$, $\cod{size}$, $\cod{T}$, $\cod{N}$, \annotation{:\tyreal_{\pair 0 0}}; $\cod{q}\annotation{:\tylist~\tyreal_{\pair * *}}$}{($\cod{out}\annotation{:\tylist~\tyreal_{\pair 0 -}}$)}{$\alldiffer$}
\algrule
\begin{lstlisting}[frame=none,escapechar=@]
$\eta_1$ := $\lapm{(3/\priv)}\sampleannotation{\alignd}{1}$;
$\tT$ := $T + \eta_1$; count := 0; i := 0;
while (count < N $\land$ i < size)
  $\eta_2$ := $\lapm{(6N/\priv)}\sampleannotation{\alignd}{\Omega\mathbin{?}2\mathbin{:}0}$;
  if (q[i] + $\eta_2\geq\tT$) then
    $\eta_3$ := $\lapm{(3N/\priv)}\sampleannotation{\alignd}{-\first{\distance{\text{q}}}\text{[i]}}$;@\label{line:numsvt_eta_3}@
    out := (q[i] + $\eta_3$)::out;
    count := count + 1;
  else
    out := 0::out;
  i := i + 1;
\end{lstlisting}
The transformed program (slightly simplified for readability), where underlined commands are added by the type system:
\begin{lstlisting}[frame=none]
$\instrument{\vpriv{\epsilon}\text{ := 0;}}$
$\instrument{\havoc{\eta_1}\text{;}~\vpriv{\epsilon} := \vpriv{\epsilon} + \priv/3\text{;}}$
$\tT := T + \eta_1\text{;}$
count := 0; i := 0;
while (count < N $\land$ i < size)
  $\instrument{\assertcmd(\text{count < N}~\land~\text{i < size})\text{;}}$
  $\instrument{\havoc{\eta_2}\text{;}~\vpriv{\priv}=\Omega\mathbin{?}(\vpriv{\priv} + 2 \times \priv / 6N):(\vpriv{\priv}\text{ + 0})\text{;}}$
  if ($\text{q[i]}+\eta_2\geq \tT$) then
    $\instrument{\assertcmd(\text{q[i] + } \first{\distance{\text{q}}}\text{[i]}+\eta_2+2\geq\tT +1)\text{;}}$
    $\instrument{\havoc{\eta_3}\text{;}~\vpriv{\priv}=\vpriv{\priv}+|-\first{\distance{\text{q}}}\text{[i]}|\times \priv/3N\text{;}}$
    out := (q[i] + $\eta_3$)::out;
    count := count + 1;
  else
    $\instrument{\assertcmd(\lnot(\text{q[i]} + \first{\distance{\text{q}}}\text{[i]}+\eta_2 \ge \tT +1 ))\text{;}}$
    out := 0::out;
  i := i + 1;
\end{lstlisting}
\noindent\rule{\linewidth}{0.8pt}
\caption{Verifying Numerical Sparse Vector Technique with \lang. Annotations are in gray where $\Omega$ represents the branch condition.}
\label{alg:numsparsevector}
\end{figure}

For the non-linear issue of the verifier, we rewrite the privacy cost assignment from 
$$\vpriv{\priv} = \vpriv{\priv} + |-\first{\distance{\cod{q}}}\cod{[i]}| \times \priv / 3;$$
$$\text{to}\quad\assertcmd(|-\first{\distance{\cod{q}}}\cod{[i]}| \le 1)\text{;}\ \vpriv{\priv} = \vpriv{\priv} + \priv / 3\text{;}$$

With manual loop invariants provided, CPAChecker successfully verified the rewritten program.

\subsection{Partial Sum}

We now study an $\priv$-differentially private algorithm PartialSum
(Figure~\ref{alg:partialsum}) which simply sums over a list of queries. To
achieve differential privacy, it adds noise using Laplace mechanism to the
final sum and output the noisy sum. One difference from the examples in
Section~\ref{sec:casestudy} is the adjacency assumption: at most one query
answer may differ by 1 as specified in the precondition.

In this example, since the noise is added to the final \texttt{sum}, it makes no difference if we choose the aligned version or shadow version of normal variables (they are both identical to the original execution). To decide the distance of the random variable, we want the final output to have distance $\pair 0 -$, it is easy to deduce that the distance of $\eta$ should be $-\first{\distance{\cod{sum}}}$. Adding the annotation $\alignd,\ -\first{\distance{\cod{sum}}}$ to Line~\ref{line:partialsum_eta} in Figure~\ref{alg:partialsum}, \lang successfully transforms and verifies the program. Note that since the cost update command $\vpriv{\priv} := \vpriv{\priv} + |\distance{\cod{sum}}| \times \priv\text{;}$ contains non-linear arithmetic, we carefully rewrite this command to $
\cod{assert}~(|\distance{\cod{sum}}| \le 1);\ \vpriv{\priv} := \vpriv{\priv} + \priv;$. \lang is able to type check and verify this algorithm within seconds.

\begin{figure}[ht]
\raggedright
\setstretch{0.85}
\small
\noindent\rule{\linewidth}{2\arrayrulewidth}
\funsigfour{PartialSum}{$\priv,\cod{size}\annotation{:\tyreal_{\pair 0 0}}; \cod{q}\annotation{:{\tylist~\tyreal_{\pair * *}}} $}{($\cod{out}\annotation{:\tyreal_{\pair 0 -}}$)}{$\onediffer$}
\algrule
\begin{lstlisting}[frame=none,escapechar=@]
sum := 0; i := 0;
while (i < size)
  sum := sum + q[i];
  i := i + 1;
$\eta$ = $\lapm{(1/\priv)}\sampleannotation{\alignd}{-\first{\distance{\text{sum}}}}$;@\label{line:partialsum_eta}@
out := sum + $\eta$;
\end{lstlisting}
The transformed program, where underlined commands are added by the type system:
\begin{lstlisting}[frame=none]
sum := 0; i := 0;
$\instrument{\first{\distance{\text{sum}}}\text{ := 0;}}$
while (i < size)
  $\instrument{\assertcmd(\text{i} < \text{size})\text{;}}$
  sum := sum + q[i];
  $\instrument{\first{\distance{\text{sum}}}\text{ := }\first{\distance{\text{sum}}}\text{ + } \first{\distance{\text{q}}}\text{[i]}\text{;}}$
  i := i + 1;
$\instrument{\havoc{\eta}\text{;}~\vpriv{\priv}\text{ := }\vpriv{\priv} + |\first{\distance{\text{sum}}}|\times\priv\text{;}}$
out := sum + $\eta$;
\end{lstlisting}
\noindent\rule{\linewidth}{2\arrayrulewidth}
\caption{Verifying Partial Sum using \lang. Annotations are shown in gray.}
\label{alg:partialsum}
\end{figure}

\subsection{Smart Sum and Prefix Sum}
Another interesting algorithm SmartSum~\cite{chan10continual} has been previously verified~\cite{Barthe12,Barthe14} with heavy annotations. We illustrate the power of our type system by showing that this algorithm can be verified with very little annotation burden for the programmers. 

\begin{figure}[ht]
\raggedright
\setstretch{0.85}
\small
\noindent\rule{\linewidth}{2\arrayrulewidth}
\funsigfour{SmartSum}{$\priv$, $\cod{M}$, $\cod{T}\annotation{:\tyreal_{\pair 0 0}}$ $\cod{q}\annotation{:\tylist~\tyreal_{\pair * *}}$}{($\cod{out}\annotation{:\tylist~\tyreal_{\pair 0 -}}$)}{$\onediffer$}
\algrule
\begin{lstlisting}[frame=none, escapechar=@]
next:= 0; i:= 0; sum:= 0;
while i $\leq$ T
  if (i + 1) mod M = 0 then @\label{line:smartsum_if}@
    $\eta_1$ := $\lapm{(1/\priv)}\sampleannotation{\alignd}{-\first{\distance{\text{sum}}}-\first{\distance{\text{q}}}\text{[i]}}$;@\label{line:smartsum_eta1}@
    next:= sum + q[i] + $\eta_1$;
    sum := 0;
    out := next::out; 
  else @\label{line:smartsum_else}@
    $\eta_2$ := $\lapm{(1/\priv)}\sampleannotation{\alignd}{-\first{\distance{\text{q}}}\text{[i]}}$;@\label{line:smartsum_eta2}@
    next:= next + q[i] + $\eta_2$;
    sum := sum + q[i];
    out := next::out;
  i := i + 1;
\end{lstlisting}
The transformed program:
\begin{lstlisting}[frame=none, escapechar=@]
next:=0; n:=0; i:=0; sum := 0;
$\instrument{\first{\distance{\text{sum}}}\text{ := 0;}}$
while (i < size $\land$ i $\leq$ T )
  $\instrument{\assertcmd(\text{i < size}~\land~ \text{i} \leq \text{T})\text{;}}$
  if (i + 1) mod M = 0 then
    $\instrument{\havoc{\eta_1}\text{;}~\vpriv{\priv}=\vpriv{\priv}\ \text{+}\ |-\first{\distance{\text{sum}}}-\first{\distance{\text{q}}}\text{[i]}|\times\priv\text{;}}$@\label{line:smartsum_transformed_eta1}@
    next:= sum + q[i] + $\eta_1$;
    sum := 0;
    out := next::out; 
    $\instrument{\first{\distance{\text{sum}}}\text{ := 0;}}$
  else
    $\instrument{\havoc{\eta_2}\text{;}~\vpriv{\priv}=\vpriv{\priv}\ \text{+}\ |-\first{\distance{\text{q}}}\text{[i]}|\times\priv\text{;}}$@\label{line:smartsum_transformed_eta2}@
    next:= next + q[i] + $\eta_2$;
    sum := sum + q[i];
    out := next::out;
    $\instrument{\first{\distance{\text{sum}}}\text{ := }\first{\distance{\text{sum}}} +\first{\distance{\text{q}}}\text{[i];}}$
  i := i + 1;
\end{lstlisting}
\noindent\rule{\linewidth}{2\arrayrulewidth}
\caption{Verifying SmartSum algorithm with \lang. Annotations are shown in gray.}
\label{alg:smartsum}
\end{figure}

This algorithm (Figure~\ref{alg:smartsum}) is designed to continually release aggregate statistics in a privacy-preserving manner. The annotations are only needed on Line~\ref{line:smartsum_eta1} and \ref{line:smartsum_eta2} in Figure~\ref{alg:smartsum}. We use the same observation as stated in Section~\ref{sec:difference_sparse_vector}, in order to make the aligned distance of the output variable \texttt{out} 0. To do that, we assign distance $-\first{\distance{\cod{sum}}}-\first{\distance{\cod{q}}}\cod{[i]}$ to $\eta_1$ and $-\first{\distance{\cod{q}}}\cod{[i]}$ to $\eta_2$, and use $\alignd$ for both random variables. This algorithm is successfully transformed by \lang. However, due to the non-linear issue described in Section~\ref{sec:implementation}, we change the commands of Line~\ref{line:smartsum_transformed_eta1} and Line~\ref{line:smartsum_transformed_eta2} from 
\begin{align}
\vpriv{\priv} &:= \vpriv{\priv} + |-\first{\distance{\cod{sum}}}-\first{\distance{\cod{q}}}\cod{[i]}|\times \priv\texttt{;}\nonumber \\
\vpriv{\priv} &:= \vpriv{\priv} + |-\first{\distance{\cod{q}}}\cod{[i]}|\times \priv\texttt{;} \nonumber 
\end{align}
to

\begin{lstlisting}[frame=none, numbers=none]
if ($|-\first{\distance{\cod{sum}}} - \first{\distance{\cod{q}}}\cod{[i]}| > 0$) 
   $\assertcmd(|-\first{\distance{\cod{sum}}} - \first{\distance{\cod{q}}}\cod{[i]}|\le 1)\texttt{;}~\vpriv{\priv} := \vpriv{\priv} + \priv$;
if ($|-\first{\distance{\cod{q}}}\cod{[i]}| > 0$)
   $\assertcmd(|-\first{\distance{\cod{q}}}\cod{[i]}| \le 1)\texttt{;}~\vpriv{\priv} := \vpriv{\priv} + \priv$;
\end{lstlisting}

Moreover, one difference of this algorithm is that it satisfies $2\priv$-differential privacy~\cite{chan10continual} instead of $\priv$-differential privacy, thus the last assertion added to the program is changed to $\cod{assert}~(\vpriv{\priv} \le 2\times\priv)\cod{;}$. With this CPAChecker is able to verify this algorithm.

We also verified a variant of Smart Sum, called Prefix Sum
algorithm~\cite{Aws:synthesis}. This algorithm is a simplified and less precise
version of Smart Sum, where the else branch is always taken. More specifically,
we can get Prefix Sum by removing Lines~\ref{line:smartsum_if} -
\ref{line:smartsum_else} from Figure~\ref{alg:smartsum}. The annotation remains
the same for $\eta_2$, type checking and transformation follows Smart Sum. Note
that Prefix Sum satisfies $\epsilon$-differential privacy, so the last
assertion remains unchanged.  CPAChecker then verifies the transformed Prefix
Sum within 2 seconds.

\section{Soundness Proof}

We first prove a couple of useful lemmas. Following the same notations as in
Section \ref{sec:soundness}, we use $m$ for original memory and $m'$ for
extended memory with aligned and shadow variables but not the distinguished
privacy tracking variable $\vpriv{\priv}$. Moreover, we assume that memory
tracks the entire list of sampled random values at each point.

\begin{lemma}[Numerical Expression]
\label{lem:aexpr}
$\forall e,m', \Gamma$, such that $\Gamma \proves e : \tyreal_{\pair{\first{\nexpr}}{\second{\nexpr}}}$,
we have
\[
\evalexpr{e}{\store'} + \evalexpr{\first{\nexpr}}{\store'} = \evalexpr{e}{\amem{\Gamma}{m'}},\]
\[
\evalexpr{e}{\store'} + \evalexpr{\second{\nexpr}}{\store'} = \evalexpr{e}{\smem{\Gamma}{m'}}.
\]
\end{lemma}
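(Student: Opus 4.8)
The plan is to prove a slightly strengthened statement by structural induction on the typing derivation of $e$, handling all base types simultaneously. Alongside the numerical claim stated, the induction hypothesis will also assert that when $\Gamma \proves e : \bool$ both $\evalexpr{e}{m'} = \evalexpr{e}{\amem{\Gamma}{m'}}$ and $\evalexpr{e}{m'} = \evalexpr{e}{\smem{\Gamma}{m'}}$ hold, and that when $\Gamma \proves e : \tylist~\tau$ the two list values agree entry by entry under the correspondence induced by $\tau$. This generalization is forced by rules \ruleref{T-Ternary} and \ruleref{T-Neg}, whose guards and arguments are booleans, and by \ruleref{T-Cons}/\ruleref{T-Index}: without the boolean clause the ternary case cannot conclude that the same branch is selected in $m'$ and in $\amem{\Gamma}{m'}$. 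Throughout I assume $m'$ satisfies the global invariant $\pre$ maintained by the typing rules, since \ruleref{T-ODot} is the only case whose correctness depends on it. By symmetry of $\alignd$ and $\shadow$ it suffices to prove the aligned direction; the shadow direction is identical with $\shadow$ in place of $\alignd$.

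First I would dispatch the leaves. For \ruleref{T-Num} and \ruleref{T-Boolean} the distances are $\pair 0 0$ and both sides of each equation are literally the constant, so the claims are immediate. For \ruleref{T-Var} I unfold Definition~\ref{def:relation}: if $\Gamma(x) = \basety_{\pair{\first\dexpr}{\second\dexpr}}$ with $\first\dexpr \neq *$, then by definition $\amem{\Gamma}{m'}(x) = m'(x) + \evalexpr{\first\dexpr}{m'}$, which is exactly $\evalexpr{x}{m'} + \evalexpr{\first\nexpr}{m'}$ since the rule takes $\first\nexpr = \first\dexpr$; if $\first\dexpr = *$, the same identity holds with the distinguished variable $\first{\distance x}$ in place of $\first\dexpr$, matching both the rule's choice $\first\nexpr = \first{\distance x}$ and the desugaring of the star type, which ties $\amem{\Gamma}{m'}(x)$ to $m'(x) + m'(\first{\distance x})$.

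Then I would handle the inductive steps. For \ruleref{T-OPlus}, apply the induction hypothesis to $e_1$ and $e_2$ and use linearity of $+$ and $-$, namely $(\evalexpr{e_1}{m'} + \evalexpr{\nexpr_1}{m'}) \oplus (\evalexpr{e_2}{m'} + \evalexpr{\nexpr_3}{m'}) = (\evalexpr{e_1}{m'} \oplus \evalexpr{e_2}{m'}) + (\evalexpr{\nexpr_1}{m'} \oplus \evalexpr{\nexpr_3}{m'})$, which is precisely the rearrangement that turns the two hypothesis instances into the claim for $e_1 \oplus e_2$ with distance $\nexpr_1 \oplus \nexpr_3$. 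For \ruleref{T-OTimes} the operand distances are $\pair 0 0$, so the hypothesis gives $\evalexpr{e_i}{m'} = \evalexpr{e_i}{\amem{\Gamma}{m'}}$ and hence their product or quotient also agrees, matching the $\pair 0 0$ distance of the result. For \ruleref{T-Ternary} and \ruleref{T-Neg}, the boolean clause of the hypothesis on the guard (resp. argument) shows that the same boolean is computed in $m'$ and in $\amem{\Gamma}{m'}$, after which the appropriate clause of the hypothesis on the selected sub-expression closes the case. \ruleref{T-Cons} and \ruleref{T-Index} follow directly from the list clause of the hypothesis; for \ruleref{T-Index} the index has distance $\pair 0 0$, so it selects the same position in both memories.

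I expect \ruleref{T-ODot} to be the step requiring the most care, and it is the one place where the invariant hypothesis is used. The rule records operand distances $\nexpr_1, \nexpr_3$ in the aligned component and discharges $\pre \Rightarrow (e_1 \odot e_2 \Leftrightarrow (e_1 + \nexpr_1)\odot(e_2 + \nexpr_3))$ to an external solver. Since $\pre$ holds at $m'$ and the numerical hypothesis gives $\evalexpr{e_1}{m'} + \evalexpr{\nexpr_1}{m'} = \evalexpr{e_1}{\amem{\Gamma}{m'}}$ and the analogous equality for $e_2$ with $\nexpr_3$, I can chain $\evalexpr{e_1 \odot e_2}{m'} = \evalexpr{(e_1+\nexpr_1)\odot(e_2+\nexpr_3)}{m'} = \evalexpr{e_1 \odot e_2}{\amem{\Gamma}{m'}}$, establishing the boolean clause; the second conjunct of the side condition yields the shadow direction in the same way. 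The only delicate points are making the assumption that $\pre$ holds at $m'$ explicit in the statement (or restricting attention to such memories), and observing that evaluating the shifted comparison at $m'$ literally equals evaluating the comparison at the shifted memory, which is exactly the content supplied by the numerical induction hypothesis.
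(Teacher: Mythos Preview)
Your proof is correct and follows the same core strategy as the paper: structural induction on the typing derivation. The main difference is organizational. The paper states and proves the boolean case as a separate lemma (the one immediately following this one) and, in its proof of the numerical lemma, simply writes ``\ruleref{T-OPlus}, \ruleref{T-OTimes}, \ruleref{T-Ternary}, \ruleref{T-Index}: by the induction hypothesis,'' glossing over the fact that \ruleref{T-Ternary} requires the boolean guard to evaluate identically in $m'$ and $\amem{\Gamma}{m'}$. Your explicit strengthening to a simultaneous induction over numerical, boolean, and list types makes this dependency precise and is arguably cleaner; the paper's two lemmas are really a mutual induction that has been split in two, and your version makes that structure explicit. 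Your handling of \ruleref{T-ODot} under the assumption that $\pre$ holds at $m'$ matches what the paper does in its boolean lemma.
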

\begin{proof}
Induction on the inference rules.
\begin{itemize}
\item \ruleref{T-Num}: trivial.

\item \ruleref{T-Var}: If $\first{\Gamma}(x)$ is $*$, $\first{\nexpr}$ is
$\first{\distance{x}}$. By Definition~\ref{def:relation},
${\amem{\Gamma}{m'}}(x)=m'(x)+m'(\first{\distance{x}})$. If
$\first{\Gamma}(x)=\first{\dexpr}$, $\first{\nexpr}$ is $\first{\dexpr}$.
Again, we have ${\amem{\Gamma}{m'}}(x)=m'(x)+\evalexpr{\first{\dexpr}}{m'}$ by
Definition~\ref{def:relation}. The case for $\second{\Gamma}$ is similar.

\item \ruleref{T-OPlus, T-OTimes, T-Ternary, T-Index}: by the induction
hypothesis (list is treated as a collection of variables of the same type).
\end{itemize}
\end{proof}

\begin{lemma}[Boolean Expression]
\label{lem:bexpr}
$\forall e,m', \Gamma$, such that $\Gamma \proves e : \bool$,
we have
\[
\evalexpr{e}{\store'}  = \evalexpr{e}{\first{\Gamma}\store'}  = \evalexpr{e}{\second{\Gamma}\store'}.
\]
\end{lemma}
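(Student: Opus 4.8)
The plan is to proceed by structural induction on the typing derivation $\Gamma \proves e : \bool$, using Lemma~\ref{lem:aexpr} as the main tool. First I would clear the routine cases. For \ruleref{T-Boolean} the expression is a literal, so its value does not depend on the memory at all. For \ruleref{T-Var} on a boolean variable, for \ruleref{T-Index} on an element of a boolean list, for \ruleref{T-Neg}, and for \ruleref{T-Ternary} with boolean branches, the relevant subterms either have a non-numeric type --- which by the convention of Section~\ref{sec:syntax} carries distance $\pair 0 0$, so by Definition~\ref{def:relation} its value is left unchanged by both $\amem{\Gamma}{m'}$ and $\smem{\Gamma}{m'}$ --- or the claim follows immediately from the induction hypothesis on the immediate subexpressions; note in particular that the guard of a ternary is itself boolean, so its truth value is preserved and the same branch is selected under all three memories.

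The only substantive case is \ruleref{T-ODot}, i.e.\ $e \equiv e_1 \odot e_2$ with $\Gamma \proves e_1 : \tyreal_{\pair{\nexpr_1}{\nexpr_2}}$ and $\Gamma \proves e_2 : \tyreal_{\pair{\nexpr_3}{\nexpr_4}}$. Here I would invoke Lemma~\ref{lem:aexpr} on $e_1$ and $e_2$ to obtain $\evalexpr{e_1}{\amem{\Gamma}{m'}} = \evalexpr{e_1}{m'} + \evalexpr{\nexpr_1}{m'}$ and $\evalexpr{e_2}{\amem{\Gamma}{m'}} = \evalexpr{e_2}{m'} + \evalexpr{\nexpr_3}{m'}$, so that $\evalexpr{e_1 \odot e_2}{\amem{\Gamma}{m'}}$ equals the truth value of $(e_1+\nexpr_1) \odot (e_2+\nexpr_3)$ at $m'$. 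The side condition of \ruleref{T-ODot} is precisely the requirement that, under the global invariant $\pre$, the boolean $e_1 \odot e_2$ has the same value as $(e_1+\nexpr_1) \odot (e_2+\nexpr_3)$ (and as $(e_1+\nexpr_2)\odot(e_2+\nexpr_4)$). Since $\pre$ holds at $m'$, this yields $\evalexpr{e_1 \odot e_2}{m'} = \evalexpr{(e_1+\nexpr_1) \odot (e_2+\nexpr_3)}{m'} = \evalexpr{e_1 \odot e_2}{\amem{\Gamma}{m'}}$. The equality $\evalexpr{e_1 \odot e_2}{m'} = \evalexpr{e_1 \odot e_2}{\smem{\Gamma}{m'}}$ is obtained identically, using the second conjunct of the side condition and the shadow-side clause of Lemma~\ref{lem:aexpr}.

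I expect the only delicate point in the write-up to be making explicit where the invariant $\pre$ is used: the \ruleref{T-ODot} step is legitimate only because $m'$ is a memory reachable during an execution along which $\pre$ is maintained, exactly as assumed by the side condition of that rule. With Lemma~\ref{lem:aexpr} available, all the other cases reduce to a mechanical induction.
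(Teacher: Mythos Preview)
Your proposal is correct and follows essentially the same approach as the paper: structural induction on the typing derivation, with \ruleref{T-ODot} handled via Lemma~\ref{lem:aexpr} together with the rule's side condition, and the remaining cases discharged by the distance-$\pair 0 0$ convention or the induction hypothesis. If anything, you are slightly more explicit than the paper---you spell out \ruleref{T-Neg}, the boolean guard in \ruleref{T-Ternary}, and the role of the invariant $\pre$ at the \ruleref{T-ODot} step, whereas the paper compresses these into one-line remarks.
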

\begin{proof}
Induction on the inference rules.

\begin{itemize}
\item \ruleref{T-Boolean}: trivial.

\item \ruleref{T-Var}: the special case where
$\first{\dexpr}=\second{\dexpr}=0$. Result is true by
Definition~\ref{def:relation}.

\item \ruleref{T-ODot}: by Lemma~\ref{lem:aexpr} and induction hypothesis, we
have $\evalexpr{e_i}{\store'} + \evalexpr{\first{\nexpr_i}}{\store'} =
\evalexpr{e_i}{\amem{\Gamma}{m'}}$ for $i\in \{1,2\}$. By the assumption of
\ruleref{T-ODot}, we have \[\evalexpr{e_1\odot e_2}{m'}\Leftrightarrow
\evalexpr{(e_1+\first{\nexpr_1}) \odot (e_2+\first{\nexpr_2})}{m'}=
\evalexpr{e_1\odot e_2}{\amem{\Gamma}{m'}}\] The case for $\second{\Gamma}$ is
similar.

\item \ruleref{T-Ternary, T-Index}: by the induction hypothesis.
\end{itemize}
\end{proof}

\subsection{Injectivity}

We first prove that the type system maintains injectivity.

\begin{lemma}
\label{lem:inj}
$\forall c,c',\pc, m',m_1', m_2', \Gamma_1, \Gamma_2.~\pc \proves \cmdfull{\Gamma_1}{\Gamma_2}{c\transform c'} \land \evalexpr{c'}{m'}{m_1'}\not=0 \land \evalexpr{c'}{m'}{m_2'}\not=0$, then we have
\begin{align*}
(m_1' = m_2') \lor (\exists \eta.~\Gamma_2^\star m_1'(\eta)  \not=
\Gamma_2^\star m_2'(\eta)), \star\in \{\alignd, \shadow\}
\end{align*}
\end{lemma}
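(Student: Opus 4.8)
The plan is to induct on the typing derivation $\pc \proves \cmdfull{\Gamma_1}{\Gamma_2}{c\transform c'}$, keeping as a standing hypothesis that the global invariant $\pre$ holds of every memory that arises (this is what makes the side conditions of \ruleref{T-Laplace} and \ruleref{T-ODot} usable). The guiding observation is that every command the transformation \emph{adds} to $c'$ --- the instrumentation assignments, the shadow-execution code $\shadowexec{c,\Gamma}$, and the $\cod{assert}$s (which, since $\evalexpr{c'}{m'}(m_1')\neq 0$ and $\evalexpr{c'}{m'}(m_2')\neq 0$, never abort and never change memory) --- is deterministic; the only probabilistic steps left in $c'$ are the original sampling commands $\eta:=\lapm\real;\select,\nexpr_{\eta}$. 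Hence whenever the fragment of $c'$ driving the two executions from the common start $m'$ contains no sampling command, the runs coincide and $m_1'=m_2'$; this immediately settles \ruleref{T-Skip}, \ruleref{T-Asgn} and \ruleref{T-Return}, and reduces the composite rules to the sampling case.

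The core case is \ruleref{T-Laplace}. There $c'$ samples $\eta$ and modifies nothing else, so $m_1'$ and $m_2'$ agree on every variable except possibly $\eta$ --- note that the selector $\select$ rewrites only the aligned \emph{distances} recorded in $\Gamma_2$, not stored values. If $m_1'(\eta)=m_2'(\eta)$ we land in the first disjunct. Otherwise, in $\Gamma_2$ the variable $\eta$ has type $\tyreal_{\pair{\nexpr_{\eta}}{0}}$: its shadow distance is $0$, so $\Gamma_2^\shadow m_i'(\eta)=m_i'(\eta)$ and the shadow projections differ at $\eta$; and its aligned distance is $\nexpr_{\eta}$, so $\Gamma_2^\alignd m_i'(\eta)=m_i'(\eta)+\evalexpr{\nexpr_{\eta}}{m_i'}$, which (since $m_1'$ and $m_2'$ agree away from $\eta$) is precisely $\eta+\nexpr_{\eta}$ evaluated at $\eta\mapsto m_i'(\eta)$; the injectivity side condition $\pre\Rightarrow(\subst{(\eta+\nexpr_\eta)}{\eta}{\eta_1}=\subst{(\eta+\nexpr_\eta)}{\eta}{\eta_2}\Rightarrow\eta_1=\eta_2)$ then forces these aligned projections to differ too. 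Along the way I will record a small sub-invariant, proved by the same induction: the shadow distance of \emph{every} random variable stays $0$ throughout a derivation --- it is set to $0$ by \ruleref{T-Laplace}, is untouched by \ruleref{T-Asgn} (which only retypes normal variables), is preserved by the $\Gamma'=\lambda x.\pair{\select(\cdots)}{\second{\nexpr}}$ update of a subsequent \ruleref{T-Laplace}, and survives $\join$ in \ruleref{T-If}/\ruleref{T-While} because $0\join 0=0$. This keeps the shadow half of every case trivial.

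For \ruleref{T-Seq} (and then \ruleref{T-If}/\ruleref{T-While}, which reduce to it after noting that $c'$ branches and iterates on the \emph{original}, deterministic condition $e$, so two runs meeting a test point with equal memories take the same branch and the same iteration count), the work is to carry a random-variable disagreement forward. My plan exploits the modelling assumption that the memory records the entire sampled list: all non-probabilistic commands are deterministic functions of $m'$ and the samples seen so far, so if the intermediate memories $m_1''$, $m_2''$ after $c_1'$ differ, they already differ on some sampled value, i.e.\ on the raw value of the random variable $\eta^{(j)}$ fixed by the $j$-th sampling command. Taking the \emph{first} such $j$, the two executions agree up to just before that sampling (same control flow, same sampling command), so by the \ruleref{T-Laplace} analysis the aligned (and shadow) projection of $\eta^{(j)}$ differs immediately after it. The step I expect to fight with is that this projection of $\eta^{(j)}$ is not literally invariant under the remainder of $c'$: a later sampling whose $\select$ picks $\shadow$ overwrites $\eta^{(j)}$'s aligned distance with its shadow distance ($=0$), and $\join$s/instrumentation can further reshuffle it. I would dispatch this by a case split --- if such a $\shadow$-switch occurs after index $j$ the aligned projection of $\eta^{(j)}$ collapses to its (still differing) raw value; if not, the aligned distance of $\eta^{(j)}$ is only recombined through assignments to normal variables and $\join$s, under which the \ruleref{T-Laplace}-style injectivity can be pushed forward, with the instrumented update of $\eta^{(j)}$'s distance variable accounting for the $*$ case. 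Making this bookkeeping airtight is the delicate part, and it may well be cleanest to run the induction for Lemma~\ref{lem:inj} jointly with a strengthened form of Lemma~\ref{lem:injective}.
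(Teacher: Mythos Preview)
Your overall plan—structural induction on the typing derivation, dispatching the deterministic commands trivially and invoking the injectivity side-condition of \ruleref{T-Laplace} for the sampling case—is exactly the paper's approach, and the sub-invariant you isolate (every random variable keeps shadow distance $0$) is precisely what the paper's argument relies on implicitly throughout.

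Where you depart from the paper is the sequential-composition case. The paper keeps the induction hypothesis in play: it picks intermediate memories $m_3',m_4'$ after $c_1'$ and case-splits on whether $m_3'=m_4'$. If so, the IH on $c_2$ already delivers the conclusion at $\Gamma_2$; if not, the IH on $c_1$ supplies a random variable $\eta$ whose projections differ at the intermediate environment, and the paper then argues (tersely, using the sample-list convention) that this witness persists to $m_1',m_2'$ under $\Gamma_2$. You instead abandon the IH and work operationally: locate the \emph{first} diverging sample $\eta^{(j)}$, rerun the \ruleref{T-Laplace} analysis at that point, then try to push the witness through the remainder of $c'$ by a case split on whether a later sampling ``switches to $\shadow$''.

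That detour introduces a genuine gap. Once the two runs disagree on $\eta^{(j)}$ they may take different branches (branch conditions can mention $\eta$), execute different sampling commands, and hence apply different selectors; so your case split ``a $\shadow$-switch occurs after $j$ / it does not'' is not well-defined uniformly across both runs. Equivalently, the aligned distance recorded for $\eta^{(j)}$ in $\Gamma_2$ can be a ternary expression such as $\Omega\mathbin{?}0:\nexpr_\eta$, and $\Omega$ may evaluate to $\true$ in $m_1'$ and $\false$ in $m_2'$—a case your split does not cover. The paper avoids this dynamic bookkeeping by staying with the structural IH: the $m_3'=m_4'$ sub-case absorbs all of $c_2$ in one step, and in the $m_3'\neq m_4'$ sub-case the sample-list convention guarantees the raw value of the witness $\eta$ is never overwritten, so the disagreement on $\eta$ is carried to $m_1',m_2'$ directly rather than being reconstructed along two divergent traces. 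I would recommend reverting to that decomposition and, for the residual step of relating the witness to $\Gamma_2$, reasoning statically about how the typing rules can transform a random variable's aligned distance rather than about dynamic selector choices.
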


\begin{proof}
By structural induction on $c$. 

\begin{itemize}[leftmargin=5mm]
\item Case $\skipcmd$: trivial since it is non-probabilistic.

\item Case $x:=e$: trivial since it is non-probabilistic.  

\item Case $c_1;c_2$: Let $\pc \proves \Gamma_1 \{c_1 \transform c_1' \} \Gamma$,
$\pc \proves \Gamma \{c_2 \transform c_2'\} \Gamma_2$. There exists some $m_3'$
and $m_4'$ such that

\begin{align*}
\evalexpr{c_1'}{m'}(m_3')\not=0 &\land \evalexpr{c_2'}{m_3'}(m_1')\not=0\\
\evalexpr{c_1'}{m'}(m_4')\not=0 &\land \evalexpr{c_2'}{m_4'}(m_2')\not=0
\end{align*}

If $m_3'=m_4'$, results is true by the induction hypothesis on $c_2'$.
Otherwise, we have $\exists \eta.~\Gamma_2^\star m_3'(\eta) \not=\Gamma_2^\star
m_4'(\eta)$. Hence, $\Gamma_2^\star m_1'(\eta) \not=\Gamma_2^\star m_2'(\eta)$;
contradiction.

\item Case $\ifcmd{e}{c_1}{c_2}$: let $\pc'\proves \Gamma_1 \{c_i \transform c_i'\}
\Gamma_i'$.  By typing rule, we have $\Gamma_1'\join \Gamma_2'=\Gamma_2$. Given
initial memory $m'$, both executions take the same branch. Without
loss of generality assume $c_1$ is executed. Let $\evalexpr{c_1'}{m'}{m_3'}\not=0$,
$\evalexpr{c_1''}{m_3'}{m_1'}\not=0$, $\evalexpr{c_1'}{m'}{m_4'}\not=0$ and
$\evalexpr{c_1''}{m_4'}{m_1'}\not=0$. There are two cases:

\begin{enumerate}
\item $m_3'=m_4'$: in this case, $c_1''$ only changes the value of $\third{x}$ to
$\nexpr$ when $\Gamma_2(x)=*$ and $\Gamma_1'(x)=\nexpr$. Moreover, we have
$m_1'(\third{x}) =\evalexpr{\nexpr}{m_3'}=\evalexpr{\nexpr}{m_4'}=
m_2'(\third{x})$ for those variables. Hence, $m_1'=m_2'$.

\item $m_3'\not=m_4'$: by induction hypothesis, $\exists \eta.~m_3'(\eta)\not=
m_4'(\eta)$. Hence, $m_1' \not= m_2'$, and $m_1'(\eta)\not=
m_2'(\eta)$ for the same $\eta$.
\end{enumerate}

When $\pc=\top$, $c'$ also includes $\second{c}$. In this case, result still
holds after $\second{c}$ since $\second{c}$ is deterministic by construction.

\item Case $\whilecmd{e}{c}$: let $\pc'\proves \Gamma\join \Gamma_f \{c
\transform c'\} \Gamma_f$. We proceed by induction on the number of iterations:
\begin{enumerate}
\item $c$ is not executed: $m_1'=m'=m_2'$.
\item $c$ is executed $n+1$ times: similar to the $c_1;c_2$ case. 
\end{enumerate}

When $\pc=\top$, $c'$ also includes $\second{c}$. In this case, result still
holds after $\second{c}$ since $\second{c}$ is deterministic by construction.

\item Case $(\eta := \lapm~\real;\select,\nexpr_\eta )$: if $m_1'\not=m_2'$, we
know that $m_1'(\eta) \not= m_2'(\eta)$ since $c'$ only modifies the value of
$\eta$.  In this case, it must be true that $\Gamma_2^\star m_1'(\eta)\not=
\Gamma_2^\star m_2'(\eta)$: otherwise, $m_1'=m_2'$ by the injectivity check in
rule~\ruleref{T-Laplace}.

\end{itemize}
\end{proof}

\paragraph{Proof of Lemma~\ref{lem:injective}}

\begin{proof}
Direct implication of Lemma~\ref{lem:inj}.
\end{proof}

\subsection{Instrumentation}
Next, we show a property offered by the auxiliary function  $\Gamma_1,
\Gamma_2, \pc \Rrightarrow c'$ used in the typing rules. Intuitively, they
allow us to freely switch from one typing environment to another by promoting
some variables to star type.

\begin{lemma}[Instrumentation]
\label{lem:triplearrow}
$\forall \pc, \Gamma_1, \Gamma_2, c'$,  if $(\Gamma_1, \Gamma_2, \pc) \Rrightarrow c'$, then
for any memory $m_1'$, there is a unique $m_2'$ such that $\evalexpr{c'}{m_1'}(m_2')\not=0$ and
\[
\begin{cases}
\first{\Gamma_1} m_1' = \first{\Gamma_2} m_2' \land  \second{\Gamma_1} m_1' = \second{\Gamma_2} m_2' &\text{ if } \pc=\bot \\
\first{\Gamma_1} m_1' = \first{\Gamma_2} m_2' &\text{ if } \pc=\top
\end{cases}
\]
\end{lemma}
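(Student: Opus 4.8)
The plan is to exploit that the block $c'$ produced by $\Gamma_1,\Gamma_2,\pc\Rrightarrow c'$ is a finite sequence of deterministic assignments, each of the form $\first{\distance{x}}:=\nexpr$ or $\second{\distance{x}}:=\nexpr$, touching only distance-tracking variables. I would split the argument into (a) showing $\evalexpr{c'}{m_1'}$ is a point mass and identifying its support $\{m_2'\}$, and (b) checking that $m_2'$ re-establishes the $\Gamma_2$-relation.

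For (a): by the semantics in Figure~\ref{fig:semantics} an assignment denotes a $\unitop$ of a deterministic update, and $\unitop$ is preserved by sequencing, so $\evalexpr{c'}{m_1'}$ is the Dirac distribution at a unique $m_2'$; moreover $m_2'$ is $m_1'$ with $\first{\distance{x}}$ overwritten for each $x\in\dom(\first{c})$ (and, when $\pc=\bot$, with $\second{\distance{x}}$ overwritten for each $x\in\dom(\second{c})$), every other entry — in particular every program variable, and when $\pc=\top$ every $\second{\distance{\cdot}}$ — being untouched. The delicate step is reading off the values written: for this I would first prove, by a side induction on the typing derivation, a structural invariant stating that the hidden variable $\first{\distance{y}}$ (resp. $\second{\distance{y}}$) occurs in the distance expression $\Gamma_1$ carries for a variable $x$ only when $x=y$ or when $\Gamma_1$ gives $y$ the distance $*$ in the aligned (resp. shadow) component, and that no shadow distance expression mentions any $\first{\distance{\cdot}}$. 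The self-reference clause $x=y$ genuinely occurs — e.g. for $\first{\distance{\cod{bq}}}$ in Figure~\ref{alg:transformed_noisymax} after the selector — but is harmless for a single assignment, which reads the old value before writing. Together with the fact that $\first{c}$ is sequenced before $\second{c}$ and never writes any $\second{\distance{\cdot}}$, the invariant implies that each right-hand side evaluates, in the memory reached when it executes, to its value in $m_1'$; hence $m_2'(\first{\distance{x}})=\evalexpr{\nexpr_x}{m_1'}$ for the instrumented assignment $\first{\distance{x}}:=\nexpr_x$, and likewise for the shadow assignments.

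For (b): by Definition~\ref{def:relation}, $(\first{\Gamma_i}m')(x)=m'(x)+\evalexpr{\nexpr}{m'}$ with $\nexpr$ the desugared aligned distance of $x$ under $\Gamma_i$, and since $m_1'$ and $m_2'$ agree on all program variables it suffices to show that the desugared aligned distance of $x$ under $\Gamma_1$, evaluated at $m_1'$, equals that under $\Gamma_2$ evaluated at $m_2'$, for every $x$. I would case-split on the aligned components of $\Gamma_1(x)$ and $\Gamma_2(x)$, using the premise $\Gamma_1\sqsubseteq\Gamma_2$ and the two-level lattice on distances: if they coincide the two distance expressions are syntactically identical and, by the invariant of (a), mention none of the variables $c'$ overwrites, so the two evaluations agree; if $\Gamma_1(x)$ is numeric $\nexpr$ and $\Gamma_2(x)$ is $*$, then $x\in\dom(\first{c})$, the $\Gamma_2$-distance desugars to $\first{\distance{x}}$, and $m_2'(\first{\distance{x}})=\evalexpr{\nexpr}{m_1'}$ by (a); the remaining combinations ($\Gamma_1(x)$ is $*$ and $\Gamma_2(x)$ numeric, or two distinct numeric expressions) are ruled out by $\sqsubseteq$. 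This gives $\first{\Gamma_1}m_1'=\first{\Gamma_2}m_2'$, and for $\pc=\bot$ the same argument on $\second{c}$ and the shadow components (using the invariant clause that shadow distances never mention $\first{\distance{\cdot}}$, so the preceding $\first{c}$ is irrelevant) gives $\second{\Gamma_1}m_1'=\second{\Gamma_2}m_2'$; for $\pc=\top$ nothing further is claimed.

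The case analysis in (b) is routine; the main obstacle is part (a) — establishing the occurrence invariant for hidden distance variables and using it, with the $\first{c}$-before-$\second{c}$ ordering, to argue that the instrumented block behaves as a simultaneous substitution whose right-hand sides are evaluated against $m_1'$. I would therefore isolate that invariant as an auxiliary lemma and prove it first by induction over the typing rules, noting that distance variables enter only through desugaring of $\tyreal_{\pair * *}$ and propagate only via the expression rules and the selector of \ruleref{T-Laplace}.
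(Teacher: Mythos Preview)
Your proposal is correct and follows the same skeleton as the paper's proof: determinism of $c'$ gives uniqueness of $m_2'$, then a case split on the pair $(\Gamma_1(x),\Gamma_2(x))$ using $\Gamma_1\sqsubseteq\Gamma_2$ establishes the $\Gamma$-relation equalities. The difference is one of rigor. The paper simply asserts the two equalities you flag as delicate---namely $m_2'(\first{\distance{x}})=\evalexpr{\nexpr}{m_1'}$ for the promoted variables and $\evalexpr{\nexpr}{m_1'}=\evalexpr{\nexpr}{m_2'}$ for the unchanged ones---without justifying why the intervening assignments in $c'$ cannot affect the evaluation of $\nexpr$. Your occurrence invariant (hidden distance variables $\first{\distance{y}}$ appear in $\Gamma_1$'s numeric distances only for $y$ already at $*$, or as a harmless self-reference; shadow distances never mention aligned distance variables) is exactly what is needed to close that gap, and your observation that $\first{c}$ writes only to $\first{\distance{x}}$ for $x$ transitioning from numeric to $*$---hence never to any $\first{\distance{y}}$ that $\nexpr$ could legitimately mention---is the right way to discharge it. So your argument is a strictly more careful version of the paper's; the added auxiliary lemma buys you a sound treatment of the sequencing issue that the paper elides.
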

\begin{proof}
By construction, $c'$ is deterministic. Hence, there is a unique $m_2'$ such that
$\evalexpr{c}{m_1'}(m_2')\not=0$. 

Consider any variable $x\in \Vars$.  By
the construction of $c'$, we note that $m_1'(x)=m_2'(x)$ and $\first{\distance{x}}$
differs in $m_1'$ and $m_2'$ only if $\first{\Gamma_1}(x)=\nexpr$ and
$\first{\Gamma_2}(x)=*$. In this case, $\first{\Gamma_1}
m_1'(x)=m_1'(x)+\evalexpr{\nexpr}{m_1'}=m_2'(x)+m_2'(\first{\distance{x}})=\first{\Gamma_2}
m_2' (x)$.   Otherwise, $\first{\Gamma_1}(x)=\first{\Gamma_2}(x)$ (since
$\Gamma_1\sqsubseteq \Gamma_2$). When
$\first{\Gamma_1}(x)=\first{\Gamma_2}(x)=*$, $\first{\Gamma_1}
m_1'(x)=m_1'(x)+m_1'(\first{\distance{x}})=m_2'(x)+m_2'(\first{\distance{x}})=\first{\Gamma_2} m_2'(x)$.  When
$\first{\Gamma_1}(x)=\first{\Gamma_2}(x)=\nexpr$ for some $\nexpr$,
$\first{\Gamma_1} m_1'(x) = m_1'(x)+\evalexpr{\nexpr}{m_1'} =
m_2'(x)+\evalexpr{\nexpr}{m_2'} = \first{\Gamma_2} m_2'(x)$. 

When $\pc=\bot$, the same argument applies to the case of
$\second{\Gamma_1}m_1'=\second{\Gamma_2}m_2'$.  
\end{proof}

\subsection{Shadow Execution}
Next, we show the main properties related to shadow execution. 

\begin{lemma}[]
\label{lem:shadowexpr}
$\forall e, \Gamma, m'$, if $e$ is well-typed under $\Gamma$, we have
\[\evalexpr{\shadowexec{e,\Gamma}}{m'} =
\evalexpr{e}{\second{\Gamma}m'}\]  
\end{lemma}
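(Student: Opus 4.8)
The plan is to prove Lemma~\ref{lem:shadowexpr} by structural induction on $e$ (equivalently, on the derivation of $\Gamma \proves e : \tau$), closely mirroring the proof of Lemma~\ref{lem:aexpr}. First I would dispose of the compound cases, which are pure boilerplate: for each constructor, the rewriting in Figure~\ref{fig:shadowrules_expression} simply pushes $\shadowexec{\cdot,\Gamma}$ into the immediate subexpressions --- e.g.\ $\shadowexec{e_1\oplus e_2,\Gamma} = \shadowexec{e_1,\Gamma}\oplus\shadowexec{e_2,\Gamma}$, and likewise for the other arithmetic operators, for comparators $\odot$, for $\neg$, for the ternary $e_1\mathbin{?}e_2:e_3$, and for list cons $e_1::e_2$. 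Since the expression semantics $\trans{\cdot}$ is compositional (each operator is interpreted by the corresponding pointwise operation on values), applying the inductive hypothesis $\evalexpr{\shadowexec{e_i,\Gamma}}{m'} = \evalexpr{e_i}{\second{\Gamma}m'}$ to each subexpression and then applying the operator yields $\evalexpr{\shadowexec{e,\Gamma}}{m'} = \evalexpr{e}{\second{\Gamma}m'}$ directly. Notably, the comparator case needs only the inductive hypothesis on the numeric operands --- not the side condition generated by \ruleref{T-ODot} --- so boolean expressions are handled uniformly with numeric ones.

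All the content therefore lives in the leaves. For a literal $r$ (or a boolean constant) both sides evaluate to $r$ and $\shadowexec{r,\Gamma}=r$. For a variable $x$: if $x$ has non-numeric type then $\shadowexec{x,\Gamma}=x$ and, by Definition~\ref{def:relation}, $\second{\Gamma}$ leaves $x$ untouched; if $\Gamma\proves x:\tyreal_{\pair{\first{\nexpr}}{\second{\nexpr}}}$ then $\shadowexec{x,\Gamma}=x+\second{\nexpr}$, so $\evalexpr{x+\second{\nexpr}}{m'} = m'(x) + \evalexpr{\second{\nexpr}}{m'}$, which is exactly $(\second{\Gamma}m')(x)=\evalexpr{x}{\second{\Gamma}m'}$ --- here one checks that the $*$-desugaring is used consistently by \ruleref{T-Var} (which takes $\second{\nexpr}=\second{\distance{x}}$ when the declared shadow distance is $*$) and by Definition~\ref{def:relation}. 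The list-index leaf $e_1[e_2]$ is the analogue: \ruleref{T-Index} forces the index $e_2$ to have type $\tyreal_{\pair 0 0}$, so $\second{\Gamma}$ does not shift it, and $\shadowexec{e_1[e_2],\Gamma}$ merely adds $\second{\distance{e_1}}[e_2]$ (or the statically tracked shadow distance), matching the $\second{\Gamma}$-shift of the $e_2$-th element. Rather than redo this bookkeeping by hand, I would simply invoke Lemma~\ref{lem:aexpr} at the numeric leaves --- it already establishes $\evalexpr{e}{m'}+\evalexpr{\second{\nexpr}}{m'}=\evalexpr{e}{\second{\Gamma}m'}$ for $e$ a variable or an indexed list access (and in particular $\evalexpr{e_2}{m'}=\evalexpr{e_2}{\second{\Gamma}m'}$ since $e_2$ has zero distance) --- and Lemma~\ref{lem:bexpr} (its base cases) at the non-numeric leaves, since $\shadowexec{\cdot,\Gamma}$ is defined precisely so that each numeric leaf becomes ``that leaf plus its shadow distance''.

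I expect no genuine obstacle: this is a routine structural induction, and the only care points are the $*$-desugaring bookkeeping at variable and list-index leaves and the observation that the index subexpression of $e_1[e_2]$ carries zero distance (so $\shadowexec{\cdot}$ need not recurse into it) --- both already settled by Lemma~\ref{lem:aexpr}. One well-definedness remark is worth stating once at the outset: $\shadowexec{e,\Gamma}$ is only meaningful when $e$ is well-typed under $\Gamma$ (exactly the hypothesis of the lemma), which guarantees, for instance, that in $e_1[e_2]$ the expression $e_1$ is (a variable of) list type so that $\second{\distance{e_1}}$ is defined. Finally, I would note that this lemma is precisely the expression-level ingredient behind the command-level statement about $\shadowexec{c,\Gamma}$ --- replacing each variable $x$ by $x+\second{\distance{x}}$ faithfully simulates evaluation in the shadow memory $\second{\Gamma}m'$ --- which is where it is consumed in the soundness proof in the \appendixref.
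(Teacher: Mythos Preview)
Your proposal is correct and follows essentially the same approach as the paper's proof: structural induction on $e$, with the compound cases discharged immediately by the induction hypothesis and the recursive definition of $\shadowexec{\cdot,\Gamma}$, and the real work confined to the variable and list-index leaves, where the paper likewise invokes Lemma~\ref{lem:aexpr} (for the zero-distance index and the numeric-element cases) and Lemma~\ref{lem:bexpr} (for the non-numeric list-element case). Your observation that the $\odot$ case needs only the induction hypothesis and not the \ruleref{T-ODot} side condition is also correct and matches the paper's uniform treatment of all binary operators via the induction hypothesis.
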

\begin{proof}
By structural induction on the $e$. 

\begin{itemize}[leftmargin=5mm]

\item $e$ is $r$, $\true$ or $\false$: trivial.

\item $e$ is $x$: if $\second{\Gamma}(x)=*$ (with base type of $\tyreal$),
$\evalexpr{\shadowexec{x,\Gamma}}{m'}=m'(x)+m'(\second{\distance{x}})=\evalexpr{x}{\second{\Gamma}m'}$.
When $\Gamma(x)=\tyreal_{\pair \dexpr {\second{\nexpr}}}$, we have
$\evalexpr{\shadowexec{x,\Gamma}}{m'} = \evalexpr{x+\second{\nexpr}}{m'} =
m'(x)+\evalexpr{\second{\nexpr}}{m'} = \evalexpr{x}{\second{\Gamma}m'}$ (by
Lemma~\ref{lem:aexpr}). Otherwise,
$\evalexpr{\shadowexec{x,\Gamma}}{m'}=m'(x)=\second{\Gamma} m'(x)$.

\item $e$ is $e_1~\op~e_2$: by induction hypothesis.

\item $e$ is $e_1[e_2]$: By the typing rule~\ruleref{T-Index}, $\Gamma \proves
e_2:\tyreal_{\pair 0 0}$. Hence, $\evalexpr{e_2}{\second{\Gamma}
m'}=\evalexpr{e_2}{ m'}$ by Lemma~\ref{lem:aexpr}.

Let $\Gamma \proves e_1: \tylist~\tyreal_{\pair {\dexpr} {\second{\dexpr}}}$.
When $e_1=q$ and $\second{\dexpr}=*$,
$\evalexpr{\shadowexec{q[e_2],\Gamma}}{m'}
=\evalexpr{q[e_2]+\second{\distance{q}}[e_2]}{m'}
=\evalexpr{q[e_2]}{\second\Gamma m'}$.
When $\dexpr=*$ and $e_1$ is not a variable, $\evalexpr{\shadowexec{e_1[e_2],\Gamma}}{m'}$ is
defined as $\evalexpr{\shadowexec{e_1,\Gamma}[e_2]}{m'}$. By induction
hypothesis, this is the same as $\evalexpr{e_1}{\second{\Gamma}
m'}\evalexpr{[e_2]}{m'}=\evalexpr{e_1[e_2]}{\second{\Gamma} m'}$. 

When $\dexpr=\second{\nexpr}$, $\evalexpr{\shadowexec{e_1[e_2],\Gamma}}{m'}$ is
defined as $\evalexpr{e_1[e_2]+\second{\nexpr}}{m'}$, which is the same as
$\evalexpr{e_1[e_2]}{\second{\Gamma} m'}$ (by Lemma~\ref{lem:aexpr}).

Otherwise, result is true by Lemma~\ref{lem:bexpr}.

\item $e$ is $e_1::e_2$ and $\neg e$: by induction hypothesis.

\item $e$ is $e_1?e_2\!:\!e_3$: by induction hypothesis, we have
$\evalexpr{\shadowexec{e_1,\Gamma}}{m'}=\evalexpr{e_1}{\second{\Gamma}m'}$.
Hence, the same element is selected on both ends. Result is true by
induction hypothesis.
\end{itemize}
\end{proof}

Next, we show that shadow execution simulates two executions on
$\second{\Gamma}$-related memories via two program executions.

\begin{lemma}[Shadow Execution]
\label{lem:shadowexec}
$\forall c, \second{c},
\Gamma.~(\forall x\in
\asgnvars(c).~\second{\Gamma}(x)=*)~\land~\shadowexec{c,\Gamma}= \second{c}$, we have
\[\forall m_1',m_2'.~\evalexpr{\second{c}}{m_1'}(m_2')  =
\evalexpr{c}{\second{\Gamma}m_1'}(\second{\Gamma} m_2').\]
\end{lemma}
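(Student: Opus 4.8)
The plan is to prove the statement by structural induction on $c$. Since $\shadowexec{\cdot,\Gamma}$ is not defined on sampling commands, $c$ is necessarily sampling-free, so both $c$ and $\second{c}=\shadowexec{c,\Gamma}$ are deterministic; it therefore suffices, for each syntactic form, to identify the unique memory $\hat m$ that $\second{c}$ reaches from $m_1'$ and to compare the corresponding masses on both sides. The workhorse is Lemma~\ref{lem:shadowexpr}, which already gives $\evalexpr{\shadowexec{e,\Gamma}}{m'} = \evalexpr{e}{\second{\Gamma}m'}$ for every well-typed expression; I would invoke it for every right-hand side and every branch/loop guard. The invariant threaded through the induction is: $\second{c}$ modifies only the distance variables $\second{\distance{x}}$ with $x\in\asgnvars(c)$, and the reached memory $\hat m$ satisfies $\second{\Gamma}\hat m = \evalexpr{c}{\second{\Gamma}m_1'}$; the claimed equality of (sub)distribution masses then drops out by unfolding $\unitop$ and $\bindop$ from Figure~\ref{fig:semantics}.

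The crux is the assignment case. Here $\shadowexec{x:=e,\Gamma} = (\second{\distance{x}} := \shadowexec{e,\Gamma} - x)$, which from $m_1'$ produces $\hat m$ differing from $m_1'$ only in $\second{\distance{x}}$, whose new value is $\evalexpr{\shadowexec{e,\Gamma}}{m_1'} - m_1'(x) = \evalexpr{e}{\second{\Gamma}m_1'} - m_1'(x)$ by Lemma~\ref{lem:shadowexpr}. Since $\second{\Gamma}(x) = *$ by hypothesis, Definition~\ref{def:relation} reads the shadow value of $x$ as $\hat m(x) + \hat m(\second{\distance{x}}) = m_1'(x) + (\evalexpr{e}{\second{\Gamma}m_1'} - m_1'(x)) = \evalexpr{e}{\second{\Gamma}m_1'}$, which is exactly the value $x:=e$ writes into $x$ on input $\second{\Gamma}m_1'$. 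For every other variable $y$, $m_1'$ and $\hat m$ agree, and---using well-formedness, so that the distance of $y$ does not mention $\second{\distance{x}}$---we get $\second{\Gamma}\hat m(y) = \second{\Gamma}m_1'(y)$, matching the fact that $x:=e$ leaves $y$ untouched. The $\skipcmd$ case is immediate.

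For the compound cases: in sequencing, $\shadowexec{c_1;c_2,\Gamma} = c_1';c_2'$ reuses the \emph{same} $\Gamma$, which is sound precisely because every assigned variable carries shadow distance $*$, so its shadow value is always recovered dynamically through $\second{\distance{x}}$; since $\asgnvars(c_i)\subseteq\asgnvars(c_1;c_2)$, the induction hypothesis applies to each $c_i$, and the two point masses compose through $\bindop$. In the conditional, Lemma~\ref{lem:shadowexpr} shows $\shadowexec{e,\Gamma}$ evaluates in $m_1'$ to the same boolean as $e$ in $\second{\Gamma}m_1'$, so $\ifcmd{\shadowexec{e,\Gamma}}{c_1'}{c_2'}$ and $\ifcmd{e}{c_1}{c_2}$ take corresponding branches, and I invoke the induction hypothesis on that branch. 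The while case is handled by an inner induction on the number of loop unrollings, using the fixpoint characterisation of $\trans{\whilecmd{e}{c}}$ from Figure~\ref{fig:semantics}: each iteration matches by Lemma~\ref{lem:shadowexpr} on the guard together with the induction hypothesis on the body $c$ (note $\asgnvars(c)=\asgnvars(\whilecmd{e}{c})$).

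The main obstacle is conceptual rather than computational: $\second{c}$ never writes to the program variables at all, only to their shadow-distance trackers, yet it must faithfully simulate $c$ running on the $\second{\Gamma}$-lifted memory. Pinning this down rests on the two structural facts highlighted above---that every assigned variable has shadow distance $*$ so its shadow value is read dynamically, and that well-formedness of $\Gamma$ keeps the distances of the remaining variables independent of the updated trackers. A secondary subtlety is that the statement quantifies over all $m_2'$: because $\second{c}$ is deterministic one only needs to locate its unique output $\hat m$ and verify $\second{\Gamma}\hat m = \evalexpr{c}{\second{\Gamma}m_1'}$, but to equate the two masses pointwise one also appeals to injectivity of $\Gamma^\shadow$ on reachable memories (Lemma~\ref{lem:injective}) to exclude spurious preimages of $\evalexpr{c}{\second{\Gamma}m_1'}$.
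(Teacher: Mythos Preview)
Your argument is the paper's: structural induction on $c$, passing to the deterministic reformulation, and invoking Lemma~\ref{lem:shadowexpr} for right-hand sides and guards, with the assignment case carried by $\second{\Gamma}(x)=*$ together with well-formedness, and the loop handled by an inner induction on iterations. One caveat on your closing paragraph: the appeal to Lemma~\ref{lem:injective} does not go through as written, since that lemma presupposes a typing derivation $\pc\proves\Gamma_1\{c\transform c'\}\Gamma_2$ and concerns memories reachable via the \emph{transformed} program $c'$, neither of which is among the hypotheses here; the paper in fact proves only the implication $m_2'=\evalexpr{\second{c}}{m_1'}\Rightarrow \second{\Gamma}m_2'=\evalexpr{c}{\second{\Gamma}m_1'}$ (which is all that later uses require) and does not attempt to exclude spurious preimages.
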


\begin{proof}
By structural induction on $c$. First, we note that the construction of shadow
execution does not apply to sampling instructions.  Hence, $c$ and $\second{c}$ fall into the
deterministic portion of programs. Therefore, we write
$m_2=\evalexpr{c}{m_1}$ when $m_2$ is the unique memory such that
$\evalexpr{c}{m_1}(m_2) = 1$, and likewise for $c'$. Then this lemma
can be stated as 
\[m_2'=\evalexpr{\second{c}}{m_1'} \implies \second{\Gamma}m_2' = \evalexpr{c}{\second{\Gamma}m_1'}.\]  
\begin{itemize}[leftmargin=5mm]
    \item $c$ is $\skipcmd$: we have $m_1'=m_2'$ in this case. Hence,
$\second{\Gamma}m_2'=\second{\Gamma}m_1'=\evalexpr{\skipcmd}{\second{\Gamma}m_1'}$.

    \item $c$ is $(c_1;c_2)$: let $\shadowexec{c_1,\Gamma}=\second{c}_1$ and
$\shadowexec{c_2,\Gamma}=\second{c}_2$ and
$m'=\evalexpr{\second{c}_1}{m_1'}$. By
induction hypothesis, we have $\second{\Gamma} m' =
\evalexpr{c_1}{\second{\Gamma}m_1'}$. 
By language semantics,
$m_2'=\evalexpr{\second{c}_2}{m'}$. By induction hypothesis, with
$\second{\Gamma}$ as both the initial and final environments, we have
$\second{\Gamma} m_2'=\evalexpr{c_2}{\second{\Gamma}m'}=\evalexpr{c_1;c_2}{\second{\Gamma}m_1'}$

    \item $c$ is $x:=e$: we have $\second{c}=(\second{\distance{x}}:=
\shadowexec{e,\Gamma}-x)$ in this case. Moreover,
$m_2'=\subst{m_1'}{\second{\distance{x}}}{\evalexpr{\shadowexec{e,\Gamma})-x}{m_1'}}$.  
By Lemma \ref{lem:shadowexpr}, we have $\evalexpr{\shadowexec{e,\Gamma}}{m_1'}
= \evalexpr{e}{\second{\Gamma}m_1'}$.

For variable $x$, we know that $\second{\Gamma(x)}=*$ by
assumption. So
$\second{\Gamma} m_2'(x)=m_2'(x)+m_2'(\second{\distance{x}})=m_1'(x)+\evalexpr{\shadowexec{e,\Gamma_2}-x}{m_1'} =
\evalexpr{e}{\second{\Gamma} m_1'}
=\evalexpr{x:=e}{\second{\Gamma} m_1'}(x)$. For a variable $y$
other than $x$, the result is true since both $c$ and $\second{c}$ do not
modify $y$ and $\second{y}$, and $y$'s distances does not change due to the
well-formedness check in typing rule~\ruleref{T-Asgn}.

    \item $c$ is $\ifcmd{e}{c_1}{c_2}$: let $\shadowexec{c_1,\Gamma}=c_1'$
and $\shadowexec{c_2,\Gamma}=c_2'$. In this case, $\second{c} =
\ifcmd{\shadowexec{e,\Gamma}}{c_1'}{c_2'}$.  By Lemma \ref{lem:shadowexpr}, we
have $\evalexpr{\shadowexec{e,\Gamma}}{m_1} =
\evalexpr{e}{\second{\Gamma}m_1}$. Hence,
both $c$ and $\second{c}$ will take the same branch under
$\second{\Gamma}m_1'$ and $m_1'$ respectively.  The desired results follow from
induction hypothesis on $c_1$ or $c_2$.  

    \item $c$ is $\whilecmd{e}{c}$: again, since
$\evalexpr{\shadowexec{e,\Gamma}}{m_1} =
\evalexpr{e}{\second{\Gamma}m_1}$. The desired result
follows from induction on the number of loop iterations.
\end{itemize}
\end{proof}

Next, we show that when $\pc=\top$ (i.e., when the shadow execution might
diverge), the transformed code does not modify shadow variables and their
distances.

\begin{lemma}[High PC]
\label{lem:highpc}
$\forall c, c', \Gamma_1, \Gamma_2.~\top \proves
\cmdfull{\Gamma_1}{\Gamma_2}{c\transform c'}$, then we have
\begin{enumerate}[leftmargin=5mm]
\item $(\forall x\in \asgnvars(c).~\second{\Gamma_2}(x)=*) \land (\second\Gamma_1(x)=*\Rightarrow \second{\Gamma_2}(x)=*)$
\item $\forall m_1', m_2'.~\evalexpr{c'}{m_1'}(m_2') \not= 0 \implies \second\Gamma_1 m_1'=\second\Gamma_2 m_2'$
\end{enumerate}
\end{lemma}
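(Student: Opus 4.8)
The plan is to prove both parts by structural induction on $c$ (equivalently, on the typing derivation $\top \proves \Gamma_1 \{c \transform c'\} \Gamma_2$), establishing part (1) first as a self-contained induction and then part (2), appealing to (1) in the merge cases. The first thing to record is a monotonicity observation about the program counter: $\cod{updPC}$ never maps $\top$ back to $\bot$, so every sub-derivation of a $\top$-derivation is itself at $\top$; in particular rule \ruleref{T-Laplace}, whose conclusion requires $\pc=\bot$, is never used, hence $c$ contains no sampling command. Consequently both $c$ and $c'$ are deterministic (the only added commands are assignments to distance variables and assertions), so the hypothesis $\evalexpr{c'}{m_1'}(m_2')\neq 0$ pins down $m_2'$ uniquely as the result of running $c'$ from $m_1'$; I will use this without further comment.

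Part (1) is a routine induction. The cases \ruleref{T-Skip} and \ruleref{T-Return} are immediate since $\asgnvars(c)=\emptyset$ and $\Gamma_2=\Gamma_1$. For \ruleref{T-Asgn} with $\pc=\top$ the rule sets $\Gamma_2=\Gamma_1[x\mapsto\basety_{\pair{\first{\nexpr}}{*}}]$, so $\second{\Gamma_2}(x)=*$, while for every $y\neq x$ (including any variable whose distance the well-formedness check promotes) the shadow component is only ever moved up to $*$, never away from it; this gives both conjuncts. The cases \ruleref{T-Seq}, \ruleref{T-If}, \ruleref{T-While} follow from the induction hypothesis together with the two facts that $\dexpr\join * = *$ and that the typing rules are monotone on environments: the set $\{z\mid \second{\Gamma}(z)=*\}$ is non-decreasing along the derivation, and by the inductive hypothesis it absorbs every variable assigned by each sub-command, hence all of $\asgnvars(c)$.

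For part (2) the structural point to exploit is that, when $\pc=\top$, the shadow-execution term $\second{c}$ appearing in \ruleref{T-If} and \ruleref{T-While} is $\skipcmd$, and every instrumentation premise $\Gamma_i,\Gamma_1\join\Gamma_2,\pc'\Rrightarrow c_i''$ (resp. $c''$ for the loop) is invoked with $\pc'=\top$, so by the instrumentation rule each such command is just the aligned part $\first{c}$, i.e. a block of assignments to $\first{\distance{z}}$ variables only. Therefore the only commands in $c'$ that can alter a normal variable or a shadow-distance variable $\second{\distance{z}}$ are the source assignments $x:=e$ and their paired prefixes $\second{\distance{x}}:=x+\second{\nexpr}-e$ from \ruleref{T-Asgn}; and that pair is constructed exactly so that reading each variable's shadow value through $\Gamma_2$ from $m_2'$ recovers its shadow value read through $\Gamma_1$ from $m_1'$ — a short calculation using Lemma~\ref{lem:aexpr} (for $x$ itself, its new value is $\evalexpr{e}{m_1'}$ and the freshly written $\second{\distance{x}}$ cancels the change; for every other variable the well-formedness invariant guarantees its distance expression does not mention $x$, so nothing moves). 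The induction then threads through: \ruleref{T-Seq} is immediate; in \ruleref{T-If} the branch that $c'$ takes on $m_1'$ is the $e$-branch, within it $c_i'$ preserves the shadow projection by the inductive hypothesis, the trailing $c_i''$ touches only $\first{\distance{z}}$ variables, and the resulting $\second{\Gamma_i}m_i'$ agrees with $\second{(\Gamma_1\join\Gamma_2)}m_2'$ because on variables whose final shadow distance is $*$ the value is read off $m_2'(\second{\distance{z}})$ (untouched by $c_i''$) while on the rest part (1) tells us they were not assigned and, by well-formedness, their shadow-distance expressions were not disturbed; \ruleref{T-While} is analogous, by an inner induction on the number of iterations with the fixed-point environment $\Gamma_f$; and \ruleref{T-Laplace} does not occur.

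The step I expect to be the main obstacle is exactly this bookkeeping: justifying, for variables outside $\asgnvars(c)$ whose types nonetheless interact with the high-pc region, that their shadow-distance expressions have all their free variables left untouched by $c'$ — which rests entirely on the well-formedness discipline of \ruleref{T-Asgn} and the accompanying $\Rrightarrow$ instrumentation — together with pinning down the precise form of the \ruleref{T-Asgn} instrumentation that makes the shadow value strictly invariant rather than merely re-expressed. It may well be necessary to strengthen the induction hypothesis with an auxiliary invariant (for each $z$, either $\second{\Gamma}(z)=*$ or $\second{\Gamma}(z)$ mentions no variable written later in $c$) so that the merge and sequencing cases go through cleanly. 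Once those are in hand, the remaining work is routine given Lemmas~\ref{lem:aexpr} and~\ref{lem:bexpr} and the monotonicity of the typing rules already used in part (1).
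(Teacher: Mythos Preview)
Your proposal is correct and follows essentially the same route as the paper: a structural induction on $c$, using that $\cod{updPC}$ keeps $\pc'=\top$ so \ruleref{T-Laplace} never applies, that the $\second{c}$ term in \ruleref{T-If}/\ruleref{T-While} is $\skipcmd$ when $\pc=\top$, that the $\Rrightarrow$-instrumentation at $\pc'=\top$ touches only the $\first{\distance{x}}$ variables, and the explicit cancellation in the \ruleref{T-Asgn} case. Your flagged ``main obstacle''---the merge-case bookkeeping showing $\second{(\Gamma_1'\join\Gamma_2')}m_2'=\second{\Gamma_i'}m_3'$ when some shadow distance is promoted to $*$ by the join---is exactly the step the paper asserts in one line without argument, so your instinct to strengthen the invariant there is well placed rather than a divergence from the paper.
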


\begin{proof}
By structural induction on $c$.

\begin{itemize}[leftmargin=5mm]
    \item $c$ is $\skipcmd$: trivial.

    \item $c$ is $x:=e$: When $\pc=\top$, $\second{\Gamma}_2(x)=*$ by the
typing rule. $(\second\Gamma_1(x)=*\Rightarrow \second{\Gamma_2}(x)=*)$ since
$\second\Gamma_2(y)=\second\Gamma_1(y)$ for $y\neq x$.

In this case, $c'$ is defined as $(\second{\distance{x}} := x + \second{\nexpr}
-e;x:=e)$ where $\second{\Gamma_1} \proves x: \second{\nexpr}$. By the semantics, we have
$m_2'=\subst{\subst{m_1'}{\second{\distance{x}}}{\evalexpr{x +
\second{\nexpr} -e}{m_1'}}}{x}{\evalexpr{e}{m_1'}}$. Hence, $\Gamma_2
m_2'(x)=m_2'(x)+\evalexpr{x + \second{\nexpr} -e}{m_1'} =
\evalexpr{e}{m_1'}+\evalexpr{x+\second\nexpr}{m_1'}-\evalexpr{e}{m_1'}=\Gamma_1
m_1(x)$.

For any $y\not=x$, both its value and its shadow distance do not change (due to
the wellformedness check in rule~\ruleref{T-Asgn}). Hence, $\second\Gamma_1
m_1'(y)=\second\Gamma_2 m_2'(y)$.

    \item $c$ is $c_1;c_2$: by induction hypothesis.

    \item $c$ is $\ifcmd{e}{c_1}{c_2}$: when $\pc=\top$, we have
$c'=\ifcmd{e}{(\cod{assert}~(\first{e});c_1';c_1'')}{(\cod{assert}~(\neg
\first{e});c_2'; c_2''})$. By the induction hypothesis, we know that $c_1'$ and
$c_2'$ does not modify any shadow variable and their ending typing
environments, say $\Gamma_1',\Gamma_2'$, satisfy condition 1.  Hence, the ending
environment $\Gamma_1'\join \Gamma_2'$ satisfies condition 1 too.

To show (2), we assume $c_1$ is executed without losing generality. Let $m_3'$
be the memory state between $c_1'$ and $c_1''$. By induction hypothesis,
$\second\Gamma_1 m_1'=\second{(\Gamma_1')} m_3'$. 
By the definition of $\Gamma_i, \Gamma_1\join \Gamma_2, \top \Rrightarrow
c_i''$, $c_1''$ and $c_2''$ only modifies $\first{\distance x}, x\in \Vars$.
Hence, $\second\Gamma_2 m_2'=\second{(\Gamma_1')} m_3'=\second\Gamma_1 m_1'$.

    \item $c$ is $\whilecmd{e}{c'}$: by the definition of $\join$ and induction
hypothesis, condition 1 holds.

Moreover, by the definition of $\Gamma, \Gamma\join \Gamma_f, \top \Rrightarrow
c_s$, $c_s$ only modifies $\first{\distance x}, x\in \Vars$. Hence,
$\second\Gamma_2 m_2'=\second\Gamma_1 m_1'$ by induction on the number of
iterations.

    \item $c$ is sampling instruction: does not apply since $\pc=\bot$ in the
typing rule.

\end{itemize}
\end{proof}

\begin{lemma}[]
\label{lem:nonsampling}
$\forall c, c', \second{c}, \Gamma_1, \Gamma_2.~(\forall x\in
\asgnvars(c).~\second{\Gamma_2}(x)=*) \land \shadowexec{c,\Gamma_2}=\second{c}
\land c'$ is deterministic, and 
$(\forall m_1',m_2'.~\evalexpr{c'}{m_1'}(m_2')\not=0 \implies 
\second\Gamma m_1' =\second\Gamma m_2')$, then we have
\[
\forall m_1',m_2'.~\evalexpr{c';\second{c}}{m_1'}(m_2') = \evalexpr{c}{\second{\Gamma_1}m_1'}(\second{\Gamma_2}m_2').
\]
\end{lemma}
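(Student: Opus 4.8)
The plan is to prove this directly, with no induction on $c$, by chaining three facts: the two hypotheses on $c'$ (it is deterministic, and it preserves the shadow $\Gamma$-relation — the latter being what Lemma~\ref{lem:highpc} supplies in the intended use), together with Lemma~\ref{lem:shadowexec}, which already relates $\second{c}=\shadowexec{c,\Gamma_2}$ to the source $c$. The idea is: because $c'$ is deterministic, composing it in front of $\second{c}$ merely substitutes the unique state that $c'$ produces; Lemma~\ref{lem:shadowexec} then rewrites $\second{c}$ as $c$ acting on shadow-projected memories; and the preservation hypothesis lets us replace the intermediate state's shadow projection by $\second{\Gamma_1}m_1'$.

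Concretely, first I would note that $c$ contains no sampling command (otherwise $\shadowexec{c,\Gamma_2}$ would be undefined), hence both $c$ and $\second{c}$ are deterministic, and $c'$ is deterministic by assumption. Fix $m_1'$ and let $m_3'$ be the unique memory with $\evalexpr{c'}{m_1'}(m_3')\neq 0$, so $\evalexpr{c'}{m_1'}$ is the point distribution at $m_3'$. Unfolding the semantics of sequencing and of $\bindop$ gives, for every $m_2'$,
\[
\evalexpr{c';\second{c}}{m_1'}(m_2')=\sum_{m}\evalexpr{\second{c}}{m}(m_2')\cdot\evalexpr{c'}{m_1'}(m)=\evalexpr{\second{c}}{m_3'}(m_2').
\]
Next I would apply Lemma~\ref{lem:shadowexec} with its typing environment instantiated to $\Gamma_2$ — its hypotheses ($\forall x\in\asgnvars(c).\,\second{\Gamma_2}(x)=*$ and $\shadowexec{c,\Gamma_2}=\second{c}$) are exactly ours — to obtain $\evalexpr{\second{c}}{m_3'}(m_2')=\evalexpr{c}{\second{\Gamma_2}m_3'}(\second{\Gamma_2}m_2')$. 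Finally, since $\evalexpr{c'}{m_1'}(m_3')\neq 0$, the preservation hypothesis on $c'$ gives $\second{\Gamma_2}m_3'=\second{\Gamma_1}m_1'$, so the right-hand side equals $\evalexpr{c}{\second{\Gamma_1}m_1'}(\second{\Gamma_2}m_2')$, which is the claimed identity.

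The steps above are all routine unfolding; the one place that needs care — and the main obstacle — is keeping the typing environments straight: the relation preserved by $c'$ must connect the pre-state under $\second{\Gamma_1}$ to the post-state under $\second{\Gamma_2}$ (exactly the form Lemma~\ref{lem:highpc} delivers), and that $\second{\Gamma_2}$ must be the same environment Lemma~\ref{lem:shadowexec} uses on $m_3'$. A secondary subtlety is divergence of $c'$: as is implicit throughout this development for non-probabilistic code (cf. the totality convention used in the proof of Lemma~\ref{lem:shadowexec}), the deterministic program $c'$ — and hence $c$ via the loop-structure-preserving construction $\shadowexec{\cdot,\cdot}$ — is treated as terminating, so the reasoning above covers all relevant cases.
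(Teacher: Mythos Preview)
Your proposal is correct and follows essentially the same route as the paper: introduce the unique intermediate state $m_3'$ produced by the deterministic $c'$, apply Lemma~\ref{lem:shadowexec} to rewrite $\evalexpr{\second{c}}{m_3'}$ in terms of $c$ on shadow-projected memories, and then use the preservation hypothesis to replace $\second{\Gamma_2}m_3'$ by $\second{\Gamma_1}m_1'$. You are in fact more explicit than the paper about why $c$ is deterministic and about the $\Gamma_1/\Gamma_2$ bookkeeping (the paper's statement writes an unsubscripted $\second\Gamma$ but its proof uses exactly the $\second{\Gamma_2}m_3'=\second{\Gamma_1}m_1'$ reading you spell out).
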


\begin{proof}
$c$ and $c'$ are deterministic. Hence, this lemma can be stated as 
\[\forall m_1',m_2'.~m_2'=\evalexpr{c';\second{c}}{m_1'} \implies
\second{\Gamma_2}m_2' = \evalexpr{c}{\second{\Gamma_1}m_1'}\]

Let $m_3'= \evalexpr{c'}{m_1'}$ and $m_2'= \evalexpr{\second{c}}{m_3'}$. By
assumption, we have $\second\Gamma_2 m_3'=\second\Gamma_1 m_1'$. Moreover, by
Lemma~\ref{lem:shadowexec}, we have $\second\Gamma_2
m_2'=\evalexpr{c}{\second\Gamma_2 m_3'}$. Hence, $\second\Gamma_2
m_2'=\evalexpr{c}{\second\Gamma_2 m_3'}=\evalexpr{c}{\second\Gamma_1 m_1'}$.

\end{proof}

\subsection{Soundness}
Finally, we prove the Pointwise Soundness Lemma.

\paragraph{Proof of Lemma~\ref{lem:pointwise}}
\begin{proof}
By structural induction on $c$. Note that the desired inequalities are trivially true if $\evalexpr{c'}{m_1'}(m_2') = 0$. Hence, in the proof we assume that $\evalexpr{c'}{m_1'}(m_2')> 0$. 

\begin{itemize}[leftmargin=5mm]
\item %
Case $(\skipcmd)$: trivial.

\item %
Case $(x:=e)$: 
by \ruleref{T-Asgn}, we have $\max(\restrict{c''}{m_1'}{m_2'})=0$. This command is deterministic in the sense that when
$\evalexpr{c'}{m_1'}(m_2') \neq 0$, we have
$\evalexpr{c'}{m_1'}(m_2') = 1$ and $m_2' =
\subst{m_1'}{x}{\evalexpr{e}{m_1'}}$.  To prove \eqref{inv:shadow} and
\eqref{inv:aligned}, it suffices to show that $\second{\Gamma}_2 m_2'
=\subst{\second{\Gamma}_1 m_1'}{x}{\evalexpr{e}{\second{\Gamma_1} m_1'}}$ and $\first{\Gamma}_2 m_2'
=\subst{\first{\Gamma}_1 m_1'}{x}{\evalexpr{e}{\first{\Gamma}_1 m_1'}}$.  We prove the latter
one as the other (given $\pc=\bot$) can be shown by a similar argument.

First, we show 
$\first{\Gamma}_2 m_2'(x) =\subst{\first{\Gamma}_1 m_1'}{x}{\evalexpr{e}{\first{\Gamma}_1 m_1'}}(x)$.
Let $\Gamma_1 \proves e: \pair{\first{\nexpr}}{\second{\nexpr}}$. By the typing rule, we have $\first{\Gamma_2}(x) = \first{\nexpr}$, and 
\begin{align*}
\first{\Gamma}_2 m_2' (x) &= m_2'(x) + \evalexpr{\first{\nexpr}}{m_2'} \\
&=\evalexpr{e}{m_1'} + \evalexpr{\first{\nexpr}}{m_2'} \\
&=\evalexpr{e}{m_1'} + \evalexpr{\first{\nexpr}}{m_1'} \\
&=\evalexpr{e}{\first{\Gamma}_1 m_1'}\\
&= \subst{\first{\Gamma}_1 m_1'}{x}{\evalexpr{e}{\first{\Gamma}_1 m_1'}} (x).
\end{align*}
The third equality is due to well-formedness, and the forth equality is due to
Lemma~\ref{lem:aexpr}.

Second, we show that $\first{\Gamma}_2 m_2' (y) =\subst{\first{\Gamma}_1
m_1'}{x}{\evalexpr{e}{\first{\Gamma}_1 m_1'}} (y)$ for $y\neq x$. First, by
Rule~\ruleref{T-Asgn}, $\first\Gamma_2(y)=\first\Gamma_1(y)$, $m_1'(y)=m_2'(y)$
and $m_1'(\first{\distance{y}})=m_2'(\first{\distance{y}})$. If
$\first\Gamma_2(y)=\first\Gamma_1(y)=*$, $\first{\Gamma}_2 m_2'
(y)=m_2'(y)+m_2'(\first{\distance y})=m_1'(y)+m_1'(\first{\distance y})=\first{\Gamma}_1 m_1'(y)$. If
$\first\Gamma_2(y)=\first\Gamma_1(y)=\nexpr$, $\first{\Gamma}_2 m_2'
(y)=m_2'(y)+\evalexpr{\nexpr}{m_2'}=m_1'(y)+\evalexpr{\nexpr}{m_1'}=\first{\Gamma}_1
m_1'(y)$, where $\evalexpr{\nexpr}{m_2'}=\evalexpr{\nexpr}{m_1'}$ due to
well-formedness.

\item %
Case $(c_1;c_2)$: For any $m_2'$ such that $\evalexpr{c_1';c_2'}{m_1'}(m_2')\neq 0$,
there exists some $m'$ such that 
\[\evalexpr{c_1'}{m_1'}(m')\not=0 \land \evalexpr{c_2'}{m'}(m_2')\not=0.\] 
Let $\pc \proves \Gamma_1 \{c_1 \transform c_1' \} \Gamma$, $\pc \proves \Gamma
\{c_2 \transform c_2'\} \Gamma_2$ For \eqref{inv:shadow}, we have
\begin{align*}
\evalexpr{c_1';c_2'}{m_1'}(m_2') &= \sum_{m'} \evalexpr{c_1'}{m_1'}(m')\cdot \evalexpr{c_2'}{m'}(m_2') \\
&\leq \sum_{m'} \evalexpr{c_1}{\second{\Gamma}_1 m_1'}(\second{\Gamma} m')\cdot \evalexpr{c_2}{\second{\Gamma} m'}(\second{\Gamma}_2 m_2)\\
&\leq \sum_{m'} \evalexpr{c_1}{\second{\Gamma}_1 m_1'}(m')\cdot \evalexpr{c_2}{m'}(\second{\Gamma}_2 m_2')\\
&= \evalexpr{c_1;c_2}{\second{\Gamma}_1 m_1'}(\second{\Gamma}_2 m_2').
\end{align*}
Here the second line is by induction hypothesis. The change of variable in the
third line is due to Lemma~\ref{lem:injective}.

For \eqref{inv:aligned} and \eqref{inv:aligned2}, let $\priv_1=\max
(\restrict{c_1''}{m_1'}{m'})$, $\priv_2=\max (\restrict{c_2''}{m'}{m_2'})$ and
$\priv = \max (\restrict{c_1'';c_2''}{m_1'}{m_2'})$. Note that $\priv_1+\priv_2
\leq \priv$ due to the fact that $\extmemfull{m_2'}{\priv_1+\priv_2}{0}\in
\evalexpr{c_1'';c_2''}{\extmemfull{m_1'}{0}{0}}$. By induction hypothesis we
have one of the following two cases. 
\begin{enumerate}
    \item $\evalexpr{c_2'}{m'}(m_2')\leq\exp(\priv_2)\evalexpr{c_2}{\second{\Gamma} m'}(\first{\Gamma}_2 m_2')$. We have
\begin{align*}
\evalexpr{c_1';c_2'}{m_1'}(m_2') &= \sum_{m'} \evalexpr{c_1'}{m_1'}(m')\cdot \evalexpr{c_2'}{m'}(m_2') \\
&\leq \exp(\priv_2) \sum_{m'} \evalexpr{c_1'}{\second{\Gamma}_1 m_1'}(\second{\Gamma} m')\cdot \evalexpr{c_2'}{\second{\Gamma} m'}(\first{\Gamma}_2 m_2')\\
&\leq \exp(\priv_2) \sum_{m'} \evalexpr{c_1'}{\second{\Gamma}_1 m_1'}(m')\cdot \evalexpr{c_2'}{m'}(\first{\Gamma}_2 m_2')\\
&\leq \exp(\priv_2) \evalexpr{c_1;c_2}{\second{\Gamma}_1 m_1'}(\first{\Gamma}_2 m_2')\\
&\leq \exp(\priv) \evalexpr{c_1;c_2}{\second{\Gamma}_1 m_1'}(\first{\Gamma}_2 m_2').
\end{align*}
The first inequality is by induction hypothesis. The change
of variable in the third line is again due to Lemma~\ref{lem:injective}. The last line is because $\priv_1+\priv_2\leq \priv$.

\item $\evalexpr{c_2'}{m'}(m_2')\leq\exp(\priv_2)\evalexpr{c_2}{\first{\Gamma} m'}(\first{\Gamma}_2 m_2')$. By induction hypothesis on $c_1$, we have two more cases:
\begin{enumerate}
    \item $\evalexpr{c_1'}{m_1'}(m')\leq\exp(\priv_1)\evalexpr{c_1}{\first{m_1}}(\first{m})$. In this case,
    \begin{align*}
\evalexpr{c_1';c_2'}{m_1'}(m_2') &= \sum_{m'} \evalexpr{c_1'}{m_1'}(m')\cdot \evalexpr{c_2'}{m'}(m_2') \\
&\leq \exp(\priv) \sum_{m'} \evalexpr{c_1}{\first{\Gamma}_1 m_1'}(\first{\Gamma} m')\cdot \evalexpr{c_2}{\first{\Gamma} m'}(\first{\Gamma}_2 m_2')\\
&\leq \exp(\priv) \sum_{m'} \evalexpr{c_1}{\first{\Gamma}_1 m_1'}(m')\cdot \evalexpr{c_2}{m'}(\first{\Gamma}_2 m_2')\\
&\leq \exp(\priv) \evalexpr{c_1;c_2}{\first{\Gamma}_1 m_1'}(\first{\Gamma}_2 m_2').
\end{align*}
The second line is by induction hypothesis and the fact that $\priv_1+\priv_2\leq \priv$. The change
of variable in the third line is due to Lemma~\ref{lem:injective}.

\item $\evalexpr{c_1'}{m_1'}(m')\leq\exp(\priv_1)\evalexpr{c_1}{\second{\Gamma}_1 m_1'}(\first{\Gamma} m')$. We have 
 \[\evalexpr{c_1';c_2'}{m_1'}(m_2') \leq \exp(\priv) \evalexpr{c_1;c_2}{\second{\Gamma}_1 m_1'}(\first{\Gamma}_2 m_2')\]
 by a similar argument as above.
\end{enumerate}
\end{enumerate}

\item %
Case $(\ifcmd{e}{c_1}{c_2})$: 
If $\Gamma_1 \proves e:\bool$, then by
Lemma~\ref{lem:bexpr} we have $\evalexpr{e}{m_1'}=\evalexpr{e}{\first{\Gamma}_1 m_1'} =
\evalexpr{e}{\second{\Gamma}_1 m_1'}$.  Hence, the same branch is taken in all related
executions. By rule~\ruleref{T-If}, the transformed program is
\[    \cod{if}~{e}~\cod{then}~{(\cod{assert}~(\alignexec{e});c_1';c_1'')}%
    ~\cod{else}~{(\cod{assert}~(\neg \alignexec{e});c_2'; c_2''})\]
and 
$\Gamma_1, \Gamma_1\join\Gamma_2,\bot \Rrightarrow c_1'',\quad \Gamma_2,
\Gamma_1\join\Gamma_2,\bot \Rrightarrow c_2''$.  Without loss of generality,
suppose that $c_1$ is executed in all related executions. By
Lemma~\ref{lem:triplearrow}, there is a unique $m'$ such that
$\evalexpr{c_1'}{m_1'}(m')\not=0$. By induction hypothesis, we have
\[
\evalexpr{c_1'}{\store_1'}(\store') \leq
\evalexpr{c_1}{\second{\Gamma}\store_1'}(\second{\Gamma_1}\store') \]

Moreover, by Lemma~\ref{lem:triplearrow}, $\second{\Gamma_1}\store'=\second{\Gamma_1}\store_2'$ and
\[ \evalexpr{c_1';c_1''}{m_1'}(m_2')=
\evalexpr{c_1'}{m_1'}(m').\]

Hence, we have \[\evalexpr{c_1';c_1''}{m_1'} m_2' \leq
\evalexpr{c_1}{\second{\Gamma}\store_1'}(\second{\Gamma_1}\store')  =
\evalexpr{c_1}{\second{\Gamma}\store_1'}(\second{\Gamma_1}\store_2') \]

\eqref{inv:aligned} or \eqref{inv:aligned2} can be proved in a
similar way.

When $\Gamma_1\not\proves e:\bool$, the aligned execution will still take the
same branch due to the inserted assertions. Hence, the argument above still
holds for \eqref{inv:aligned} or \eqref{inv:aligned2}.

For the shadow execution, we need to prove \eqref{inv:shadow}. By
rule~\ruleref{T-If}, $\pc' = \top$, and the transformed program is
\begin{align*}
    &\ifcmd{e}{(\cod{assert}~(\alignexec{e});c_1';c_1'')}{(\cod{assert}~(\neg \alignexec{e});c_2'; c_2''});\\
    &\shadowexec{\ifcmd{e}{c_1}{c_2},\Gamma_1\join\Gamma_2}
\end{align*}
By Lemma~\ref{lem:highpc} and the definition of $\Rrightarrow$ under high pc, we
have that $\forall m_1',m_2'.~\evalexpr{c'}{m_1'}(m_2')\neq 0 \Rightarrow
\second \Gamma_2 m_1'=\second \Gamma_1 m_1'$, and
$\Gamma_1\join\Gamma_2(x)=*~\forall x\in \asgnvars(c_1;c_2)$. Furthermore, the
program is deterministic since it type-checks under $\top$. Therefore,
$\max(\restrict{W''}{m_1}{m_2})=0$ and \eqref{inv:shadow} holds by
Lemma~\ref{lem:nonsampling}.

\item %
Case $(\whilecmd{e}{c})$: 
Let $W=\whilecmd{e}{c}$. If $\Gamma_1\proves
e:\bool$, then $\evalexpr{e}{m_1'}=\evalexpr{e}{\first\Gamma_2 m_1'}=\evalexpr{e}{\second{\Gamma_2} m_1'}$. By rule \ruleref{T-While} we have $\pc'=\bot$, $\Gamma_2=\Gamma\join \Gamma_f$ and the transformed program is
\[ W'=c_s; ~\whilecmd{e}{(\cod{assert}~(\alignexec{e});c';c'')}\] where $\bot\proves
\Gamma_1\join\Gamma_f~\{c \transform c'\}~\Gamma_f$, $(\Gamma_1,
\Gamma_1\join\Gamma_f, \bot)\Rrightarrow c_s$ and $(\Gamma_f,
\Gamma_1\join\Gamma_f, \bot)\Rrightarrow c''$.  We proceed by natural induction
on the number of loop iterations (denoted by $i$).

When $i=0$, we have $\evalexpr{e}{m_1'}=\false$. By the semantics, we have $\evalexpr{W'}{m_1'}=\unitop~(\evalexpr{c_s}{m_1'})$,
$\evalexpr{W}{\first\Gamma_1 m_1'}=\unitop~(\first\Gamma_1 m_1')$, and
$\evalexpr{W}{\second\Gamma_1 m_1'}=\unitop~(\second\Gamma_1 m_1')$.
Furthermore, $\max (\restrict{W''}{m_1}{m_2})=0$. By
Lemma~\ref{lem:triplearrow}, $(\first\Gamma_1 m_1')=(\first\Gamma_2
(\evalexpr{c_s}{m_1'}))$ and $(\second\Gamma_1 m_1')=(\second\Gamma_2
(\evalexpr{c_s}{m_1'}))$. So desired result holds.

When $i=j+1>0$, we have $\evalexpr{e}{m_1'}=\true$. By the semantics, we have
$\evalexpr{W'}{m_1'}=\evalexpr{c_s;\underbrace{c';c'';}_{j}c';c''}{m_1'}$ and
$\evalexpr{W}{\Gamma_1^\star
m_1'}=\evalexpr{c_s;\underbrace{c';c'';}_{j}c';c''}{\Gamma_1^\star
m_1'},\star\in \{\alignd,\shadow\}$. For any $m'$ such that
$\evalexpr{c_s;\underbrace{c';c'';}_{j}}{m_1'}(m')=k_1\not=0$ and
$\evalexpr{c';c''}{m'}(m_2')=k_2\not=0$, we know that
$\evalexpr{\underbrace{c}_{j}}{\Gamma_1^\star m_1'}(\Gamma_2^\star m')=k_1$.
Moreover, there is a unique $m_3'$ such that $\evalexpr{c'}{m'}(m_3')=k_2\land
\evalexpr{c''}{m_3'}(m_2')=1$ by Lemma~\ref{lem:triplearrow}. By structural
induction on $c$, we have $k_1\leq \evalexpr{c}{\second\Gamma_2 m'}(\second\Gamma_f
m_3')$ and either $k_1 \leq \exp(\max(\restrict{c'}{m_1'}{m_2'}))
\evalexpr{c}{\first\Gamma_2 m'}(\first\Gamma_f m_3')$ or $k_1 \leq
\exp(\max(\restrict{c''}{m_1'}{m_2'})) \evalexpr{c}{\second\Gamma_2
m'}(\first\Gamma_f m_3')$. By Lemma~\ref{lem:triplearrow}, $\Gamma_f^\star
m_3'=\Gamma_2^\star m_2'$. Hence, we can replace $\Gamma_f$ with $\Gamma_2$ in
the inequalities. Finally, the desired result holds following the same argument
as in the composition case.

Otherwise, by rule~\ruleref{T-While} we have $\pc' = \top$, $\Gamma_2=\Gamma
\join \Gamma_f$ and the transformed program is $W'=c_s;W''$ where \[ W''=
\whilecmd{e}{(\cod{assert}~(\alignexec{e});c';c'')};~\shadowexec{\whilecmd{e}{c},\Gamma_2}\]
and $\top\proves \Gamma_1\join\Gamma_f~\{c \transform c'\}~\Gamma_f$,
$(\Gamma_1, \Gamma_1\join\Gamma_f, \top)\Rrightarrow c_s$ and
$(\Gamma_f,\Gamma\join\Gamma_f,\top)\Rrightarrow c''$.  Since we do not have
sampling instructions in this case, the program is deterministic, thus
$\max(\restrict{W''}{m_1}{m_2})=0$. Note that by rule~\ruleref{T-Asgn},
$\forall x\in \asgnvars(c)$, $\second\Gamma_f(x)=*$. Hence,
$\second\Gamma_2(x)=*$ for $x\in \asgnvars(c)$. Morevoer, let $m'$ be the
unique memory such that $\evalexpr{c_s}{m_1'}(m')\not=0$, we have $\second\Gamma_1
m_1'=\second\Gamma_2 m'$ by Lemma~\ref{lem:triplearrow}.  Therefore, by
Lemmas~\ref{lem:highpc} and~\ref{lem:nonsampling}, we have \[\evalexpr{W'}{\store_1'}(\store_2') =
\evalexpr{W''}{\store'}(\store_2')=
\evalexpr{W}{\second{\Gamma_2}\store'}(\second{\Gamma_2}\store_2')=
\evalexpr{W}{\second{\Gamma_1}\store_1'}(\second{\Gamma_2}\store_2')\]

\item %
Case $(\eta := \lapm~\real;\select,\nexpr_\eta )$: let $\mu_r$ be the probability  density (resp. mass) function of the  Laplace (resp. discrete Laplace) distribution with zero mean and scale $r$. Since $\mu_r(v)\propto \exp(-|v|/r)$ we have
\begin{equation}\label{eq:lapr}
\forall v,d\in \mathbb{R},~\mu_r(v) \leq \exp(|d|/ r) \mu_r (v+d).
\end{equation}
When $\evalexpr{c'}{m_1'}(m_2') \neq 0$, we have $m_2'=\subst{m_1'}{\eta
}{\cod{v}}$ for some constant $\cod{v}$, and $\evalexpr{c'}{m_1'}(m_2') =
\mu_r(\cod{v}).$

We first consider the case $\select = \circ$. By \ruleref{T-Laplace}, we have ${\Gamma_2}(x) = {\Gamma_1}(x)$ for $x\neq \eta$, and ${\Gamma_2}(\eta) = \pair{\nexpr_\eta}{0}$. Therefore, for $x\neq \eta$, we have
\begin{align*}
\second\Gamma_2 {m_2'} (x) &= m_2'(x) + \evalexpr{\second\Gamma_2 (x)}{m_2'}\\
&=m_1'(x) + \evalexpr{\second{\Gamma_1}(x)}{m_2'} \\
&=m_1'(x) + \evalexpr{\second{\Gamma_1}(x)}{m_1'} \\
&= \second\Gamma_1 {m_1'} (x).
\end{align*}

In the second equation, we have $m_2'(x) = m_1'(x)$ because $m_2=\subst{m_1}{\eta
}{\cod{v}}$ and $x\neq \eta$. The third equation is due to the well-formedness
assumption.  Also, $\second\Gamma_2 {m_2'} (\eta)= m_2' (\eta) = \cod{v}$ since
$\second{\Gamma_2}(\eta) = 0$. Therefore, we have $\second\Gamma_2 {m_2'} =
\subst{\second\Gamma_1 {m_1'}}{\eta}{\cod{v}}$ and thus
$\evalexpr{c}{\second\Gamma_1 {m_1'}}(\second\Gamma_2 {m_2'}) =
\mu_r(\cod{v})=\evalexpr{c'}{m_1'}(m_2')$.

Similarly, we can show that $\first\Gamma_2 {m_2'} = \subst{\first\Gamma_1
{m_1'}}{\eta}{\cod{v}+\cod{d}}$ and therefore $\evalexpr{c}{\first\Gamma_1
{m_1'}}(\first\Gamma_2
{m_2'})=\mu_r(\cod{v}+\cod{d})$. Furthermore, since
$\vpriv{\priv} :=\vpriv{\priv}+|\nexpr_\eta|/r$, the set
$(\restrict{c''}{m_1}{m_2})$ contains a single element
$\evalexpr{|\nexpr_\eta|/ r}{m_2} =|\cod{d}|/ r$, and thus
$\max(\restrict{c''}{m_1}{m_2}) =|\cod{d}|/ r$. Therefore, we have
\begin{align*}
\evalexpr{c'}{m_1'}(m_2') &= \mu_r (\cod{v})\\
&\leq\exp(|\cod{d}|/ r) \mu_r(\cod{v}+\cod{d})\\
&=\exp(\max(\restrict{c''}{m_1}{m_2}))\evalexpr{c}{\first\Gamma_1 {m_1}}(\first\Gamma_2 {m_2}).
\end{align*}
The second inequality is due to \eqref{eq:lapr}. Thus \eqref{inv:shadow} and \eqref{inv:aligned} hold.

Next consider the case $\select = \dagger$. By the typing rule, we have
$\first{\Gamma_2}(x) = \second{\Gamma_2}(x) = \second{\Gamma_1}(x)$ for $x\neq
\eta$, and ${\Gamma_2}(\eta) = \pair{\nexpr_\eta}{0}$. By a similar argument we
have $\second\Gamma_2 {m_2'} = \subst{\second\Gamma_1 {m_1'}}{\eta}{\cod{v}}$
and $\first\Gamma_2 {m_2'} = \subst{\second\Gamma_1
{m_1'}}{\eta}{\cod{v}+\cod{d}}$. Furthermore, since $\vpriv{\priv}
:=0+|\nexpr_\eta|/r$, we have $\max(\restrict{c''}{m_1'}{m_2'}) =|\cod{d}|/ r$.
Therefore, we have 
\begin{align*}
\evalexpr{c'}{m_1'}(m_2') &= \mu_r (\cod{v})
=\evalexpr{c}{\second\Gamma_1 {m_1'}}(\second\Gamma_2 {m_2'}) \\
&\leq\exp(|\cod{d}|/ r) \mu_r(\cod{v}+\cod{d})\\
&=\exp(\max(\restrict{c''}{m_1'}{m_2'}))\evalexpr{c}{\second\Gamma_1 {m_1'}}(\first\Gamma_2 {m_2'}).
\end{align*}
Thus \eqref{inv:shadow} and \eqref{inv:aligned2} hold.

Finally, consider $\select = e\ ?\ \select_1:\select_2$. For any $m_1'$, we have $\select = \select_1$ if $\evalexpr{e}{m_1'} = \true$ and $\select = \select_2$ if $\evalexpr{e}{m_1'} = \false$. By induction, $\select$ evaluates to either $\circ$ or $\dagger$ under $m_1'$. Thus the proof follows from the the two base cases above.
\end{itemize}
\end{proof}

\section{Formal Semantics for the Target Language}

The denotational semantics interprets a command $c$ in the target language
as a function $\trans{c}:
\Store\rightarrow \mathcal{P}(\Store)$.  The semantics of commands are formalized as
follows. 
\begin{align*}
\trans{\skipcmd}_\store &= \{m\} \\
\trans{x:=e}_\store &= \{\subst{m}{x}{\evalexpr{e}{m}}\} \\
\trans{\havoc{x}}_\store &= \cup_{r\in \mathbb{R}}~\{\subst{m}{x}{r}\} \\
\trans{c_1;c_2}_\store &= \cup_{m'\in \evalexpr{c_1}{\store}} \evalexpr{c_2}{m'}  \\
\trans{\ifcmd{e}{c_1}{c_2}}_\store &= 
      \begin{cases} 
      \trans{c_1}_\store &\mbox{if } \evalexpr{e}{\store} = \true \\
      \trans{c_2}_\store &\mbox{if } \evalexpr{e}{\store} = \false \\
      \end{cases} \\
\trans{\whilecmd{e}{c}}_\store &= w^*~m\\
\text{where } w^*              &= fix (\lambda f.~\lambda m. \cod{if}~\evalexpr{e}{m}=\true\\
                               &\qquad \cod{then}~(\cup_{m'\in \evalexpr{c}{m}}~f~m')~\cod{else}~{\{m\}}) \\
\trans{c;\outcmd{e}}_\store &= \cup_{m'\in \evalexpr{c}{\store}}~\{\evalexpr{e}{m'}\}
\end{align*}

\fi
\end{document}